\documentclass[reprint,onecolumn,aps,amsmath,amssymb,pra,superscriptaddress,floatfix,tightenlines,11pt]{revtex4-2}

\usepackage{amsmath, amssymb, amsthm}
\usepackage{mathrsfs}
\usepackage{mathtools}
\usepackage{braket}
\usepackage{bbm}
\usepackage{bm}
\usepackage{color}
\usepackage[dvipsnames]{xcolor}
\usepackage[most]{tcolorbox}
\usepackage{comment}
\usepackage{hyperref}
\usepackage{url}
\usepackage{cleveref}
\usepackage[T1]{fontenc}
\usepackage{times,txfonts}

\usepackage{tikz}
\usetikzlibrary{quantikz2}

\hypersetup{
    bookmarksnumbered=true,
    unicode=false,
    pdfstartview={FitH},
    pdftitle={},
    pdfauthor={},
    pdfsubject={},
    pdfcreator={},
    pdfproducer={},
    pdfkeywords={},
    pdfnewwindow=true,
    colorlinks=true,
    linkcolor=NavyBlue,
    citecolor=NavyBlue,
    filecolor=NavyBlue,
    urlcolor=NavyBlue
}

\DeclareMathOperator{\tr}{Tr}
\DeclareMathOperator{\Tr}{Tr}

\DeclareMathOperator{\ran}{range}
\DeclareMathOperator{\Span}{span}
\DeclareMathOperator{\supp}{supp}

\DeclareMathOperator{\id}{\mathrm{id}}

\DeclareMathOperator{\centropy}{H}
\DeclareMathOperator{\qentropy}{S}
\DeclareMathOperator{\fidelity}{F}
\DeclareMathOperator{\Lin}{L}

\DeclareMathOperator{\Channel}{C}
\DeclareMathOperator{\Unitary}{U}
\DeclareMathOperator{\Density}{D}

\newcommand{\opnorm}[1]{\left\|#1\right\|_\infty}

\newcommand{\vk}[1]{\lvert #1\rangle\!\rangle}
\newcommand{\vb}[1]{\langle\!\langle #1\rvert}
\newcommand{\vkb}[2]{\lvert #1 \rangle\!\rangle \langle \!\langle #2 \rvert}
\newcommand{\vbk}[1]{\langle\!\langle #1 \rangle\!\rangle}

\newcommand{\cE}{\mathcal E}
\newcommand{\cD}{\mathcal D}
\newcommand{\cH}{\mathcal H}
\newcommand{\cT}{\mathcal T}

\newcommand{\Trans}{\mathsf{T}}
\newcommand{\Prob}{\mathcal{P}}

\newcommand{\ketbra}[2]{|#1\rangle\langle#2|}

\allowdisplaybreaks[4]

\theoremstyle{plain}
\newtheorem{theorem}{Theorem}
\newtheorem{lemma}[theorem]{Lemma}
\newtheorem{corollary}[theorem]{Corollary}
\newtheorem{proposition}[theorem]{Proposition}
\theoremstyle{definition}
\newtheorem{definition}[theorem]{Definition}

\theoremstyle{remark}
\newtheorem{conjecture}{Conjecture}
\newtheorem{remark}[conjecture]{Remark}

\begin{document}
\author{Kohdai Kuroiwa}
\affiliation{Institute for Quantum Computing and Department of Combinatorics and Optimization, University of Waterloo, Ontario, Canada, N2L 3G1}
\affiliation{Perimeter Institute for Theoretical Physics, Ontario, Canada, N2L 2Y5}

\title{Exponential strong converse for blind quantum data compression}

\begin{abstract}
We establish exponential strong converse theorems for blind quantum data compression of finite-dimensional mixed-state sources, with and without entanglement assistance.
Blind compression is a fundamental quantum information processing task in which the sender aims to transmit quantum states without direct access to the classical label.
Strong converse is a notable aspect of data compression: the accuracy of the compression abruptly drops down to zero once the compression rate goes below a threshold. 
Strong converses for data compression have been established for pure-state sources; however, for mixed-state sources, 
the optimal compression rate is determined by a structural decomposition of quantum states and can change substantially under small perturbations.
This sensitivity makes it difficult to quantitatively analyze
the trade-off between compression rate and error.
Despite recent progress, whether a strong converse holds at the optimal rate has remained unresolved in general.
In this work, we resolve this question affirmatively.
For unassisted compression, we show that at every rate below the optimum, the accuracy decays exponentially with the block length for general mixed-state sources without restrictions on protocols.
We also prove exponential strong converse for blind compression with unlimited entanglement assistance at its optimal rate. 
Combining the two strong converse bounds yields exponential decay at every compression-entanglement rate pair outside the complete achievable rate region. 
Our proof introduces a new overlap quantity based on the structure of quantum states; this quantity connects preservation of the source information to the dimension of the transmitted system.
We expect that this approach provides useful tools for converse analyses of other quantum information-processing tasks involving mixed states.
\end{abstract}

\maketitle

\section{Introduction}~\label{sec:introduction}
\emph{Data compression} is a fundamental task in information theory \cite{Shannon1948}.
In this task, a \emph{sender} (Alice) aims to send data to a \emph{receiver} (Bob) accurately and efficiently by appropriately encoding and decoding the data.
Data is drawn from a known identical probability distribution independently an asymptotically large number of times.
The goal is to transmit the data as efficiently as possible with asymptotically vanishing error.
The minimum achievable number of bits per draw is called the optimal compression rate.

\emph{Quantum data compression} generalizes classical data compression to the quantum setting.
In quantum data compression, a \emph{referee} draws a label from a known probability distribution, records the label on a classical system, and repeats this procedure independently with the same distribution an asymptotically many times.
The referee then prepares quantum states according to the labels.
The resulting joint classical-quantum state is called a \emph{quantum source}.
The referee gives the quantum source systems to Alice.
Alice then encodes the quantum states into a number of qubits and transmits them to Bob, who decodes the received qubits to reconstruct the source states.

In this work, we study \emph{blind compression}, in which Alice receives the quantum source system but has no direct access to the classical system containing its preparation label.
We consider both \emph{unassisted compression}, in which Alice and Bob share no additional resources, and \emph{entanglement-assisted compression}, 
in which they may use preshared entanglement.
In \emph{visible compression}~\cite{Horodecki2000,Hayashi2006,Bennett2005,Bennett2014a},
which is not the focus of this work, the referee also provides Alice with the classical preparation label.

Early studies of blind quantum data compression focused on
the unassisted compression of \emph{pure-state ensemble sources},
in which every state in the source ensemble is
pure~\cite{Schumacher1995,Richard1994,Barnum1996}. 
For general mixed-state ensemble sources, Refs.~\cite{Horodecki1998,Barnum2001} established a lower bound on the unassisted compression rate, which coincides with the optimal rate for pure-state sources.
Studies of commuting mixed states also highlighted differences between visible and blind compression~\cite{Dur2001,Kramer2001}.
Koashi and Imoto~\cite{Koashi2001a} subsequently determined the optimal rate of unassisted blind compression using a structural decomposition of quantum states known as the \emph{Koashi--Imoto (KI) decomposition}, or \emph{KI structure}~\cite{Koashi2002}.
Entanglement-assisted compression was later studied by
Baghali Khanian and Winter for pure-state
ensembles~\cite{Khanian2019}.
They subsequently characterized the optimal communication-entanglement rate region for general mixed-state sources with a quantum reference, including blind compression of mixed-state ensembles as a special case~\cite{Khanian2020b}.

The trade-off between compression rate and error has attracted considerable interest. 
Its behavior depends on the error criterion used.
Two error criteria have received particular attention: \emph{global error} (\emph{block error}) and \emph{local error} (\emph{per-copy error}).
Global error quantifies the discrepancy over the entire block, whereas local error quantifies discrepancies in individual copies.
The global-error criterion imposes a stricter accuracy requirement than the local-error criterion and is often used to establish optimal compression rates.
On the other hand, allowing even a small local error can substantially reduce the required compression rate.
For example, it was pointed out that the rate might be sensitive even to a small error in some cases~\cite{Anshu2022}, and a substantial rate reduction is possible by introducing a small approximation~\cite{Kuroiwa2022}.
More generally, the trade-off between compression rate and local error is studied in \emph{quantum rate-distortion theory}~\cite{Barnum2000, Devetak2002, Datta2013b, Datta2013c, Khanian_Kuroiwa2022}.

\emph{Strong converse} gives a sharp asymptotic limitation on compression under global error: at every rate below the threshold, often referred to as the \emph{strong converse rate}, the accuracy of compression approaches zero as the block length increases.
The early strong converse proofs for pure-state sources assumed unitary decoding~\cite{Schumacher1995,Richard1994}.
Ref.~\cite{Barnum1996} removed this restriction, and Winter~\cite{Winter1999} subsequently gave another proof.
For mixed-state sources, the sensitivity of the KI structure to approximation makes the problem much more delicate.
Recently, Baghali Khanian~\cite{Khanian2022} established a pretty strong converse for a class of sources called partially exchangeable states and presented a strong-converse bound for unassisted blind compression under additional decoder restrictions.
Despite this recent progress, these results leave open whether the optimal rates obtained by Koashi and Imoto~\cite{Koashi2001a} and by Baghali Khanian and Winter~\cite{Khanian2020b} serve as the strong converse rates
for general mixed-state sources with unrestricted encoding and decoding.

In this paper, we answer this question by establishing an exponential strong converse for blind compression.
First, for the unassisted setting, we prove a strong converse at the optimal Koashi--Imoto rate.
At every rate strictly below this threshold, the compression accuracy decays exponentially with the block length. 
Our proof introduces a new measure, the \emph{KI overlap}, which quantifies preservation of the essential information identified by the KI decomposition.
We relate the KI overlap to the compression accuracy and bound it in terms of the dimension of the compressed system.
The argument has two parts: below the optimal rate, the overlap associated with preserving this information on nearly the entire block is exponentially small; the remaining contributions to the compression accuracy, which correspond to information loss on many copies, are also exponentially suppressed.
Together, these estimates establish the unassisted exponential strong converse without restrictions on the encoder or decoder.
Thus, we connect the KI structure to quantitative bounds on approximate source preservation.
In addition, we establish an exponential strong converse for unlimited entanglement assistance at the optimal quantum rate by deriving an alternative upper bound on the KI overlap. 
Furthermore, combining this entanglement-assisted result with the unassisted strong converse gives exponential decay outside the complete achievable qubit-ebit region identified in Ref.~\cite{Khanian2020b}.
Note that these strong converse results also hold for the more general mixed-state sources introduced in Ref.~\cite{Khanian2020b}.

These results sharpen the fundamental limits of blind quantum data compression under the global-error criterion: allowing any fixed positive global error tolerance below one does not enlarge the achievable rate region.
Our results also clarify the distinction between global and local errors, since rate reductions permitted by local approximation do not imply corresponding reductions under global error.

The rest of this paper is organized as follows.
First, in Sec.~\ref{sec:preliminaries}, we introduce our notation, review the KI decomposition, and formulate the compression model and review its rate region. 
Next, in Sec.~\ref{sec:strong_converse_tools}, we establish the source reductions and introduce the perturbed source and KI overlap used in strong converse proofs.
Then, in Sec.~\ref{sec:main_results}, we prove the unassisted exponential strong converse. 
In Sec.~\ref{sec:assisted}, we establish the strong converse with unlimited entanglement assistance and combine the two results to obtain the complete strong-converse rate region.
Finally, in Sec.~\ref{sec:conclusion}, we summarize the results and discuss open questions.

\section{Preliminaries}
\label{sec:preliminaries}
In this section, we review background material for this paper. 
In Sec.~\ref{subsec:notation}, we introduce our notation. 
In Sec.~\ref{subsec:structure_quantum_ensemble}, we define quantum ensembles and review the Koashi--Imoto (KI) decomposition.
In Sec.~\ref{subsec:blind_compression}, we give an overview of the formal setting of blind quantum data compression and discuss its rate region.

\subsection{Notation}
\label{subsec:notation}
All alphabets in this paper are finite, and all Hilbert spaces are finite-dimensional.
We let $\Prob(\Sigma)$ denote the set of probability distributions on an alphabet $\Sigma$.
We use capital letters $A,B,C,\ldots$ for quantum systems, and the complex Hilbert space of a system $A$ is denoted by $\cH_A$ with dimension $\dim A$.
A vector $\ket{\psi} \in \cH_A$ is called a \textit{pure state} or simply a \textit{state} if it has unit norm: $\|\ket{\psi}\| = 1$.

We let $\Lin(\cH_A)$ denote the set of linear operators on $A$.
For example, $I_{A} \in \Lin(\cH_A)$ represents the identity operator on a system $A$.
We denote the trace of an operator $X$ by $\Tr X$.
The partial trace over a subsystem $A$ is denoted by $\Tr_A$.

A linear operator $H$ on a system $A$ is said to be \textit{Hermitian} if $H^\dagger = H$, where $\dagger$ represents the adjoint. 
A Hermitian operator $P$ on a system $A$ is \textit{positive semidefinite} if $\braket{\psi|P|\psi} \geq 0$ for all vectors $\ket{\psi} \in \cH_A$, and we write $P\geq 0$; it is positive definite if $\braket{\psi|P|\psi} > 0$ for all nonzero vectors $\ket{\psi} \in \cH_A$, and we write $P>0$.
A positive semidefinite operator is called a \textit{density operator} if it has unit trace, and
$\Density(\cH_A)$ denotes the set of density operators on $A$.

When a density operator $\rho \in \Density(\cH_A)$ has rank $1$, it can be written as $\rho = \ketbra{\psi}{\psi}$ using some state $\ket{\psi} \in \cH_A$.
Such a density operator will also be called a pure state.
The spectral decomposition expresses any density operator as a convex combination of pure states, so it can be interpreted as a probabilistic mixture of pure states.
Thus, a density operator also represents a quantum state.
If a density operator $\rho$ is not pure, it is called a \textit{mixed state}.

Transposes and vectorization are taken in a fixed orthonormal basis:
\begin{equation}
X=\sum_{a,b}X_{ab}|a\rangle\langle b|,
\qquad
X^{\mathsf T}=\sum_{a,b}X_{ab}|b\rangle\langle a|,
\qquad
|X\rangle\!\rangle=\sum_{a,b}X_{ab}|a\rangle\otimes|b\rangle.
\end{equation}
By definition, for vectors $\ket{u}$ and $\ket{v}$,
\begin{equation}
    \vk{\ketbra{u}v} = \ket{u} \otimes \overline{\ket{v}},
\end{equation}
where $\overline{\ket{\cdot}}$ denotes complex conjugation with respect to the fixed basis.
For linear operators $A$, $B$, and $V$ of compatible dimensions, vectorization satisfies
\begin{equation}
\label{eq:vec_property}
(A\otimes B^{\mathsf T})|V\rangle\!\rangle
=|AVB\rangle\!\rangle
\end{equation}
and for linear operators $V$ and $W$,
\begin{equation}
\label{eq:vec_trace}
\langle\!\langle V|W\rangle\!\rangle
=\Tr(V^\dagger W).
\end{equation}
Consequently,
\begin{equation}
    \label{eq:vec_braket}
    \langle\!\langle V|A\otimes B^{\mathsf T}|V\rangle\!\rangle=\Tr(V^\dagger AVB).
\end{equation}
See also Sec.~1.1.2 of Ref.~\cite{Watrous2018}.

An isometry $U$ from $\cH_A$ to $\cH_B$ satisfies
\begin{equation}
    U^\dagger U=I_A.
\end{equation}
We denote the set of such isometries by $\Unitary(\cH_A,\cH_B)$.
When $\cH_A \simeq \cH_B$, $U$ is said to be unitary.

Quantum channels are completely positive trace-preserving linear maps.
We denote the set of quantum channels that map $\Lin(\cH_A)$ to $\Lin(\cH_B)$ by
$\Channel(\cH_A,\cH_B)$,
and write $\id_A$ for the identity channel on $A$.
In this work, we write $\mathcal{N}:A \to B$ for $\mathcal{N} \in \Channel(\cH_A,\cH_B)$.
The adjoint $\mathcal N^\dagger$ of a linear map $\mathcal N$ is defined by
\begin{equation}
    \Tr[Y^\dagger\mathcal N(X)] =\Tr[(\mathcal N^\dagger(Y))^\dagger X].
\end{equation}
In particular, the adjoint of a trace-preserving map $\mathcal{N} \in \Channel(\cH_A, \cH_B)$ is unital: $\mathcal N^\dagger (I_B) = I_A$.

For a linear map $\mathcal N$ from $\Lin(\cH_A)$ to $\Lin(\cH_B)$, 
\begin{equation}
 J_{\mathcal N}\coloneqq (\mathcal N\otimes\id)(\vk{I}\vb{I}) \in \Lin(\cH_B\otimes \cH_A)
\end{equation}
denotes its unnormalized Choi operator. 
For $X \in \Lin(\cH_A)$ and $Y \in \Lin(\cH_B)$, it holds that 
\begin{equation}
    \label{eq:choi_relation}
    \Tr[J_{\mathcal N}(Y\otimes X^{\Trans})]
    =\Tr[Y\mathcal N(X)].
\end{equation}
When $\mathcal N$ is a channel, $J_{\mathcal N} \geq 0$ and $\tr_B(J_{\mathcal N}) = I_A$. 

The trace, Hilbert--Schmidt, and operator norms are, respectively,
\begin{equation}
    \|X\|_1=\Tr\sqrt{X^\dagger X},
    \qquad
    \|X\|_2=\sqrt{\Tr(X^\dagger X)},
    \qquad
    \|X\|_\infty=s_{\max}(X),
\end{equation}
where $s_{\max}$ denotes the largest singular value.
For every quantum channel $\mathcal N \in \Channel(\cH_A,\cH_B)$
and every linear operator $X \in \Lin(\cH_A)$,
\begin{equation}
    \|\mathcal N(X)\|_1\leq\|X\|_1.
\end{equation}
For positive semidefinite operators $P$ and $Q$,
the (squared) fidelity is
\begin{equation}
    F(P,Q)=\|\sqrt P\sqrt Q\|_1^2.
\end{equation}
For $P,Q \geq 0$ and $Z > 0$,
H\"older's inequality~\cite[Eq.~(1.174)]{Watrous2018} gives
\begin{equation}
    \label{eq:fidelity_trace}
    \fidelity(P,Q)\leq
\|\sqrt P Z^{-1/2}\|_2^2\|Z^{1/2}\sqrt Q\|_2^2
=\Tr(PZ^{-1})\Tr(QZ).
\end{equation}

For a probability vector $p$ on $\Sigma$, the Shannon entropy $\centropy(p)$ is defined by
\begin{equation}
    \centropy(p) \coloneqq \sum_{x \in \Sigma} -p_x \log_2p_x.
\end{equation}
The binary entropy function $h_2:[0,1] \to [0,1]$ is defined by
\begin{equation}
    \label{eq:binary_entropy}
    h_2(\theta) \coloneqq \centropy((\theta,1-\theta)) = -\theta\log_2\theta-(1-\theta)\log_2(1-\theta).
\end{equation}
For a quantum state $\rho$, the quantum entropy (von Neumann entropy) is defined by
\begin{equation}
    \qentropy(\rho) \coloneqq -\tr(\rho\log_2\rho).
\end{equation}

\subsection{Structure of quantum ensembles}\label{subsec:structure_quantum_ensemble}
We first review the definition of quantum ensembles.
\begin{definition}[Quantum ensembles]
Let $A$ be a quantum system with Hilbert space $\cH_A$, and let $\Sigma$ be an alphabet.
A quantum ensemble $\Phi$ on $A$ is a family of pairs, each consisting of a positive probability and a quantum state,
\begin{equation}
    \Phi \coloneqq \{(p_x,\rho_x) \in (0,1]\times\Density(\cH_A): x\in\Sigma\},
\end{equation}
where $p \in \Prob(\Sigma)$.
We often write
\begin{equation}
    \Phi = \{p_x,\rho_x\}_{x\in\Sigma}.
\end{equation}
The average state $\rho_{\Phi}$ of a quantum ensemble $\Phi = \{p_x,\rho_x\}_{x\in\Sigma}$ is defined as
\begin{equation}
    \rho_{\Phi} \coloneqq \sum_{x\in\Sigma} p_x\rho_x.
\end{equation}
\end{definition}

Throughout this paper, we discard labels of zero probability, in accordance with the definition above, and hence assume $p_x>0$ for every $x\in\Sigma$.

Koashi and Imoto identified a structural decomposition of quantum ensembles, known as the \emph{Koashi--Imoto (KI) decomposition} or the \emph{KI structure}~\cite{Koashi2002}.
Intuitively, when we have a quantum ensemble, we can decompose each state in the ensemble into the following three parts: a \emph{classical part}, a \emph{non-redundant quantum part}, and a \emph{redundant part}.
The following theorem provides a formal statement.

\begin{theorem}[KI decomposition~\cite{Koashi2002}]~\label{thm:KIdecomp}
Let $\Phi=\{p_x,\rho_x\}_{x\in\Sigma}$ be an ensemble on a system $A$.
Here, without loss of generality, we may assume $\cH_A=\supp\rho_\Phi$ as shown in Sec.~\ref{subsec:reduction}.
Then, there exists a decomposition of its Hilbert space
\begin{equation}
    \cH_A \simeq \bigoplus_{c \in \Xi} \cH_{Q_c} \otimes \cH_{R_c}
\end{equation}
and a corresponding isometry
\begin{equation}
\Gamma_{\Phi} \in \Unitary\left(\cH_A,\bigoplus_{c \in \Xi} \cH_{Q_c} \otimes \cH_{R_c}\right)
\end{equation}
satisfying the following conditions.
\begin{enumerate}
    \item
    For all $x\in\Sigma$,
    \begin{equation}\label{eq:KIdecomp}
        \Gamma_{\Phi} \rho_x \Gamma_{\Phi}^\dagger = \bigoplus_{c \in \Xi} q_{c|x}\rho^{(x,c)}_{Q_c}\otimes \rho^{(c)}_{R_c}.
    \end{equation}
    Here, for each $x \in \Sigma$,
    $\{q_{c|x}:c \in \Xi\}$ forms a probability distribution over labels $c\in\Xi$; 
    $\rho^{(x,c)}_{Q_c} \in \Density(\cH_{Q_c})$ is a density operator on $Q_c$ that may depend on both $x\in\Sigma$ and $c\in\Xi$, whereas $\rho^{(c)}_{R_c} \in \Density(\cH_{R_c})$ is a density operator on $R_c$ that is independent of $x\in\Sigma$.

    \item
    For each $c\in\Xi$,
    if a projection operator $P: \cH_{Q_c} \to \cH_{Q_c}$ satisfies
    \begin{equation}
        P q_{c|x}\rho^{(x,c)}_{Q_c} = q_{c|x}\rho^{(x,c)}_{Q_c} P
    \end{equation}
    for all $x \in \Sigma$, then $P = I_{\cH_{Q_c}}$ or $P = 0$.

    \item
    For all $c,c^\prime \in \Xi$ such that $c\neq c^\prime$,
    there exists no isometry $V \in \Unitary(\cH_{Q_c},\cH_{Q_{c^\prime}})$ such that
    \begin{equation}
        V q_{c|x}\rho^{(x,c)}_{Q_c} = \alpha q_{c^\prime|x}\rho^{(x,c^\prime)}_{Q_{c^\prime}} V
    \end{equation}
    with some positive real number $\alpha$ for all $x\in\Sigma$.
\end{enumerate}
\end{theorem}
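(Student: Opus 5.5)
We prove the KI decomposition through the structure theory of finite-dimensional $*$-algebras. The first step is to pass to operators that are faithful and trace-preserving. Since $\cH_A=\supp\rho_\Phi$, the average state $\rho_\Phi$ is positive definite. We therefore consider the unital completely positive maps $\mathcal E$ on $\Lin(\cH_A)$ that leave every $\rho_x$ invariant, that is, $\mathcal E(\rho_x)=\rho_x$ for all $x$; the completely depolarizing-free identity channel shows this set is nonempty. From these we take the maximal invariant object as follows. Let $\mathcal F$ be the set of channels $\mathcal N$ with $\mathcal N(\rho_x)=\rho_x$ for all $x$. This set is convex and compact, so we can pick $\mathcal N_\star$ in its relative interior; equivalently, we average over a spanning family. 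Following Koashi--Imoto, or equivalently Lindblad's fixed-point theorem, the fixed-point set of the adjoint $\mathcal N_\star^\dagger$,
\begin{equation}
    \mathcal M\coloneqq\{X\in\Lin(\cH_A): \mathcal N^\dagger(X)=X\ \text{for all }\mathcal N\in\mathcal F\},
\end{equation}
is a unital $*$-subalgebra. Proving this uses the faithful invariant state $\rho_\Phi$: a Kadison--Schwarz inequality $\mathcal N^\dagger(X^\dagger X)\ge \mathcal N^\dagger(X)^\dagger\mathcal N^\dagger(X)$, with equality forced by $\Tr[\rho_\Phi(\cdot)]$-invariance and faithfulness, shows closure under multiplication. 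I expect this multiplicative-domain argument to be the main technical obstacle.

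The second step applies the Wedderburn decomposition of the finite-dimensional $*$-algebra $\mathcal M$. This yields an isometry $\Gamma_\Phi$ with $\cH_A\simeq\bigoplus_c\cH_{Q_c}\otimes\cH_{R_c}$ such that
\begin{equation}
    \Gamma_\Phi\mathcal M\Gamma_\Phi^\dagger=\bigoplus_c \Lin(\cH_{Q_c})\otimes I_{R_c}.
\end{equation}
The conditional expectation onto $\mathcal M$ that preserves $\rho_\Phi$ belongs to $\mathcal F$; concretely, it is the Cesàro limit of $\mathcal N_\star$. Since every state in the image of its predual has the form $\bigoplus_c q_c\,\sigma_{Q_c}\otimes\omega^{(c)}_{R_c}$ with fixed $\omega^{(c)}$ determined by $\rho_\Phi$, and each $\rho_x$ is fixed by it, we obtain Eq.~\eqref{eq:KIdecomp}. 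The independence of $\rho^{(c)}_{R_c}$ from $x$ follows because the $R_c$-marginal is fixed by the expectation.

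The third step verifies maximality, which is what conditions 2 and 3 express. Suppose a nontrivial projection $P$ on $Q_c$ commuted with all $q_{c|x}\rho^{(x,c)}_{Q_c}$. Then the pinching channel $X\mapsto (P\otimes I)X(P\otimes I)+(P^\perp\otimes I)X(P^\perp\otimes I)$ on block $c$, extended by the identity elsewhere, would lie in $\mathcal F$. Its fixed points do not contain $\Lin(\cH_{Q_c})\otimes I$, which contradicts the definition of $\mathcal M$, since $\mathcal M$ must be fixed by every element of $\mathcal F$. Now suppose instead that an isometry $V$ intertwined the blocks $c$ and $c'$ with a scalar $\alpha$; this forces $\dim Q_c=\dim Q_{c'}$ by the unitarity argument on supports. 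Then the channel that coherently mixes the two blocks (a swap-type unitary channel conjugated by $V$, with the $R$ parts replaced appropriately and weights determined by $\alpha$) preserves every $\rho_x$. It would not fix the central projection onto block $c$, which again contradicts that $\mathcal M$ contains this projection. Each of these two checks amounts to writing down an explicit channel and verifying that $\rho_x$ is invariant, which is routine once the block form is in hand.
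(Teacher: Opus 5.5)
Your argument is sound, but it does not follow a proof in the paper, because the paper gives none. It imports the theorem from Koashi and Imoto. Their argument is constructive: it refines $\cH_A$ step by step until the maximality conditions hold. The characterization of state-preserving operations (their Theorem~3, which the paper invokes in Lemma~\ref{lem:redundancy_free_KI_preserving_channel}) is then proved on top of the resulting decomposition.

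You instead obtain the decomposition canonically. It is the Wedderburn form of the common fixed-point algebra $\mathcal M$ of $\{\mathcal N^\dagger:\mathcal N\in\mathcal F\}$. You read off condition~1 from the $\rho_\Phi$-preserving conditional expectation, and you get conditions~2 and~3 by exhibiting elements of $\mathcal F$ whose adjoints would move elements of $\mathcal M$.

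Your route gives uniqueness essentially for free. It also makes Lemma~\ref{lem:redundancy_free_KI_preserving_channel} a one-line consequence. Indeed, $\mathcal M$ lies in the multiplicative domain of every $\mathcal N^\dagger$ with $\mathcal N\in\mathcal F$, so $\sum_a[N_a,X]^\dagger[N_a,X]=0$ for $X\in\mathcal M$. Hence every Kraus operator lies in $\mathcal M'=\Gamma_\Phi^\dagger\bigl(\bigoplus_c I_{Q_c}\otimes\Lin(\cH_{R_c})\bigr)\Gamma_\Phi$, which is $\Span\{P_c\}_{c\in\Xi}$ when the $R_c$ are trivial. What your route gives up is constructiveness. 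Picking $\mathcal N_\star$ in the relative interior of $\mathcal F$ and taking a Ces\`aro limit gives no procedure for computing $\Xi$, $Q_c$, $R_c$; the refinement approach does.

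Two steps should be made explicit.

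\emph{Fixed points of $\mathcal N_\star^\dagger$.} You identify the fixed points of $\mathcal N_\star^\dagger$ with $\mathcal M$ without argument. This is what makes the Ces\`aro limit the expectation onto $\mathcal M$ rather than onto a larger algebra. It follows from your own Kadison--Schwarz trick. For $\mathcal N\in\mathcal F$, relative interiority gives $\mathcal N_\star=w\mathcal N+(1-w)\mathcal N'$ with $w\in(0,1)$ and $\mathcal N'\in\mathcal F$. If $\mathcal N_\star^\dagger(X)=X$, then $X^\dagger X$ is fixed as well. Setting $Y=\mathcal N^\dagger(X)$ and $Y'=\mathcal N'^\dagger(X)$, one gets $X^\dagger X\geq wY^\dagger Y+(1-w)Y'^\dagger Y'=X^\dagger X+w(1-w)(Y-Y')^\dagger(Y-Y')$, which forces $Y=Y'=X$.

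\emph{Condition~3.} The unitarity of $V$ does not come from supports. It comes from condition~2, already established, applied to block $c'$: $VV^\dagger$ is a nonzero projection commuting with every $q_{c'|x}\rho^{(x,c')}_{Q_{c'}}$. The channel you need is also an incoherent transfer, not a coherent swap. After pinching onto the blocks, with probability $1-\mu$ leave block $c$ untouched; with probability $\mu$ discard $R_c$, apply $V$, and prepare $\rho^{(c')}_{R_{c'}}$. Symmetrically, move block $c'$ to $c$ via $V^\dagger$ with probability $\alpha\mu$, preparing $\rho^{(c)}_{R_c}$, for small $\mu>0$. Because $VA_xV^\dagger=\alpha B_x$, this preserves every $\rho_x$. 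Its adjoint sends the central projection $Z_c$ onto $\cH_{Q_c}\otimes\cH_{R_c}$ to $(1-\mu)Z_c+\alpha\mu Z_{c'}\neq Z_c$, which gives the contradiction you want.
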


The first statement~\eqref{eq:KIdecomp} expresses every state in $\Phi$ in a common block-diagonal form. 
Each block is called a \textit{KI block}. 
The second and third statements ensure that the decomposition~\eqref{eq:KIdecomp} is maximal; that is, we cannot further refine the structure. Indeed, the second one states that we cannot further decompose an individual block of the KI decomposition; the third rules out equivalence between distinct blocks, up to an isometry and a positive rescaling of their weights.
Note that $\rho^{(c)}_{R_c} \in \Density(\cH_{R_c})$ does not depend on $x\in\Sigma$; thus it is called redundant because it does not contain information about the label $x \in \Sigma$.
Hereafter, when we specify the KI decomposition of a given ensemble $\Phi = \{p_x,\rho_x\}_{x\in\Sigma}$, we may omit $\Gamma_{\Phi}$ and write
\begin{equation}
    \rho_x = \bigoplus_{c \in \Xi} q_{c|x}\rho^{(x,c)}_{Q_c}\otimes \rho^{(c)}_{R_c}
\end{equation}
for brevity.

The KI decomposition defines two quantum channels that remove and reattach the ensemble's redundant parts, respectively.

\begin{proposition}[KI operations~\cite{Koashi2002,Kuroiwa2022}]~\label{prop:KIoperations}
Let $\Phi = \{p_x,\rho_x\}_{x\in\Sigma}$ be an ensemble on a Hilbert space $\cH$.
Consider the KI decomposition of $\Phi$:
\begin{equation}
    \cH \coloneqq \bigoplus_{c \in \Xi} \cH_{Q_c} \otimes \cH_{R_c}
\end{equation}
such that
\begin{equation}
    \rho_x = \bigoplus_{c \in \Xi} q_{c|x}\rho^{(x,c)}_{Q_c}\otimes \rho^{(c)}_{R_c} \quad \forall x \in \Sigma.
\end{equation}
Then, there exist quantum channels $\mathcal{K}_{\mathrm{off}, \Phi}$ and $\mathcal{K}_{\mathrm{on}, \Phi}$ such that
\begin{align}
    &\mathcal{K}_{\mathrm{off}, \Phi}(\rho_x) = \bigoplus_{c \in \Xi} q_{c|x}\rho^{(x,c)}_{Q_c}, \\
    &\mathcal{K}_{\mathrm{on}, \Phi}\left(\bigoplus_{c\in \Xi} q_{c|x}\rho^{(x,c)}_{Q_c}\right) = \rho_x
\end{align}
for all $x\in\Sigma$.
\end{proposition}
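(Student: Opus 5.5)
We construct both channels explicitly from the block structure in Theorem~\ref{thm:KIdecomp}, working throughout in the decomposed picture. Replacing $\mathcal{K}_{\mathrm{off},\Phi}$ by $\mathcal{K}_{\mathrm{off},\Phi}\circ(\Gamma_\Phi\cdot\Gamma_\Phi^\dagger)$, and composing $\mathcal{K}_{\mathrm{on},\Phi}$ with the channel $\Gamma_\Phi^\dagger(\cdot)\Gamma_\Phi+\Tr[(I-\Gamma_\Phi\Gamma_\Phi^\dagger)\,\cdot\,]\,\sigma$ for a fixed state $\sigma$, we may treat $\Gamma_\Phi$ as a unitary identification. We may also assume $\cH=\supp\rho_\Phi$, as in the theorem; if $\cH$ is larger, we first compress onto $\supp\rho_\Phi$ with the projection plus a trace-and-prepare term on the complement, which leaves every $\rho_x$ unchanged because $\supp\rho_x\subseteq\supp\rho_\Phi$. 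Let $P_c$ be the projector onto $\cH_{Q_c}\otimes\cH_{R_c}$, and view the target space of $\mathcal{K}_{\mathrm{off},\Phi}$ as $\bigoplus_{c}\cH_{Q_c}$, with each block embedded by an isometry $W_c:\cH_{Q_c}\to\bigoplus_{c'}\cH_{Q_{c'}}$.

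For the removal channel we set
\begin{equation}
    \mathcal{K}_{\mathrm{off},\Phi}(X)\coloneqq\sum_{c\in\Xi}W_c\,\Tr_{R_c}\!\left(P_cXP_c\right)W_c^\dagger .
\end{equation}
This is a composition of three CP maps: the pinching $X\mapsto\sum_c P_cXP_c$, a blockwise partial trace $\Tr_{R_c}$, and an isometric embedding. Trace preservation follows from $\sum_cP_c=I$, and the direct-sum structure prevents cross terms between different $c$. Applied to $\rho_x$, the pinching does nothing because $\rho_x$ is already block diagonal. Since $\Tr\rho^{(c)}_{R_c}=1$, we obtain $\bigoplus_c q_{c|x}\rho^{(x,c)}_{Q_c}$.

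For the reattachment channel let $\Pi_c=W_cW_c^\dagger$ be the projector onto the $c$-th summand of $\bigoplus_c\cH_{Q_c}$, and set
\begin{equation}
    \mathcal{K}_{\mathrm{on},\Phi}(Y)\coloneqq\sum_{c\in\Xi}\left(W_c^\dagger YW_c\right)\otimes\rho^{(c)}_{R_c},
\end{equation}
where each term is placed in the $c$-th block of $\cH$. It is CP as a sum of compressions followed by tensoring with a fixed state and an isometric embedding. It is trace preserving because $\sum_c\Tr(W_c^\dagger YW_c)=\Tr Y$, since $\sum_c\Pi_c=I$. On the input $\bigoplus_c q_{c|x}\rho^{(x,c)}_{Q_c}$ it returns $\bigoplus_c q_{c|x}\rho^{(x,c)}_{Q_c}\otimes\rho^{(c)}_{R_c}=\rho_x$. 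The key point is that $\rho^{(c)}_{R_c}$ is independent of $x$, which is condition~1 of Theorem~\ref{thm:KIdecomp}.

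There is no real obstacle here; the only step needing care is the bookkeeping of the embeddings. The maps must be trace preserving on all inputs, not just on the $\rho_x$, so the complement of $\Gamma_\Phi$'s range and of $\supp\rho_\Phi$ has to be handled by trace-and-prepare terms as above. Conditions~2 and~3 of the theorem are not needed for existence.
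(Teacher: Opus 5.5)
Your proof is correct and follows essentially the construction the paper describes and defers to Ref.~\cite{Kuroiwa2022} for. In both, each channel first removes coherence between the KI blocks. Then $\mathcal{K}_{\mathrm{off},\Phi}$ traces out each $R_c$, and $\mathcal{K}_{\mathrm{on},\Phi}$ appends the $x$-independent state $\rho^{(c)}_{R_c}$. Your trace-and-prepare terms for $\Gamma_\Phi$ and for the support restriction correctly ensure that both maps are trace preserving on all inputs.
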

The channels $\mathcal{K}_{\mathrm{off},\Phi}$ and $\mathcal{K}_{\mathrm{on},\Phi}$, which discard and reattach the redundant parts, respectively, are called \textit{KI} operations.
Both KI operations first remove coherence between distinct blocks;
$\mathcal{K}_{\mathrm{off},\Phi}$ then takes the partial trace over the redundant subsystem, while $\mathcal{K}_{\mathrm{on},\Phi}$ prepares the corresponding redundant state within each block.
See Ref.~\cite{Kuroiwa2022} for an explicit construction of the KI operations.
In particular, we have
\begin{equation}
    \mathcal{K}_{\mathrm{on},\Phi}\circ \mathcal{K}_{\mathrm{off},\Phi}(\rho_x)=\rho_x
    \quad\text{for every }x.
\end{equation}
Note that these channels are reversible on the source family, but reversibility on all operators is not asserted.

\subsection{Blind quantum data compression}
\label{subsec:blind_compression}
Blind quantum data compression is a communication task between a sender, Alice, and a receiver, Bob.
Alice receives quantum systems independently drawn from a known ensemble by a referee and transmits a compressed quantum system to Bob, who reconstructs the source.
The defining restriction is that Alice has no direct access to the classical system containing the label.

Let $\Phi=\{p_x,\rho_x\}_{x\in\Sigma}$ be an ensemble on $A$.
The referee records the preparation label in a classical system $X$ and prepares the source state
\begin{equation}
 \rho_{XA}\coloneqq
 \sum_{x\in\Sigma}p_x\ketbra{x}{x}_X\otimes\rho_x,
 \qquad
 \rho_A=\rho_\Phi=\sum_{x\in\Sigma}p_x\rho_x.
\end{equation}
For a block of $n$ independent copies, write
\begin{equation}
 \rho_{X^nA^n}\coloneqq\rho_{XA}^{\otimes n}
 =\sum_{x^n\in\Sigma^n}p_{x^n}
 \ketbra{x^n}{x^n}_{X^n}\otimes\rho_{x^n},
\end{equation}
where tensor factors are reordered as necessary, and
\begin{equation}
 p_{x^n}\coloneqq\prod_{i=1}^np_{x_i},
 \qquad
 \ket{x^n}\coloneqq\bigotimes_{i=1}^n\ket{x_i},
 \qquad
 \rho_{x^n}\coloneqq\bigotimes_{i=1}^n\rho_{x_i}.
\end{equation}

Alice and Bob are allowed to share an arbitrary finite-dimensional state $\zeta_n\in\Density(\cH_{A_0}\otimes\cH_{B_0})$, independent of the source.
A compression protocol consists of this state and encoding and decoding channels
\begin{equation}
 \cE_n:A^nA_0\longrightarrow M_n,
 \qquad
 \cD_n:M_nB_0\longrightarrow A^n.
\end{equation}
Alice sends the message system $M_n$ to Bob through a noiseless quantum channel.
For a fixed shared state $\zeta_n$, we also refer to the pair $(\cE_n,\cD_n)$ as the \textit{compression protocol}.
The effective channel acting on the source is
\begin{equation}
 \label{eq:effective-source-channel}
 \cT_n(O)\coloneqq
 \cD_n\!\left[
   (\cE_n\otimes\id_{B_0})(O\otimes\zeta_n)
 \right],
 \qquad O\in\Lin(\cH_A^{\otimes n}),
\end{equation}
and we also refer to $\cT_n$ as the effective protocol channel.
The quantum communication (qubit) rate of the protocol is
\begin{equation}
 r_n\coloneqq\frac{1}{n}\log_2\dim M_n,
\end{equation}
which we call the \textit{compression rate}.

We evaluate accuracy using the global squared fidelity
\begin{equation}
 \label{eq:error_criterion}
 F_n(\Phi,\cT_n)\coloneqq
 \fidelity\!\left(
   (\id_{X^n}\otimes\cT_n)(\rho_{X^nA^n}),
   \rho_{X^nA^n}
 \right)=1-\epsilon_n.
\end{equation}
We call $F_n(\Phi,\cT_n)$ the compression fidelity and write $F_n$ when the source and protocol are clear.
The quantity $\epsilon_n$ is the global error; only this error criterion is used below.

There are three compression scenarios relevant to this paper.
First, in \emph{unassisted compression}, $A_0$ and $B_0$ are one-dimensional; \textit{i.e.,} Alice and Bob do not share anything beforehand, and we have $\cT_n=\cD_n\circ\cE_n$.
Next, in the model with \emph{unlimited entanglement assistance}, the shared state $\zeta_n$ is unrestricted as long as it is finite-dimensional and independent of the source.
Finally, for the \emph{qubit-ebit rate region}, we consider a maximally entangled state as the shared state:
\begin{equation}
 \ket{\Upsilon_{d_n}}\coloneqq
 \frac{1}{\sqrt{d_n}}\sum_{j=1}^{d_n}
 \ket{j}_{A_0}\otimes\ket{j}_{B_0},
 \qquad
 \Upsilon_{d_n}\coloneqq
 \ketbra{\Upsilon_{d_n}}{\Upsilon_{d_n}},
\end{equation}
with
$\cH_{A_0}\simeq\cH_{B_0}\simeq\mathbb C^{d_n}$.
The entanglement (ebit) rate is
\begin{equation}
 e_n\coloneqq\frac{1}{n}\log_2d_n.
\end{equation}

For $R,E\geq0$ and $\epsilon\in[0,1]$, an $(n,R,E,\epsilon)$ code in the qubit-ebit model consists of $\Upsilon_{d_n}$ and encoding and decoding channels satisfying
\begin{equation}
 \log_2\dim M_n\leq nR,
 \qquad
 \log_2d_n\leq nE,
 \qquad
 F_n(\Phi,\cT_n)\geq1-\epsilon.
\end{equation}
A pair $(R,E)$ is \emph{achievable} if, for every
$\epsilon>0$ and $\delta>0$, there exists $n_0$ such that an
$(n,R+\delta,E+\delta,\epsilon)$ code exists for every $n\geq n_0$.
In the unassisted and unlimited-assistance models, achievability is defined analogously using only the compression rate and the corresponding restriction on the shared resource.
The infimum of achievable communication rates is called the \emph{optimal rate}.

We now express the optimal rates in terms of the KI decomposition of $\Phi$.
Define the average probability of block $c \in \Xi$ and its
normalized, redundancy-free average state by
\begin{equation}
 s_{\Phi,c}\coloneqq\sum_{x\in\Sigma}p_xq_{c|x},
 \qquad
 \overline\rho_{\Phi,c}\coloneqq
 \frac{1}{s_{\Phi,c}}
 \sum_{x\in\Sigma}p_xq_{c|x}\rho_{Q_c}^{(x,c)}.
\end{equation}
Write $s_\Phi=(s_{\Phi,c})_{c\in\Xi}$.
After removing the redundant part, we have
\begin{equation}
 \mathcal{K}_{\mathrm{off},\Phi}(\rho_\Phi)
 =\bigoplus_{c\in\Xi}s_{\Phi,c}\overline\rho_{\Phi,c}.
\end{equation}
We define the rate thresholds by
\begin{align}
 R_{\mathrm{blind},\Phi}
 &\coloneqq\qentropy(\mathcal{K}_{\mathrm{off},\Phi}(\rho_\Phi))
 =\centropy(s_\Phi)
  +\sum_{c\in\Xi}s_{\Phi,c}\qentropy(\overline\rho_{\Phi,c}),
 \\
 R_{\mathrm{blind},\Phi}^{\mathrm{EA}}
 &\coloneqq\frac12\centropy(s_\Phi)
  +\sum_{c\in\Xi}s_{\Phi,c}\qentropy(\overline\rho_{\Phi,c})
 =R_{\mathrm{blind},\Phi}-\frac12\centropy(s_\Phi).
\end{align}

\begin{theorem}[Blind compression rate region~\cite{Koashi2001a,Khanian2020b}]
 \label{thm:optimal_blind_compression_rate}
 Let $\Phi$ be a quantum ensemble.
 In the qubit-ebit model, a pair $(R,E)\in\mathbb R_{\geq0}^2$ is achievable if and only if
 \begin{equation}
  \label{eq:rate_region}
  \begin{aligned}
   R+E&\geq R_{\mathrm{blind},\Phi},\\
   R&\geq R_{\mathrm{blind},\Phi}^{\mathrm{EA}}.
  \end{aligned}
 \end{equation}
 In particular, the optimal rate of unassisted compression is $R_{\mathrm{blind},\Phi}$.
 With unlimited maximally entangled assistance, the optimal rate is $R_{\mathrm{blind},\Phi}^{\mathrm{EA}}$.
\end{theorem}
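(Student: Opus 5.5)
This theorem is cited from Refs.~\cite{Koashi2001a,Khanian2020b}, so a proof reassembles known arguments, with achievability and converse handled separately.

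\emph{Achievability.} We first reduce to the redundancy-free source. Proposition~\ref{prop:KIoperations} gives that $\mathcal{K}_{\mathrm{off},\Phi}^{\otimes n}$ and $\mathcal{K}_{\mathrm{on},\Phi}^{\otimes n}$ are reversible on every $\rho_{x^n}$. Alice can therefore apply $\mathcal{K}_{\mathrm{off},\Phi}^{\otimes n}$ first, without knowing $x^n$, and Bob can apply $\mathcal{K}_{\mathrm{on},\Phi}^{\otimes n}$ last. By monotonicity of fidelity, it then suffices to compress the ensemble $\{p_x,\bigoplus_c q_{c|x}\rho^{(x,c)}_{Q_c}\}$, whose average state is $\bigoplus_c s_{\Phi,c}\overline\rho_{\Phi,c}$.

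For the unassisted corner, Schumacher typical-subspace compression of this average state works at rate $\qentropy(\mathcal{K}_{\mathrm{off},\Phi}(\rho_\Phi))$. Because the global fidelity with a classical reference $X^n$ is bounded by the entanglement fidelity of a purification of the average state, high fidelity follows.

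For the corner $(R_{\mathrm{blind},\Phi}^{\mathrm{EA}},\tfrac12\centropy(s_\Phi))$, we view the block label $c^n$ as a coherent classical register. Alice measures the type class of $c^n$ (only polynomially many types), then compresses the $Q$-parts by Schumacher at rate $\sum_c s_{\Phi,c}\qentropy(\overline\rho_{\Phi,c})$. The $c^n$ register is sent via coherent state merging/teleportation: it is classical-coherent with entropy $n\centropy(s_\Phi)$ and has no reference correlations except through $X^n$. Quantum state redistribution (or fully quantum Slepian--Wolf) lets Alice send half its entropy in qubits and consume the other half as ebits. The rest of the region follows by time sharing and by converting ebits into qubits (sending halves of EPR pairs), which yields the line $R+E\ge R_{\mathrm{blind},\Phi}$.

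\emph{Converse.} For $R\ge R_{\mathrm{blind},\Phi}^{\mathrm{EA}}$ and $R+E\ge R_{\mathrm{blind},\Phi}$, we follow Koashi--Imoto. The key structural fact is that any channel preserving every $\rho_x$ acts on the KI decomposition as the identity on each $Q_c$ and on the coherence-free block index. In particular, such a channel's Stinespring environment can depend only on $c$ and carries no information about the $Q$ parts.

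With error $\epsilon\to0$ we use continuity (Alicki--Fannes type) bounds on entropies of $M_nB_0$. From this we derive $\log\dim M_n+\log d_n\ge \qentropy(M_nB_0)\gtrsim n\qentropy(\mathcal{K}_{\mathrm{off},\Phi}(\rho_\Phi))$, since Bob's systems must carry all non-redundant information, including the block index. We also derive $2\log\dim M_n\gtrsim n\centropy(s_\Phi)+2n\sum_c s_{\Phi,c}\qentropy(\overline\rho_{\Phi,c})$, using $I(M_n;\text{reference}\,B_0)\le2\log\dim M_n$ together with the fact that the reference purifying the $Q$-parts and block index has mutual information with Bob's output equal to $2\sum_c s_c\qentropy(\overline\rho_c)+\centropy(s)$.

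The main obstacle is making the structural fact robust to approximate preservation on $n$ copies. The KI structure is discontinuous, so for the converse (with $\epsilon\to0$) I would apply it only asymptotically, via a perturbation that bounds the information leaked to the environment by $O(n\sqrt{\epsilon})$ plus $o(n)$. This is where I expect most of the work.
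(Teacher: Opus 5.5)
The paper does not prove this theorem itself. Achievability is imported from Koashi--Imoto and Khanian--Winter, and the converse direction follows from the paper's own exponential strong converses: Theorems~\ref{thm:main} and~\ref{thm:assisted} and Corollary~\ref{cor:rate-region}.

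Your achievability sketch is the standard one and is fine in substance; note that the cq fidelity is bounded \emph{below} by the entanglement fidelity. Do state the assisted corner precisely, though. The $\tfrac12\centropy(s_\Phi)$ saving exists because measuring the block index $c^n$ does not disturb any block-diagonal $\rho_{x^n}$. Alice may therefore keep a classical copy of $c^n$ and send it by superdense coding, at $\tfrac12\centropy(s_\Phi)$ qubits plus $\tfrac12\centropy(s_\Phi)$ ebits per copy, while Schumacher-compressing the $Q$-parts conditionally on $c^n$. If $c^n$ were kept coherent, as your phrases ``coherent register'' and measuring only the type class suggest, there would be no saving: redistribution against the full purification costs $n\qentropy(\mathcal{K}_{\mathrm{off},\Phi}(\rho_\Phi))$ qubits.

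The genuine gap is the converse. You reduce it to a robust Koashi--Imoto statement but do not prove that statement. You have the structural fact only for channels that preserve every $\rho_x$ exactly (Lemma~\ref{lem:redundancy_free_KI_preserving_channel}).

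For the $R+E$ bound, let $\hat A^n$ be Bob's output and $W_B$ the decoder's environment. Then $\log_2\dim M_n+\log_2 d_n\geq\qentropy(M_nB_0)=\qentropy(\hat A^nW_B)$, and Alicki--Fannes does give $\qentropy(\hat A^n)\geq n\qentropy(\rho_\Phi)-o(n)$. But you also need $\qentropy(W_B|\hat A^n)\geq -o(n)$, i.e.\ that $W_B$ is essentially uncorrelated with the $Q$-parts within each block.

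For the bound on $2\log_2\dim M_n$, the step $I(\hat A^n:R^n)\leq I(M_n:R^nB_0)\leq 2\log_2\dim M_n$ is fine. However, your reference $R^n$ (the canonical purification of $\rho_\Phi^{\otimes n}$, which carries a copy of $c^n$) is a system with which the protocol is \emph{not} required to preserve correlations. The criterion~\eqref{eq:error_criterion} involves only $X^n$. What is missing is a theorem showing that fidelity $1-\epsilon$ with respect to $X^n$ forces blockwise preservation of correlations with $R^n$, up to $o(n)$ leakage. Continuity bounds cannot supply this: the KI structure is discontinuous, and an $n$-copy channel with global fidelity $1-\epsilon$ need not be close in diamond norm to any exactly preserving channel. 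Your remark about an $O(n\sqrt\epsilon)+o(n)$ leakage bound names this step; it does not prove it.

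The paper fills exactly this step, quantitatively and without entropies:
\begin{itemize}
\item Lemma~\ref{lem:gap} shows $L_\Phi\geq0$ with $\ker L_\Phi=\Span\{\vk{P_c}\}_c$. This gives a spectral gap and hence Lemma~\ref{lem:fidelity-comparison}, which bounds $F_n$ by a tilted overlap of $(\cT_n\otimes\id)(\psi_{\Phi,t}^{\otimes n})$ with the span of the block purifications. This is a robust form of your structural fact.
\item The overlap is then bounded by counting, up to exponentially small terms. Unassisted, it is at most $(\dim M)2^{-\ell(\qentropy(\rho_{\Phi,t})-\delta)}$ (Lemma~\ref{lem:memory-bound}). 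With assistance, the domination in Lemma~\ref{lem:message-domination} gives $2(\dim M)^2 2^{-\ell(K_{\Phi,t}-\delta)}$ (Lemma~\ref{lem:assisted-overlap}), and $K_{\Phi,0}=2R^{\mathrm{EA}}_{\mathrm{blind},\Phi}$ is precisely your mutual-information quantity.
\item Your $R+E$ bound corresponds to the simulation in Corollary~\ref{cor:rate-region}, which counts $B_0$ as part of the message.
\end{itemize}
An entropic weak converse along your lines would instead need a single-copy robust KI lemma (for example, via compactness). It would be applied to the per-copy channels $X\mapsto\Tr_{\mathrm{others}}\cT_n(X\otimes\rho_\Phi^{\otimes(n-1)})$, which do have fidelity at least $1-\epsilon$, and would require careful chaining over the copies. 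None of this is in the proposal.
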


\section{Source reductions and KI overlap}
\label{sec:strong_converse_tools}
In this section, we introduce mathematical tools used in our unassisted and entanglement-assisted strong converse proofs.
In Sec.~\ref{subsec:reduction}, we discuss source reductions that allow the strong converses to be proved for redundancy-free ensembles with positive-definite source states.
In Sec.~\ref{subsec:perturbed-source}, we introduce the perturbed source and the KI overlap used in our proofs, and state a lemma to turn overlap bounds into bounds on compression fidelity.

\subsection{Reduction to redundancy-free positive-definite sources}
\label{subsec:reduction}
In our strong-converse proofs, it suffices to consider ensembles with no redundant subsystem and with positive-definite source states.
Here, we give the reductions explicitly, including their action on protocols.

\paragraph{Restriction to source support.}
Let $A$ be a system whose Hilbert space contains $\supp\rho_\Phi$.
Since $p_x>0$ for every retained label,
\begin{equation}
 \supp\rho_x\subseteq\supp\rho_\Phi
 \qquad\text{for every }x\in\Sigma.
\end{equation}
Let $A_s$ denote the system with $\cH_{A_s}=\supp\rho_\Phi$, let $P$ be the projector onto this support, and let $\omega$ be any density operator supported there.
Define the support-restriction channel and the embedding channel by
\begin{equation}
 \mathcal R: A\longrightarrow A_s,
 \qquad
 \mathcal R(O)=POP+\Tr[(I_A-P)O]\,\omega,
 \qquad
 \mathcal J:A_s\longrightarrow A.
\end{equation}
Here $\mathcal J$ is the natural isometric embedding.
Let
$\Phi_s=\{p_x,\mathcal R(\rho_x)\}_{x\in\Sigma}$ and
$\rho^s_{XA_s}=(\id_X\otimes\mathcal R)(\rho_{XA})$.
Then $\mathcal J\circ\mathcal R(\rho_x)=\rho_x$ for every source state.
Given a protocol with effective channel $\cT_n$, define
\begin{equation}
 \cT_n^s\coloneqq
 \mathcal R^{\otimes n}\circ\cT_n\circ\mathcal J^{\otimes n}.
\end{equation}
The message system and shared state are unchanged.
Thus, monotonicity of fidelity gives
\begin{align}
 F_n(\Phi,\cT_n)
 &\leq\fidelity\!\left(
  (\id_{X^n}\otimes\mathcal R^{\otimes n}\circ\cT_n)
     (\rho_{X^nA^n}),
  (\id_{X^n}\otimes\mathcal R^{\otimes n})
     (\rho_{X^nA^n})
 \right)\\
 &=F_n(\Phi_s,\cT_n^s).
\end{align}
Conversely, a protocol for $\Phi_s$ can be used on $\Phi$ by
restricting to the source support before encoding and embedding into the original space after decoding, with the same fidelity and resources.
Thus the support restriction preserves the compression problem.
We henceforth use $A$ for the restricted system, on which $\rho_\Phi>0$.

\paragraph{Removal of redundancy.}
Let $\widetilde A$ have Hilbert space
$\bigoplus_{c\in\Xi}\cH_{Q_c}$, and set
\begin{equation}
 \widetilde\Phi\coloneqq
 \{p_x,\mathcal{K}_{\mathrm{off},\Phi}(\rho_x)\}_{x\in\Sigma},
 \qquad
 \widetilde\rho_{X\widetilde A}\coloneqq
 (\id_X\otimes\mathcal{K}_{\mathrm{off},\Phi})(\rho_{XA}).
\end{equation}
For a protocol $\cT_n$ on the original ensemble, define
\begin{equation}
 \widetilde\cT_n\coloneqq
 \mathcal{K}_{\mathrm{off},\Phi}^{\otimes n}
 \circ\cT_n\circ
 \mathcal{K}_{\mathrm{on},\Phi}^{\otimes n}.
\end{equation}
The KI operations are applied locally before encoding and after decoding, so this protocol uses the same message system and shared state.
Since $\mathcal{K}_{\mathrm{on},\Phi}\circ
\mathcal{K}_{\mathrm{off},\Phi}(\rho_x)=\rho_x$ for every $x$, monotonicity of fidelity yields
\begin{align}
 F_n(\Phi,\cT_n)
 &\leq\fidelity\!\left(
  (\id_{X^n}\otimes
   \mathcal{K}_{\mathrm{off},\Phi}^{\otimes n}\circ\cT_n)
       (\rho_{X^nA^n}),
  (\id_{X^n}\otimes
   \mathcal{K}_{\mathrm{off},\Phi}^{\otimes n})
       (\rho_{X^nA^n})
 \right)\\
 &=F_n(\widetilde\Phi,\widetilde\cT_n).
\end{align}
A protocol for $\widetilde\Phi$ can also be lifted to $\Phi$ by removing redundancy before encoding and reattaching it after decoding;
fidelity again cannot decrease.
The classical part and nonredundant part of the KI decomposition
are unchanged by removal of redundancy.
Hence both thresholds
$R_{\mathrm{blind},\Phi}$ and
$R_{\mathrm{blind},\Phi}^{\mathrm{EA}}$ are preserved.
We may therefore work with the redundancy-free ensemble and
relabel it as $\Phi$ on $A$.

\paragraph{Reduction to positive-definite source states.}
Fix $\lambda\in(0,1)$ independently of the block length and define
\begin{equation}
 \sigma_x\coloneqq(1-\lambda)\rho_x+\lambda\rho_\Phi,
 \qquad
 \Psi\coloneqq\{p_x,\sigma_x\}_{x\in\Sigma},
 \qquad
 \sigma_{XA}\coloneqq\sum_{x\in\Sigma}
 p_x\ketbra{x}{x}\otimes\sigma_x.
\end{equation}
The average state is unchanged, and every $\sigma_x$ is positive definite:
\begin{equation}
 \sum_xp_x\sigma_x=\rho_\Phi,
 \qquad
 \sigma_x\geq\lambda\rho_\Phi>0.
\end{equation}
Conversely, $\rho_x$ can be written as a linear combination of $\{\sigma_x\}_{x \in \Sigma}$:
\begin{equation}
 \rho_x=\frac{\sigma_x-\lambda\sum_zp_z\sigma_z}{1-\lambda}.
\end{equation}
Consequently, a channel preserves every $\rho_x$ if and only if it preserves every $\sigma_x$.
Thus $\Phi$ and $\Psi$ have the same KI decomposition.
Since their average state is also the same, their unassisted and assisted rate thresholds coincide.

Define a channel on the classical system by
\begin{equation}
 \mathcal C_\lambda(O)\coloneqq
 (1-\lambda)O+\lambda\Tr(O)\sum_{z\in\Sigma}p_z\ketbra{z}{z}.
\end{equation}
It transforms the original source into the new one:
\begin{align}
 (\mathcal C_\lambda\otimes\id_A)(\rho_{XA})
 &=(1-\lambda)\rho_{XA}
   +\lambda\sum_xp_x\ketbra{x}{x}\otimes\rho_\Phi\\
 &=\sigma_{XA}.
\end{align}
The channel $\mathcal C_\lambda^{\otimes n}$ acts only on the
reference system and commutes with the effective source channel $\cT_n$.
Therefore,
\begin{align}
 F_n(\Phi,\cT_n)
 &\leq\fidelity\!\left(
  (\mathcal C_\lambda^{\otimes n}\otimes\cT_n)
    (\rho_{X^nA^n}),
  (\mathcal C_\lambda^{\otimes n}\otimes\id_{A^n})
    (\rho_{X^nA^n})
 \right)\\
 &=\fidelity\!\left(
  (\id_{X^n}\otimes\cT_n)(\sigma_{XA}^{\otimes n}),
  \sigma_{XA}^{\otimes n}
 \right)\\
 &=F_n(\Psi,\cT_n).
\end{align}
Thus an upper bound for the fixed ensemble $\Psi$
implies the same bound for $\Phi$, without changing the message dimension or shared resource.

In the strong-converse proofs, we use these reductions and relabel the resulting ensemble as $\Phi$.
The resulting ensemble \textbf{is redundancy-free, has positive-definite source states, and satisfies $p_x>0$ for every label.}
Its KI decomposition is
\begin{equation}
 \label{eq:redundancy_free_ensemble}
 \rho_x=\bigoplus_{c\in\Xi}q_{c|x}\rho_{Q_c}^{(x,c)},
 \qquad
 q_{c|x}>0,
 \qquad
 \rho_{Q_c}^{(x,c)}>0.
\end{equation}
For this reduced ensemble, $\mathcal{K}_{\mathrm{off},\Phi}(\rho_\Phi)=\rho_\Phi$.
For each $c \in \Xi$, let $P_c$ be the projector onto $\cH_{Q_c}$; we call these the KI projectors.

\begin{remark}
    As shown in Appendix~\ref{appendix:quantum-reference},
    more general mixed-state sources with a quantum reference considered in Ref.~\cite{Khanian2020b} can be reduced to ensemble sources.
    Our strong converse results therefore apply to these
    general sources as well.
\end{remark}

\subsection{Perturbed source and KI overlap}
\label{subsec:perturbed-source}
Here, $\Phi$ denotes the redundancy-free ensemble with positive-definite source states obtained in
Sec.~\ref{subsec:reduction}.
To prove strong converse theorems, we introduce a perturbation of its average state that remains block diagonal in the original KI structure.

Define the Hermitian operator and scalar
\begin{equation}
 \label{eq:perturbation}
 S_\Phi\coloneqq
 \sum_{x\in\Sigma}p_x\rho_x\ln\rho_x-\rho_\Phi\ln\rho_\Phi,
 \qquad
 \chi_\Phi\coloneqq\Tr S_\Phi.
\end{equation}
Here $\ln$ denotes the natural logarithm, whereas all rate and entropy
expressions use base-two logarithms. In particular,
\begin{equation}
 \chi_\Phi=(\ln2)\left[
 \qentropy(\rho_\Phi)-\sum_xp_x\qentropy(\rho_x)
 \right]\geq0.
\end{equation}
Set
\begin{equation}
 Q_{\Phi,t}\coloneqq\rho_\Phi+tS_\Phi.
\end{equation}
Since $\rho_\Phi>0$, there is $t'>0$ such that
$Q_{\Phi,t}>0$ for every $t\in[0,t']$. Define
\begin{equation}
 \label{eq:perturbed-state}
 q_{\Phi,t}\coloneqq\Tr Q_{\Phi,t}=1+t\chi_\Phi,
 \qquad
 \rho_{\Phi,t}\coloneqq\frac{Q_{\Phi,t}}{q_{\Phi,t}}.
\end{equation}
The operators $S_\Phi$, $Q_{\Phi,t}$, and $\rho_{\Phi,t}$ are block diagonal in the original KI decomposition.
Define
\begin{equation}
 s_{\Phi,t,c}\coloneqq\Tr(P_c\rho_{\Phi,t}),
 \qquad
 \rho_{\Phi,t,c}\coloneqq
 \frac{P_c\rho_{\Phi,t}P_c}{s_{\Phi,t,c}}
  \in \Density(\cH_{Q_c}).
\end{equation}
Then
\begin{equation}
 \rho_{\Phi,t}=\bigoplus_{c\in\Xi}s_{\Phi,t,c}\rho_{\Phi,t,c},
 \qquad
 \sum_{c \in \Xi}s_{\Phi,t,c}=1,
 \qquad
 \Tr\rho_{\Phi,t,c}=1.
\end{equation}
Write $s_{\Phi,t}=(s_{\Phi,t,c})_{c\in\Xi} \in \Prob(\Xi)$. At $t=0$ these
quantities agree with the unperturbed averages introduced in
Sec.~\ref{subsec:blind_compression}:
\begin{equation}
 s_{\Phi,0,c}=s_{\Phi,c},
 \qquad
 \rho_{\Phi,0,c}=\overline\rho_{\Phi,c},
 \qquad
 \rho_{\Phi,0}=\rho_\Phi.
\end{equation}

Diagonalize each normalized block as
\begin{equation}
 \rho_{\Phi,t,c}=\sum_i\lambda_{\Phi,t,i|c}
 \ketbra{c,i;t}{c,i;t},
 \qquad
 \sum_i\lambda_{\Phi,t,i|c}=1,
\end{equation}
and set
\begin{equation}
 \kappa_{\Phi,t,c,i}\coloneqq
 s_{\Phi,t,c}\lambda_{\Phi,t,i|c}.
\end{equation}
These are the eigenvalues of $\rho_{\Phi,t}$. For a fixed ensemble
and perturbation parameter, we may abbreviate
\begin{equation}
 s_c=s_{\Phi,t,c},\quad
 \rho_c=\rho_{\Phi,t,c},\quad
 \lambda_{i|c}=\lambda_{\Phi,t,i|c},\quad
 \kappa_{c,i}=\kappa_{\Phi,t,c,i},\quad
 \ket{c,i}=\ket{c,i;t}.
\end{equation}

Let $R$ be a copy of $A$. With complex conjugation taken in the
fixed vectorization basis, the canonical purification is
\begin{equation}
 \vk{\sqrt{\rho_{\Phi,t}}}
 =\sum_{c,i}\sqrt{\kappa_{c,i}}\,
 \ket{c,i}_A\otimes\overline{\ket{c,i}}_R,
 \qquad
 \psi_{\Phi,t}\coloneqq
 \vkb{\sqrt{\rho_{\Phi,t}}}{\sqrt{\rho_{\Phi,t}}}.
\end{equation}
Its marginals are $\rho_{\Phi,t}$ on $A$ and
$\rho_{\Phi,t}^{\Trans}$ on $R$. Define the normalized block vectors
\begin{equation}
 \label{eq:ki-projector}
 \ket{\varphi_{\Phi,t,c}}\coloneqq
 \frac{\vk{P_c\sqrt{\rho_{\Phi,t}}}}{\sqrt{s_{\Phi,t,c}}}
 =\sum_i\sqrt{\lambda_{i|c}}\,
 \ket{c,i}_A\otimes\overline{\ket{c,i}}_R.
\end{equation}
They are orthonormal because distinct KI blocks have orthogonal
supports. Hence
\begin{equation}
 \Pi_{\Phi,t}\coloneqq
 \sum_{c\in\Xi}\ketbra{\varphi_{\Phi,t,c}}{\varphi_{\Phi,t,c}}
\end{equation}
is a projector. For any quantum channel $\cT_\ell:A^\ell\to A^\ell$,
define
\begin{equation}
 \label{eq:overlap}
 \begin{aligned}
  \Omega_{\Phi,t}(\cT_\ell)
  &\coloneqq(\cT_\ell\otimes\id_{R^\ell})
     (\psi_{\Phi,t}^{\otimes\ell}),\\
  g_\ell(\Phi,t,\cT_\ell)
  &\coloneqq\Tr\!\left[
     \Omega_{\Phi,t}(\cT_\ell)\Pi_{\Phi,t}^{\otimes\ell}
     \right].
 \end{aligned}
\end{equation}
We call $g_\ell(\Phi,t,\cT_\ell)\in[0,1]$ the \emph{KI overlap}: 
it quantifies how well the channel preserves the KI structure of the source by measuring the weight of the output state in the subspace spanned by the normalized KI block purifications.

The following lemma relates the compression fidelity to the KI overlap.

\begin{lemma}[Compression fidelity and overlap]
\label{lem:fidelity-comparison}
Let $\Phi$ be a redundancy-free ensemble with positive-definite source states.
There exist $t_0\in(0,t']$ and constants $C,b>0$, depending only on $\Phi$, such that for every positive integer $n$,
every channel $\cT_n$ on $A^n$, and every $t\in(0,t_0]$,
\begin{equation}
\label{eq:fidelity-comparison}
F_n(\Phi,\cT_n)
\leq e^{Ct^2n}
\Tr\!\left[
\Omega_{\Phi,t}(\cT_n)
\left(\Pi_{\Phi,t}+e^{-bt}(I-\Pi_{\Phi,t})\right)^{\otimes n}
\right].
\end{equation}
\end{lemma}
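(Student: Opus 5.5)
The plan is to apply the Hölder bound~\eqref{eq:fidelity_trace} with a product, classically block-diagonal operator $Z$, and then rewrite the resulting trace through the canonical purification $\psi_{\Phi,t}$. Concretely, take
\begin{equation*}
Z=\bigotimes_{i=1}^n Z^{(1)},
\qquad
Z^{(1)}=\sum_{x}\ketbra{x}{x}\otimes Z_x,
\qquad
Z_x>0 .
\end{equation*}
With $P=(\id_{X^n}\otimes\cT_n)(\rho_{X^nA^n})$ and $Q=\rho_{X^nA^n}$, the second factor factorizes:
\begin{equation*}
\Tr(QZ)=\Bigl(\sum_x p_x\Tr(\rho_xZ_x)\Bigr)^n .
\end{equation*}
For the first factor, write $\rho_x=\sqrt{\rho_{\Phi,t}}\,Y_x\sqrt{\rho_{\Phi,t}}$ with $Y_x\coloneqq\rho_{\Phi,t}^{-1/2}\rho_x\rho_{\Phi,t}^{-1/2}$, which is legitimate because $\rho_{\Phi,t}>0$. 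Then \eqref{eq:vec_braket} gives
\begin{equation*}
\Tr\bigl[\cT(\sqrt\rho\,Y\sqrt\rho)\,W\bigr]=\Tr\bigl[\Omega\,(W\otimes Y^{\Trans})\bigr],
\end{equation*}
and therefore
\begin{equation*}
\Tr(PZ^{-1})=\Tr\bigl[\Omega_{\Phi,t}(\cT_n)\,O_t^{\otimes n}\bigr],
\qquad
O_t\coloneqq\sum_x p_x Z_x^{-1}\otimes Y_x^{\Trans}\geq 0 .
\end{equation*}
Since $\Omega_{\Phi,t}(\cT_n)\geq0$, and $0\leq A\leq B$ implies $A^{\otimes n}\leq B^{\otimes n}$, the lemma reduces to two single-copy estimates. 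The first is
\begin{equation*}
\sum_x p_x\Tr(\rho_xZ_x)\leq e^{C_1t^2}.
\end{equation*}
The second is the operator inequality
\begin{equation*}
O_t\leq e^{C_2t^2}\bigl(\Pi_{\Phi,t}+e^{-bt}(I-\Pi_{\Phi,t})\bigr).
\end{equation*}

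For the choice of $Z_x$, I would take a first-order tilt in the direction of the relative-entropy gradient, $Z_x^{-1}=\exp\bigl(t(\ln\rho_x-\ln\rho_\Phi)\bigr)$ or a symmetrized variant, so that $Z_x=I-t(\ln\rho_x-\ln\rho_\Phi)+O(t^2)$. Then
\begin{equation*}
\sum_xp_x\Tr(\rho_xZ_x)=1-t\chi_\Phi+O(t^2).
\end{equation*}
At $t=0$ we have $O_0=\sum_xp_x I\otimes Y_x^{\Trans}=I\otimes(\rho_\Phi^{-1/2}\rho_\Phi\rho_\Phi^{-1/2})^{\Trans}=I$. So I will expand $O_t=I+tD+O(t^2)$, where $D$ collects two contributions: the tilt, $\sum_xp_x(\ln\rho_x-\ln\rho_\Phi)\otimes Y_x^{\Trans}$, and the derivative of $Y_x$ coming from $\rho_{\Phi,t}=(\rho_\Phi+tS_\Phi)/(1+t\chi_\Phi)$. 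The perturbation $S_\Phi$ is defined precisely so that, on each $\ket{\varphi_{\Phi,t,c}}$, the first-order terms cancel against the $-t\chi_\Phi$ of the other factor. I expect $\bra{\varphi_c}D\ket{\varphi_{c'}}$ to equal $\chi_\Phi\delta_{cc'}$ up to the normalization, using the block-diagonality of $S_\Phi$, of $\ln\rho_x$ and of $\ln\rho_\Phi$ in the KI structure. This would make the product of the two factors $1+O(t^2)$ on $\Pi_{\Phi,t}$. The cross terms between $\ran\Pi_{\Phi,t}$ and its complement should vanish at first order by the same block structure. Any residual second-order cross terms are absorbed by the operator Cauchy--Schwarz inequality $\pm(X+X^\dagger)\leq \epsilon^{-1}X X^\dagger+\epsilon I$ with $\epsilon\sim t$, which costs only $O(t^2)$ on both sides.

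The main obstacle is the spectral gap on the complement: I need $(I-\Pi)(D-\chi_\Phi)(I-\Pi)\leq -2b'(I-\Pi)$ for some $b'>0$, which then yields $e^{-bt}$ after taking $t_0$ small so that the $O(t^2)$ terms are dominated. My approach is to identify vectors in $\cH_A\otimes\cH_R$ with operators, $\vk{V\sqrt{\rho_\Phi}}$, and show that $\vb{V\sqrt{\rho_\Phi}}(\chi_\Phi-D)\vk{V\sqrt{\rho_\Phi}}$ is a quadratic form that is nonnegative by operator convexity or monotonicity of $\ln$, in the manner of a second derivative of relative entropy. Its kernel should consist of those $V$ for which $V\rho_x$ is compatible with $\rho_x$ in the sense that $V$ commutes with every $\rho_x\rho_\Phi^{-1}$. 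By the maximality conditions 2 and 3 of Theorem~\ref{thm:KIdecomp}, via Schur's lemma for the algebra generated by the blocks, such $V$ must have the form $\sum_c a_cP_c$. This kernel corresponds exactly to $\ran\Pi_{\Phi,0}$. Finite dimensionality then gives a positive gap $b'$ at $t=0$, and continuity in $t$ carries it to $t\in(0,t_0]$, with $\Pi_{\Phi,t}$ differing from $\Pi_{\Phi,0}$ by $O(t)$, again absorbed into the $t^2$ terms. Finally I combine $C=C_1+C_2$ and use $1+O(t^2)\leq e^{Ct^2}$.
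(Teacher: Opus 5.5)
Your skeleton is the paper's. With $Z_x^{-1}=e^{tH_x}$ on the output side, your first factor is the paper's $a(t)$, and your $O_t$ equals $q_{\Phi,t}\widetilde B_{\Phi,t}$. There is one fixable slip: $\Pi_{\Phi,t}O_t\Pi_{\Phi,t}=(1+t\chi_\Phi)\Pi_{\Phi,t}+O(t^2)$, so the separate estimate $O_t\leq e^{C_2t^2}\Theta_t$ is false whenever $\chi_\Phi>0$. You must bound the product $a(t)O_t$, as your later paragraph effectively does.

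The genuine gap is the kernel characterization, which is the heart of the lemma. Your quadratic form is $\vbk{V|L_\Phi|V}$, and positivity is not the problem; it follows, for example, from data processing of $\chi_\Phi$ along $\mathcal N_{\varepsilon,V}$. The problem is your claim that the kernel consists of the $V$ commuting with every $\rho_x\rho_\Phi^{-1}$, and that conditions~2 and~3 with Schur's lemma force such $V$ into $\Span\{P_c\}$. That implication is false. Take two non-commuting full-rank qubit states. The ensemble is redundancy-free with a single KI block, so $\Span\{P_c\}=\mathbb{C}I$. Yet $p_1\rho_1\rho_\Phi^{-1}+p_2\rho_2\rho_\Phi^{-1}=I$, so the non-scalar $V=\rho_1\rho_\Phi^{-1}$ commutes with both $\rho_x\rho_\Phi^{-1}$. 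Commutation with $\rho_x\rho_\Phi^{-1}$ is a consequence of $\vk{V}\in\ker L_\Phi$, not a characterization of it. Equivalently, the Hermitian operators $\rho_\Phi^{-1/2}\rho_x\rho_\Phi^{-1/2}$ (which always commute when there are two labels) can generate a strictly smaller algebra than $\bigoplus_c\Lin(\cH_{Q_c})$, so Schur's lemma has nothing to act on.

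What the equality analysis must deliver is a modular refinement. Write $V=\sum_\lambda W_\lambda$ with $\rho_\Phi W_\lambda=\lambda W_\lambda\rho_\Phi$; every component must then satisfy $\rho_xW_\lambda=\lambda W_\lambda\rho_x$ for all $x$. A bare appeal to operator convexity gives no usable equality conditions. The paper obtains this relation in Lemma~\ref{lem:gap} from the representation $\vbk{V|L_\Phi|V}=\int_0^\infty\bigl[\sum_xp_xm_{\rho_x}(u)-m_{\rho_\Phi}(u)\bigr]\,du$, where $m_\tau(u)$ is the maximum of a strongly concave quadratic functional that is linear in $\tau$. Vanishing forces a common maximizer for every $x$ and every $u>0$, and the resulting rational identity in $u$ separates the $\lambda$-components.

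Once you have this twisted intertwining, your Schur idea does finish the job, and more elementarily than the paper. The paper instead builds a $\rho_x$-preserving channel with Kraus operators proportional to $W_\lambda$ and $W_\lambda^\dagger$ and invokes the Koashi--Imoto theorem on preserving operations. The Schur route goes as follows. For $X=P_{c'}W_\lambda P_c$, the relation makes $X^\dagger X$ commute with the block-$c$ states, so it is scalar by condition~2. A nonzero $X$ is therefore a multiple of an isometry intertwining blocks $c$ and $c'$ with $\alpha=\lambda^{-1}$. Condition~3 excludes this for $c\neq c'$. For $c=c'$, taking traces gives $\lambda=1$, after which condition~2 gives $X\in\mathbb{C}P_c$.
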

\begin{proof}
    See Appendix~\ref{appendix:proof_lem_comparison}.
\end{proof}

\section{Strong converse for unassisted blind compression}
\label{sec:main_results}
In this section, we prove an exponential strong converse for unassisted blind compression.

\begin{tcolorbox}[
  colback=blue!3,
  colframe=blue!60!black,
  boxrule=0.5pt,
  arc=2mm,
  breakable
]
\begin{theorem}[Exponential strong converse]\label{thm:main}
Let $\Phi$ be a quantum ensemble on system $A$.
For every $0\leq r <  R_{\mathrm{blind},\Phi}$, there exist a constant $\alpha > 0$ and a positive integer $n_0$, depending only on $\Phi$ and $r$, such that for every $n \geq n_0$ and every unassisted code $(\cE_n,\cD_n)$ with
\begin{equation}
 \cE_n:A^{\otimes n}\longrightarrow M_n,
 \qquad
 \cD_n:M_n\longrightarrow A^{\otimes n},
 \qquad \log_2\dim M_n\le nr,
\end{equation}
we have
\begin{equation}\label{eq:conclusion}
 F_n(\Phi,\cD_n \circ \cE_n) \leq 2^{-\alpha n}.
\end{equation}
\end{theorem}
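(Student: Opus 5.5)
We first apply the reductions of Sec.~\ref{subsec:reduction}: support restriction, removal of redundancy, and mixing with $\lambda$. None of them changes $\dim M_n$, and none decreases fidelity. We may therefore assume that $\Phi$ is redundancy-free with positive-definite states as in \eqref{eq:redundancy_free_ensemble}, and that $R_{\mathrm{blind},\Phi}=\qentropy(\rho_\Phi)$. Lemma~\ref{lem:fidelity-comparison} then gives, for small $t>0$,
\begin{equation*}
F_n\le e^{Ct^2n}\Tr\!\left[\Omega_{\Phi,t}(\cT_n)\left(\Pi_{\Phi,t}+e^{-bt}(I-\Pi_{\Phi,t})\right)^{\otimes n}\right],
\qquad \cT_n=\cD_n\circ\cE_n .
\end{equation*}
Expanding the tensor power over subsets $S\subseteq[n]$, where $S$ is the set of copies on which $I-\Pi$ is selected, gives a sum of terms $e^{-bt|S|}\Tr[\Omega\,\Pi^{S^c}\otimes(I-\Pi)^{S}]$. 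Fix a small $\delta>0$ and split by $|S|$. Summing the terms with $|S|>\delta n$ gives at most $e^{-bt\delta n}$, because the operators $\Pi^{S^c}\otimes(I-\Pi)^S$ are orthogonal projectors summing to $I$ and $\Omega$ has unit trace. For the terms with $|S|\le\delta n$, we bound the sum by $\Tr[\Omega\,\Pi_\delta]$, where $\Pi_\delta$ is the projector onto strings in which at most $\delta n$ copies lie outside $\Pi$.

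The core step is to bound $\Tr[\Omega_{\Phi,t}(\cD\circ\cE)\,\Pi_\delta]\lesssim 2^{n(r-R_{\mathrm{blind}}+o_\delta(1)+O(t))}$. The message $M_n$ has dimension $2^{nr}$, so $\Omega=(\cD\otimes\id)(\omega_{MR^n})$, where $\omega_{MR^n}$ has $R^n$-marginal $(\rho_{\Phi,t}^{\Trans})^{\otimes n}$ and $M$-dimension $2^{nr}$. Here we use that the vectors $\ket{\varphi_c}$ are maximally correlated within blocks: $(\Pi\otimes\text{anything})$ essentially forces $A$ to reproduce the purification of $R$ restricted to each block. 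Concretely, we plan to write $\Pi_\delta\le\sum_{|S|\le\delta n}\Pi^{S^c}\otimes I^{S}$. For each term we use the Choi form of $\cD$ and \eqref{eq:vec_braket} to reduce the overlap to an operator inequality of the form
\begin{equation*}
\Tr\bigl[\Omega\,\Pi^{\otimes m}\otimes I\bigr]\le \dim M\cdot\max_{c^m}\,\prod_{i}\lambda_{\max}\!\left(s_{c_i}\rho_{c_i}\right)\text{-type weights}.
\end{equation*}
The intuition is that $\ketbra{\varphi_c}{\varphi_c}$, applied after a channel that sees only a $2^{nr}$-dimensional message, can capture at most $\dim M$ times the largest Schmidt-weight contribution. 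To obtain entropy rather than min-entropy, we first project $R^n$ onto its typical subspace for $\rho_{\Phi,t}^{\Trans}$. That subspace has eigenvalues at most $2^{-n(\qentropy(\rho_{\Phi,t})-\eta)}$, and its complement carries a mass that is exponentially small by a Chernoff bound; we would use a gentle-measurement or Cauchy--Schwarz step to pass from the atypical weight to the overlap. A counting factor $\binom{n}{\delta n}d^{2\delta n}=2^{n\,o_\delta(1)}$ handles the free positions $S$. By continuity, $\qentropy(\rho_{\Phi,t})\to\qentropy(\rho_\Phi)$ as $t\to0$.

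Combining these, $F_n\le e^{Ct^2n}\bigl(e^{-bt\delta n}+2^{-n(R_{\mathrm{blind}}-r-\eta-o_\delta(1)-O(t))}\bigr)$. We then choose $\delta,\eta$ small so that the second exponent is at least $(R_{\mathrm{blind}}-r)/2$, and afterwards choose $t$ small with $Ct^2<bt\delta/2$ while keeping the $O(t)$ loss below the gap. This yields $\alpha>0$ depending only on $\Phi$ and $r$. The main obstacle is the dimension-overlap inequality. The naive bound $\Tr[(\cD\otimes\id)(\omega)\,\Pi^{\otimes n}]\le\dim M\,\|\cdot\|$ must exploit the fact that $\Pi$ is not a product projector between $A$ and $R$. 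I would first try to write $\Pi^{\otimes n}=\sum_{c^n}\ketbra{\varphi_{c^n}}{\varphi_{c^n}}$ and bound each term by a fidelity between $\cD^\dagger$-images and the block marginals. Failing that, I would use a Hölder-type bound such as \eqref{eq:fidelity_trace} with $Z$ equal to a block-diagonal operator built from $\rho_{\Phi,t}^{\otimes n}$.
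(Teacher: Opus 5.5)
Your outer structure matches the paper's. This includes the reductions, Lemma~\ref{lem:fidelity-comparison}, the subset expansion of $\Theta_t^{\otimes n}$ with tail $e^{-bt\eta n}$, the bound by $\sum_{|S|\le k}I_{A_SR_S}\otimes\Pi^{\otimes S^c}$, the $2^{nh_2(\eta)}$ count, and the order in which constants are fixed.

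\textbf{The gap.} The step you yourself flag as the main obstacle is not proved. This is the dimension bound on the KI overlap, the paper's Lemma~\ref{lem:memory-bound}, and you give only a list of routes to try. Your first route is risky. Suppose you control $\omega_{MR^n}$ by an operator inequality that uses only $\dim M$ and then push it through the Choi operator of $\cD$. That is the domination argument of Lemma~\ref{lem:message-domination}: $\omega_{MR^n}\le m\,I_M\otimes\omega_{R^n}$, followed by $\cD(I_M)=m\sigma$. It costs $(\dim M)^2$. Combined with $\Tr_R[(I\otimes\rho^{\Trans})\Pi]=\bigoplus_c s_c\rho_c^2$, it only gives the threshold $\qentropy(\rho_\Phi)-\tfrac12\centropy(s_\Phi)=R^{\mathrm{EA}}_{\mathrm{blind},\Phi}$. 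That is strictly below $R_{\mathrm{blind},\Phi}$ whenever there is more than one KI block. So $\dim M$ must enter linearly, and this requires using that the whole channel factors through $M$.

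\textbf{The missing idea.} Expand in Kraus operators $\{T_a\}$ of $\cT_\ell=\cD_\ell\circ\cE_\ell$, in the product eigenbasis $\ket{\boldsymbol c,\boldsymbol i}$ of $\rho_{\Phi,t}^{\otimes\ell}$. The overlap of $\ket{\varphi_{\boldsymbol c}}$ with $(T_a\otimes I)$ applied to the canonical purification equals $\sqrt{s_{\boldsymbol c}}\sum_{\boldsymbol i}\lambda_{\boldsymbol i|\boldsymbol c}\bra{\boldsymbol c,\boldsymbol i}T_a\ket{\boldsymbol c,\boldsymbol i}$. Cauchy--Schwarz with weights $\lambda_{\boldsymbol i|\boldsymbol c}$ then gives $g_\ell\le\sum_{\boldsymbol c,\boldsymbol i}\kappa_{\boldsymbol c,\boldsymbol i}w_{\boldsymbol c,\boldsymbol i}$, where $w_{\boldsymbol c,\boldsymbol i}=\bra{\boldsymbol c,\boldsymbol i}\cT_\ell(\ketbra{\boldsymbol c,\boldsymbol i}{\boldsymbol c,\boldsymbol i})\ket{\boldsymbol c,\boldsymbol i}\in[0,1]$. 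Moreover, $\sum_{\boldsymbol c,\boldsymbol i}w_{\boldsymbol c,\boldsymbol i}\le\sum_{\boldsymbol c,\boldsymbol i}\Tr\cD_\ell^\dagger(\ketbra{\boldsymbol c,\boldsymbol i}{\boldsymbol c,\boldsymbol i})=\dim M$.

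This proves your conjectured $\dim M\cdot\max\kappa$ bound. More usefully, it lets you do the typicality split directly on this classical sum. Keep the terms with $\kappa_{\boldsymbol c,\boldsymbol i}\le2^{-\ell(\qentropy(\rho_{\Phi,t})-\delta)}$ and bound the rest by Hoeffding, using that $-\log_2\kappa_{c,i}$ is bounded uniformly for small $t$ by Lemma~\ref{lem:uniform-positive}. No typical projector on $R^n$ and no gentle-measurement step is needed.

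\textbf{Free positions.} These need no $d^{2\delta n}$ factor. Tracing out $A_SR_S$ gives $g_S=g_\ell(\Phi,t,\cT_{n,S})$. In $\cT_{n,S}$, Alice feeds $\rho_{\Phi,t}^{\otimes S}$ locally and Bob discards those outputs, so the code still uses the same message system $M_n$ and the lemma applies with $\ell\ge(1-\eta)n$.
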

\end{tcolorbox}

By the source reductions in Sec.~\ref{subsec:reduction},
it suffices to consider a redundancy-free ensemble whose
states are positive definite.
The proof proceeds as follows.
In Sec.~\ref{subsec:bound_KIoverlap}, we bound the KI overlap in terms of the message-system dimension and the entropy of the perturbed average state (Lemma~\ref{lem:memory-bound}).
Then, in Sec.~\ref{subsec:proof_strong_converse}, we combine the fidelity bound in Lemma~\ref{lem:fidelity-comparison} with the KI-overlap bound in Lemma~\ref{lem:memory-bound} to prove Theorem~\ref{thm:main}.

\subsection{Bound on KI overlap}
In this subsection, we bound the KI overlap in terms of the message-system dimension.
\label{subsec:bound_KIoverlap}
\begin{lemma}[Dimension bound on the KI overlap]
\label{lem:memory-bound}
Let $\Phi$ be a redundancy-free ensemble with positive-definite source states.
There exists $t_1\in(0,t']$, depending only on $\Phi$, with
the following property.
For every $\delta>0$, there exists $v>0$, depending only on $\Phi$ and $\delta$, such that for every $t\in (0,t_1]$, every positive integer $\ell$, and every channel $\cT_\ell=\cD_\ell\circ\cE_\ell$ with
$\cE_\ell: A^\ell \to M$ and $\cD_\ell: M \to A^\ell$, it holds that
\begin{equation}
\label{eq:memory-bound}
g_\ell(\Phi,t,\cT_\ell)
\leq(\dim M)\,2^{-\ell(S(\rho_{\Phi,t})-\delta)}
+e^{-v\ell}.
\end{equation}
\end{lemma}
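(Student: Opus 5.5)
The plan is to rewrite the KI overlap as an average over block sequences $c^\ell\in\Xi^\ell$ of entanglement fidelities, bound each one by a Schmidt-rank argument, and finish with a joint typicality estimate on block labels and within-block eigenvalues.

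\emph{Step 1: block decomposition of the overlap.} For $c^\ell=(c_1,\dots,c_\ell)$ write $P_{c^\ell}=\bigotimes_j P_{c_j}$ and $s_{c^\ell}=\prod_j s_{c_j}$. Then $\Pi_{\Phi,t}^{\otimes\ell}=\sum_{c^\ell}\ketbra{\varphi_{c^\ell}}{\varphi_{c^\ell}}$ with $\ket{\varphi_{c^\ell}}=\bigotimes_j\ket{\varphi_{\Phi,t,c_j}}=s_{c^\ell}^{-1/2}\vk{P_{c^\ell}\sqrt{\rho_{\Phi,t}^{\otimes\ell}}}$. Take Kraus operators $\{K_k\}$ of $\cT_\ell$ and use \eqref{eq:vec_property}--\eqref{eq:vec_trace}. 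Because $\rho_{\Phi,t}$ is block diagonal, $\sqrt{\rho_{\Phi,t}}$ commutes with the KI projectors. This should give
\begin{equation}
 g_\ell(\Phi,t,\cT_\ell)=\sum_{c^\ell}s_{c^\ell}\sum_k\bigl|\Tr(K_k\,\omega_{c^\ell})\bigr|^2,
 \qquad
 \omega_{c^\ell}\coloneqq\bigotimes_{j}\rho_{\Phi,t,c_j}.
\end{equation}
Equivalently, $g_\ell=\sum_{c^\ell}s_{c^\ell}\bra{\varphi_{c^\ell}}(\cT_\ell\otimes\id)(\varphi_{c^\ell})\ket{\varphi_{c^\ell}}$, which is the average entanglement fidelity of $\cT_\ell$ on the block states $\omega_{c^\ell}$. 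It is cleaner to work with this second form.

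\emph{Step 2: dimension bound via Schmidt rank.} Set $d=\dim M$. The state $(\cE_\ell\otimes\id_{R^\ell})(\varphi_{c^\ell})$ on $M R^\ell$ is a mixture of pure states of Schmidt rank at most $d$. The decoder $\cD_\ell$ acts locally on the $M$ side, so each of its Kraus operators keeps the Schmidt rank across $A^\ell{:}R^\ell$ at most $d$. Hence $(\cT_\ell\otimes\id)(\varphi_{c^\ell})$ is a mixture of pure states of Schmidt rank at most $d$. The overlap of any such state with $\ket{\varphi_{c^\ell}}$, whose Schmidt coefficients are the eigenvalues of $\omega_{c^\ell}$, is at most the sum of the $d$ largest eigenvalues of $\omega_{c^\ell}$. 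Therefore
\begin{equation}
 g_\ell\le\sum_{c^\ell}s_{c^\ell}\,\Lambda_d(\omega_{c^\ell}),
\end{equation}
where $\Lambda_d$ denotes the sum of the $d$ largest eigenvalues.

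\emph{Step 3: joint typicality.} Summing over all $c^\ell$ naively would cost a factor $|\Xi|^\ell$, so typicality is needed on two levels. The eigenvalues of $\rho_{\Phi,t}^{\otimes\ell}$ are products $\prod_j s_{c_j}\lambda_{i_j|c_j}$, i.e.\ the law of i.i.d.\ pairs $(c,i)$ with probabilities $\kappa_{c,i}$. Call a pair of sequences $(c^\ell,i^\ell)$ typical if $-\frac1\ell\log_2\prod_j\kappa_{c_j,i_j}\ge S(\rho_{\Phi,t})-\delta$. Split each $\Lambda_d(\omega_{c^\ell})$ into typical and atypical eigenvectors.

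For the typical part, bound the contribution by $d$ times the largest typical eigenvalue. Since $s_{c^\ell}\cdot(\text{eigenvalue of }\omega_{c^\ell})=\kappa_{c^\ell,i^\ell}\le2^{-\ell(S(\rho_{\Phi,t})-\delta)}$, the factor $s_{c^\ell}$ is absorbed. I still have to arrange that only one $c^\ell$ effectively contributes per term of the budget $d$. I would handle this by applying the Schmidt-rank bound of Step 2 jointly rather than block by block. Concretely, $\sum_{c^\ell}s_{c^\ell}\bra{\varphi_{c^\ell}}\tau\ket{\varphi_{c^\ell}}$ is controlled by a single rank-$d$ state against $\ket{\psi^{\otimes\ell}}$ restricted to $\Pi^{\otimes\ell}$. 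Alternatively, compress first with the typical projector $\Pi_{\rm typ}$ of $\rho_{\Phi,t}^{\otimes\ell}$, which commutes with the $P_{c^\ell}$, and bound $\Tr(\Pi_{\rm typ}\cdot)$ by $d\,\|\Pi_{\rm typ}\rho^{\otimes\ell}_{\Phi,t}\Pi_{\rm typ}\|_\infty$ using the standard Schumacher argument.

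The atypical mass is at most $\Pr[(c^\ell,i^\ell)\text{ atypical}]\le e^{-v\ell}$ by Hoeffding's inequality. This uses that $-\log_2\kappa_{c,i}$ is bounded.

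\emph{Main obstacle.} I expect two difficulties. The first is getting Step 3 exactly right, without a $|\Xi|^\ell$ blow-up. The bound must come from the total rank $d$ against the eigenvalues of $\rho_{\Phi,t}^{\otimes\ell}$ as a whole, not per block, so Step 2 may need to be redone with a single typical projector inserted before taking the block sum. The second is uniformity of $v$ in $t\in(0,t_1]$. Choosing $t_1$ small, continuity gives $\kappa_{\Phi,t,c,i}\ge\frac12\min\kappa_{\Phi,0,c,i}>0$, using $\rho_\Phi>0$. Then $-\log_2\kappa$ has a $t$-independent range, and Hoeffding yields $v=\delta^2/(2B^2)$ with $B$ depending only on $\Phi$.
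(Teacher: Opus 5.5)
\textbf{Gap: Steps 2--3 cannot reach the stated bound.} Your Step~1 is correct; it is the paper's starting formula. Your Hoeffding estimate and the uniform lower bound on $\kappa_{c,i}$ also match the paper. The problem is Step~2 and the repairs you sketch in Step~3. The per-block bound $g_\ell\le\sum_{c^\ell}s_{c^\ell}\Lambda_d(\omega_{c^\ell})$ is true but too weak. Moreover, any Schmidt-rank argument, whether per block or joint, uses only that every Kraus operator of $\cT_\ell$ has rank at most $d$. That information provably cannot give the lemma.

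For a counterexample, take the commuting redundancy-free ensemble $\rho_1=\mathrm{diag}(0.9,0.1)$, $\rho_2=\mathrm{diag}(0.1,0.9)$, $p_1=p_2=1/2$. Its two KI blocks are one-dimensional, so every $\omega_{c^\ell}$ is one-dimensional. Your Step~2 bound then reads $g_\ell\le1$ for every $d\ge1$. A joint version or a typical projector does not help. The completely dephasing channel in the eigenbasis $\{\ket{c^\ell,i^\ell}\}$ of $\rho_{\Phi,t}^{\otimes\ell}$ has rank-one Kraus operators. So every pure component of its output on $\psi_{\Phi,t}^{\otimes\ell}$ has Schmidt rank one, and its reference marginal is automatically correct. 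Yet your Step~1 formula gives $g_\ell=\sum_{c^\ell,i^\ell}s_{c^\ell}\lambda_{i^\ell|c^\ell}^2=1$ in this example. The Schumacher-type rank argument controls overlap with a single pure state, whereas $\Pi_{\Phi,t}^{\otimes\ell}$ has rank $|\Xi|^\ell$. The proof must therefore use that $\cT_\ell$ factors through $M$, not merely that its Kraus operators have small rank.

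\textbf{Missing idea (the paper's route).} First replace each block's entanglement fidelity by an average fidelity on its eigenvectors, using Cauchy--Schwarz with $\sum_{i^\ell}\lambda_{i^\ell|c^\ell}=1$:
\[
\bigl|\Tr(K_k\omega_{c^\ell})\bigr|^2\le\sum_{i^\ell}\lambda_{i^\ell|c^\ell}\bigl|\bra{c^\ell,i^\ell}K_k\ket{c^\ell,i^\ell}\bigr|^2 .
\]
This gives $g_\ell\le\sum_{c^\ell,i^\ell}\kappa_{c^\ell,i^\ell}w_{c^\ell,i^\ell}$ with
\[
w_{c^\ell,i^\ell}\coloneqq\bra{c^\ell,i^\ell}\cT_\ell(\ketbra{c^\ell,i^\ell}{c^\ell,i^\ell})\ket{c^\ell,i^\ell}\in[0,1].
\]
Now use the factorization. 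Since $\cE_\ell$ outputs a state, $w_{c^\ell,i^\ell}=\Tr[\cE_\ell(\ketbra{c^\ell,i^\ell}{c^\ell,i^\ell})\,\cD_\ell^\dagger(\ketbra{c^\ell,i^\ell}{c^\ell,i^\ell})]\le\Tr\cD_\ell^\dagger(\ketbra{c^\ell,i^\ell}{c^\ell,i^\ell})$. Summing over the orthonormal basis gives $\sum w\le\Tr\cD_\ell^\dagger(I)=\dim M$. The dephasing channel evades this budget only because it does not factor through a small system.

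This yields one dimension budget for the whole spectrum, $g_\ell\le\Lambda_{\dim M}(\rho_{\Phi,t}^{\otimes\ell})$, instead of one budget per block. Splitting at $\kappa\le2^{-\ell(S(\rho_{\Phi,t})-\delta)}$ then produces $(\dim M)\,2^{-\ell(S(\rho_{\Phi,t})-\delta)}$ plus the atypical mass. Your Hoeffding step bounds that mass by $e^{-v\ell}$, uniformly for $t\in(0,t_1]$.
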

\begin{proof}
Diagonalize $\rho_{\Phi,t}^{\otimes\ell}$ within the original KI decomposition:
\begin{equation}
\rho_{\Phi,t}^{\otimes\ell}
=\bigoplus_{\boldsymbol c}s_{\boldsymbol c}
\sum_{\boldsymbol i}\lambda_{\boldsymbol i | \boldsymbol c}
|\boldsymbol c,\boldsymbol i\rangle\langle\boldsymbol c,\boldsymbol i|,
\end{equation}
where
\begin{equation}
    \sum_{\boldsymbol c}s_{\boldsymbol c} = 1, \qquad 
    \sum_{\boldsymbol i}\lambda_{\boldsymbol i | \boldsymbol c} = 1 \quad \forall \boldsymbol c.
\end{equation}
Here, we write $\boldsymbol c \coloneqq c_1c_2\ldots c_\ell$, where $c_j \in \Xi$ for all $1\leq j \leq \ell$, 
and the eigenvalues of $\rho_{\Phi,t}^{\otimes\ell}$ are
\begin{equation}
\kappa_{\boldsymbol c,\boldsymbol i}
\coloneqq s_{\boldsymbol c}\lambda_{\boldsymbol i|\boldsymbol c}
=\prod_{j=1}^{\ell}s_{c_j}\lambda_{i_j|c_j}
=\prod_{j=1}^{\ell}\kappa_{c_j,i_j}.
\end{equation}

For any choice of Kraus operators $\{T_{\ell,a}\}_a$ of $\cT_\ell$, we have 
\begin{equation}
g_\ell(\Phi,t,\cT_\ell)
=\sum_{\boldsymbol c,a}s_{\boldsymbol c}
\left|\sum_{\boldsymbol i}\lambda_{\boldsymbol i | \boldsymbol c}
\langle\boldsymbol c, \boldsymbol i|T_{\ell,a}|\boldsymbol c, \boldsymbol i\rangle\right|^2.
\end{equation}
For every $\boldsymbol c$ and $a$, the Cauchy-Schwarz inequality gives
\begin{equation}
    \left|\sum_{\boldsymbol i}\lambda_{\boldsymbol i|\boldsymbol c}\braket{\boldsymbol c, \boldsymbol i|T_{\ell,a}|\boldsymbol c, \boldsymbol i}\right|^2
    \leq \left(\sum_{\boldsymbol i} \lambda_{\boldsymbol i|\boldsymbol c}\right)\left(\sum_{\boldsymbol i} \lambda_{\boldsymbol i|\boldsymbol c}\left|\braket{\boldsymbol c, \boldsymbol i|T_{\ell,a}|\boldsymbol c, \boldsymbol i}\right|^2\right)
    =\sum_{\boldsymbol i} \lambda_{\boldsymbol i|\boldsymbol c}\left|\braket{\boldsymbol c, \boldsymbol i|T_{\ell,a}|\boldsymbol c, \boldsymbol i}\right|^2,
\end{equation}
where the equality follows from $\sum_{\boldsymbol i} \lambda_{\boldsymbol i|\boldsymbol c} = 1$.
Hence,
\begin{equation}\label{eq:g_ell_eval_1}
    g_\ell(\Phi,t,\cT_\ell)
    \leq \sum_{\boldsymbol c,a} s_{\boldsymbol c} \sum_{\boldsymbol i} \lambda_{\boldsymbol i|\boldsymbol c} \left|\braket{\boldsymbol c, \boldsymbol i|T_{\ell,a}|\boldsymbol c, \boldsymbol i}\right|^2
    = \sum_{\boldsymbol c}\sum_{\boldsymbol i} s_{\boldsymbol c}\lambda_{\boldsymbol i|\boldsymbol c} \left(\sum_a \left|\braket{\boldsymbol c, \boldsymbol i|T_{\ell,a}|\boldsymbol c, \boldsymbol i}\right|^2\right).
\end{equation}
Define
\begin{align}
    w_{\boldsymbol c, \boldsymbol i}
    =\langle\boldsymbol c, \boldsymbol i|\cT_\ell(|\boldsymbol c, \boldsymbol i\rangle\langle\boldsymbol c, \boldsymbol i|)|\boldsymbol c, \boldsymbol i\rangle
    =\sum_a|\braket{\boldsymbol c, \boldsymbol i|T_{\ell,a}|\boldsymbol c, \boldsymbol i}|^2,
\end{align}
and we have
\begin{equation}
\label{eq:g_ell_upper_bound}
g_\ell\leq\sum_{\boldsymbol c, \boldsymbol i}\kappa_{\boldsymbol c, \boldsymbol i}
w_{\boldsymbol c, \boldsymbol i}.
\end{equation}
By definition, $0\leq w_{\boldsymbol c, \boldsymbol i}\leq 1$, and it holds that
\begin{equation}\label{eq:memory}
 \begin{aligned}
 \sum_{\boldsymbol c}\sum_{\boldsymbol i} w_{\boldsymbol c, \boldsymbol i}
 &= \sum_{\boldsymbol c}\sum_{\boldsymbol i} \Tr \left[
   \ketbra{\boldsymbol c, \boldsymbol i}{\boldsymbol c, \boldsymbol i}\cT_\ell(\ketbra{\boldsymbol c, \boldsymbol i}{\boldsymbol c, \boldsymbol i})
 \right]\\
 &=\sum_{\boldsymbol c}\sum_{\boldsymbol i} \Tr\left[
   \cE_\ell(\ketbra{\boldsymbol c, \boldsymbol i}{\boldsymbol c, \boldsymbol i})\cD_\ell^\dagger(\ketbra{\boldsymbol c, \boldsymbol i}{\boldsymbol c, \boldsymbol i})
 \right]\\
 &\leq\sum_{\boldsymbol c}\sum_{\boldsymbol i}\Tr\cD_\ell^\dagger(\ketbra{\boldsymbol c, \boldsymbol i}{\boldsymbol c, \boldsymbol i})\\
 &=\Tr I_{M} = \dim M.
 \end{aligned}
\end{equation}

Let
\begin{equation}
G_\ell(t)=\{(\boldsymbol c, \boldsymbol i):\kappa_{\boldsymbol c, \boldsymbol i}\leq2^{-\ell(S(\rho_{\Phi,t})-\delta)}\},
\qquad
\beta_\ell(t)=\sum_{(\boldsymbol c, \boldsymbol i)\notin G_\ell(t)}\kappa_{\boldsymbol c, \boldsymbol i}.
\end{equation}
Splitting the sum in~\eqref{eq:g_ell_upper_bound} over $G_\ell(t)$ and its complement yields
\begin{equation}
\begin{aligned}
    g_\ell(\Phi,t,\cT_\ell)
    &\leq \sum_{(\boldsymbol c, \boldsymbol i) \in G_\ell(t)} \kappa_{\boldsymbol c, \boldsymbol i} w_{\boldsymbol c, \boldsymbol i} + \sum_{(\boldsymbol c, \boldsymbol i) \notin G_\ell(t)} \kappa_{\boldsymbol c, \boldsymbol i} w_{\boldsymbol c, \boldsymbol i} \\
    & \leq 2^{-\ell(S(\rho_{\Phi,t})-\delta)}\sum_{(\boldsymbol c, \boldsymbol i) \in G_\ell(t)} w_{\boldsymbol c, \boldsymbol i} + \sum_{(\boldsymbol c, \boldsymbol i) \notin G_\ell(t)} \kappa_{\boldsymbol c, \boldsymbol i} \\
    &\leq(\dim M)\,2^{-\ell(S(\rho_{\Phi,t})-\delta)}
+\beta_\ell(t).
\end{aligned}
\end{equation}

It remains to bound $\beta_\ell(t)$.
First, by Lemma~\ref{lem:uniform-positive},
there exist $t_1\in(0,t']$ and $\widetilde \kappa \in (0,1/2]$, depending only on $\Phi$, such that
\begin{equation}
\kappa_{c,i}\geq\widetilde\kappa
\end{equation}
for every $(c,i)$ and $t \in [0,t_1]$.
Put $L \coloneqq -\log_2\widetilde\kappa>0$.
Now, let $(\mathsf{C}_j,\mathsf{I}_j)$, $j=1,\ldots,\ell$, be independent and identically distributed random pairs with
\begin{equation}
\Pr[\mathsf{C}_j=c,\mathsf{I}_j=i]=\kappa_{c,i},
\end{equation}
and set
\begin{equation}
    \mathsf{X}_j=-\log_2\kappa_{\mathsf{C}_j,\mathsf{I}_j}.
\end{equation}
Since
\begin{equation}
0\leq \mathsf{X}_j\leq L,
\qquad
\mathbb E[\mathsf{X}_j]= \sum_{c,i} \kappa_{c,i}(-\log_2 \kappa_{c,i})=S(\rho_{\Phi,t}),
\end{equation}
Hoeffding's inequality~\cite[Sec.~12.2]{GrimmettStirzaker2001} gives
\begin{equation}
\begin{aligned}
\beta_\ell(t)
&=\Pr\!\left[
\prod_{j=1}^{\ell}\kappa_{\mathsf{C}_j,\mathsf{I}_j}
>2^{-\ell(S(\rho_{\Phi,t})-\delta)}\right]\\
&=\Pr\!\left[
\sum_{j=1}^{\ell}\mathsf{X}_j
<\ell(S(\rho_{\Phi,t})-\delta)\right]\\
&\leq\Pr\!\left[
\sum_{j=1}^{\ell}
\bigl(\mathsf{X}_j-S(\rho_{\Phi,t})\bigr)
\leq-\ell\delta\right]\\
&\leq\exp\!\left(-\frac{2\delta^2}{L^2}\ell\right).
\end{aligned}
\end{equation}
Taking $v=2\delta^2/L^2$ proves the claim.

\end{proof}

\subsection{Proof of strong converse}
\label{subsec:proof_strong_converse}
We now combine Lemma~\ref{lem:fidelity-comparison} and Lemma~\ref{lem:memory-bound} to prove Theorem~\ref{thm:main}.

\begin{proof}[Proof of Theorem~\ref{thm:main}]
By the reductions in Sec.~\ref{subsec:reduction}, it suffices to consider a redundancy-free ensemble with positive-definite source states.

Let $t_0,C,b>0$ be the constants supplied by Lemma~\ref{lem:fidelity-comparison}.
Write $\Pi=\Pi_{\Phi,t}$, $\Pi^\perp=I-\Pi$, and set
\begin{equation}
\Theta_t \coloneqq \Pi+e^{-bt}\Pi^\perp,
\qquad
k \coloneqq \lfloor\eta n\rfloor,
\end{equation}
where $0<\eta<1/2$ and $t>0$ will be fixed below.
For each subset $S\subseteq[n]$, let
\begin{equation}
P_S \coloneqq \bigotimes_{i=1}^n
\begin{cases}
\Pi^\perp,&i\in S,\\
\Pi,&i\notin S.
\end{cases}
\qquad
P_{\leq k} \coloneqq \sum_{\substack{S\subseteq[n]\\|S|\leq k}}P_S.
\end{equation}
Since the projectors $P_S$ are mutually orthogonal, we have
\begin{equation}
\Theta_t^{\otimes n}
=\sum_{S\subseteq[n]}e^{-bt|S|}P_S
\leq P_{\leq k}+e^{-bt(k+1)}I
\leq P_{\leq k}+e^{-bt\eta n}I.
\end{equation}
Moreover,
\begin{equation}
P_{\leq k}\leq \sum_{\substack{S\subseteq[n]\\|S|\leq k}}I_{A_SR_S}\otimes\Pi^{\otimes S^c}.
\end{equation}
Here and below, tensor factors are reordered and placed at the indicated positions.
Define
\begin{equation}
g_S \coloneqq \Tr\!\left[
\Omega_{\Phi,t}(\cT_n)
(I_{A_SR_S}\otimes\Pi^{\otimes S^c})\right].
\end{equation}
By Lemma~\ref{lem:fidelity-comparison}, we have 
\begin{equation}
\begin{aligned}
    \label{eq:fidelity_bound}
    F_n
    &\leq e^{Ct^2n}\Tr\!\left[\Omega_{\Phi,t}(\cT_n)\Theta_t^{\otimes n}\right]\\
    &\leq e^{Ct^2n}\left(\sum_{\substack{S\subseteq[n]\\|S|\leq k}}\Tr\!\left[\Omega_{\Phi,t}(\cT_n)\left(I_{A_SR_S}\otimes\Pi^{\otimes S^c}\right)\right] + e^{-bt\eta n}\Tr\!\left[\Omega_{\Phi,t}(\cT_n)\right]\right)\\
    &\leq e^{Ct^2n}\left(\sum_{\substack{S\subseteq[n]\\|S|\leq k}}g_S+e^{-bt\eta n}\right)
\end{aligned}
\end{equation}
for all $t \in (0,t_0]$.

Fix $S \subseteq [n]$ with $|S|\leq k$ and set $\ell=n-|S|$.
Define a channel $\cT_{n,S}: A_{S^c} \to A_{S^c}$ by
\begin{equation}
\cT_{n,S}(X_{A_{S^c}}) \coloneqq \Tr_{A_S}
\cT_n(X_{A_{S^c}}\otimes\rho_{\Phi,t}^{\otimes S}).
\end{equation}
Note that by definition
\begin{equation}
\Tr_{A_SR_S}\Omega_{\Phi,t}(\cT_n)
=(\cT_{n,S}\otimes\id_{R_{S^c}})
(\psi_{\Phi,t}^{\otimes\ell}),
\end{equation}
and thus 
\begin{equation}
    g_S=g_\ell(\Phi,t,\cT_{n,S}).
\end{equation}
If $\cT_n=\cD_n\circ\cE_n$, then we can write $\cT_{n,S} = \cD_{n,S} \circ \cE_{n,S}$, where
\begin{equation}
\cE_{n,S}(X)
=\cE_n(X\otimes\rho_{\Phi,t}^{\otimes S}),
\qquad
\cD_{n,S}(Y)=\Tr_{A_S}\cD_n(Y), 
\end{equation}
in particular, $\cE_{n,S}:A_{S^c} \to M_n$ and $\cD_{n,S}: M_n \to A_{S^c}$. 
Now, we can apply Lemma~\ref{lem:memory-bound} to $g_S$. 
Letting $t_1 > 0$ be a constant supplied by Lemma~\ref{lem:memory-bound}, 
for every $\delta > 0$, there exists $v > 0$ such that
\begin{equation}
g_S\leq (\dim M_n)\,2^{-\ell(S(\rho_{\Phi,t})-\delta)} +e^{-v\ell}
\end{equation}
for all $t \in (0,t_1]$.
Here, if $S(\rho_{\Phi,t})-\delta>0$, $\ell\geq n(1-\eta)$ implies that
\begin{equation}
\label{eq:g_s_bound}
g_S\leq
2^{nr-n(1-\eta)(S(\rho_{\Phi,t})-\delta)}
+e^{-vn(1-\eta)}.
\end{equation}

The number of subsets in the sum is bounded as follows:
\begin{equation}
\label{eq:counting_estimate}
\sum_{\substack{S\subseteq[n]\\|S|\leq k}} 
= \sum_{s=0}^{\lfloor\eta n\rfloor}\binom ns
\leq2^{nh_2(\eta)}.
\end{equation}
Indeed, for $s\leq\eta n$,
\begin{equation}
\eta^s(1-\eta)^{n-s}
\geq\eta^{\eta n}(1-\eta)^{(1-\eta)n}
=2^{-nh_2(\eta)},
\end{equation}
and thus
\begin{equation}
    1 \geq \sum_{s=0}^{\lfloor\eta n\rfloor}\binom ns\eta^s(1-\eta)^{n-s} \geq 2^{-nh_2(\eta)}\sum_{s=0}^{\lfloor\eta n\rfloor}\binom ns .
\end{equation}
Combining \eqref{eq:fidelity_bound}, \eqref{eq:g_s_bound}, and \eqref{eq:counting_estimate}, if $S(\rho_{\Phi,t})-\delta>0$, we have 
\begin{equation}
\label{eq:unassisted-central}
F_n\leq e^{Ct^2n}
\left\{
2^{nh_2(\eta)}
\left[
2^{nr-n(1-\eta)(S(\rho_{\Phi,t})-\delta)}
+e^{-vn(1-\eta)}
\right]+e^{-bt\eta n}
\right\}.
\end{equation}

Now, set
\begin{equation}
\Delta=S(\rho_\Phi)-r>0,
\qquad
\delta=\Delta/8.
\end{equation}
By continuity of the von Neumann entropy, there exists a sufficiently small $t_2>0$ such that
\begin{equation}
S(\rho_{\Phi,t})\geq S(\rho_\Phi)-\delta
\end{equation}
for all $0\leq t\leq t_2$.
In particular, we have $S(\rho_{\Phi,t})-\delta \geq S(\rho_\Phi)-2\delta >0$ for this choice. 
Fix the constant $v>0$ supplied by Lemma~\ref{lem:memory-bound} for this choice of $\delta$.
Choose $\eta\in(0,1/2)$ sufficiently small so that
\begin{equation}
\begin{aligned}
A_1&=(\ln2)\left[
(1-\eta)(S(\rho_\Phi)-2\delta)-r-h_2(\eta)
\right]>0,\\
A_2&=v(1-\eta)-(\ln2)h_2(\eta)>0.
\end{aligned}
\end{equation}
This is possible because their limits as $\eta \to 0^+$ are
\begin{align*}
    \lim_{\eta \to 0^+} A_1 &= (\ln2)\left[S(\rho_\Phi)-2\delta-r \right] \geq (\ln2)\frac{3}{4}\Delta > 0 \\
    \lim_{\eta \to 0^+} A_2 &= v > 0.
\end{align*}
Finally, choose $t$ with
\begin{equation}
0<t\leq\min\{t_0,t_1,t_2\}
\quad\text{such that}\quad
Ct^2<\frac12\min\{A_1,A_2,bt\eta\}.
\end{equation}
By taking
\begin{equation}
\widetilde\alpha
=\min\{A_1-Ct^2,A_2-Ct^2,bt\eta-Ct^2\}>0,
\end{equation}
we obtain
\begin{equation}
F_n\leq3e^{-\widetilde\alpha n}.
\end{equation}
Take
\begin{equation}
\alpha=\frac{\widetilde\alpha}{2\ln2},
\qquad
n_0=\max\left\{1,
\left\lceil\frac{2\ln3}{\widetilde\alpha}\right\rceil\right\}.
\end{equation}
Then, for every $n\geq n_0$,
\begin{equation}
F_n\leq3e^{-\widetilde\alpha n}
\leq e^{-\widetilde\alpha n/2}=2^{-\alpha n},
\end{equation}
which proves the claim.
\end{proof}

\section{Strong converse for blind compression with entanglement assistance}
\label{sec:assisted}
This section treats entanglement-assisted compression in two stages.
In Sec.~\ref{subsec:unlimited_assistance}, we first prove an exponential strong converse for blind compression with unlimited entanglement assistance at the optimal assisted quantum communication rate. 
Then, in Sec.~\ref{subsec:strong_converse_rate_region}, we combine this assisted converse bound with the unassisted converse in Sec.~\ref{sec:main_results} to obtain the complete strong-converse qubit-ebit region. 

\subsection{Unlimited entanglement assistance}
\label{subsec:unlimited_assistance}
Here, we establish an exponential strong converse for compression with unlimited entanglement assistance.

\begin{tcolorbox}[
  colback=blue!3,
  colframe=blue!60!black,
  boxrule=0.5pt,
  arc=2mm,
  breakable
]
\begin{theorem}[Entanglement-assisted exponential strong converse]
\label{thm:assisted}
Let $\Phi$ be a quantum ensemble on system $A$.
For every
\begin{equation}
0\leq r<R_{\mathrm{blind},\Phi}^{\mathrm{EA}},
\end{equation}
there exist $\alpha>0$ and a positive integer $n_0$, depending only on
$\Phi$ and $r$, such that the following holds.
For every $n\geq n_0$,
every finite-dimensional source-independent shared state
$\zeta_n \in \Density(\cH_{A_0} \otimes \cH_{B_0})$, and every pair of encoding and decoding channels
\begin{equation}
\cE_n:A^{\otimes n}A_0\longrightarrow M_n,
\qquad
\cD_n:M_nB_0\longrightarrow A^{\otimes n},
\qquad
\log_2\dim M_n\leq nr,
\end{equation}
the effective source channel
\begin{equation}
\cT_n(X)
\coloneqq\cD_n\!\left[
(\cE_n\otimes\id_{B_0})(X\otimes\zeta_n)
\right]
\end{equation}
satisfies
\begin{equation}
F_n(\Phi,\cT_n)\leq2^{-\alpha n}.
\end{equation}
\end{theorem}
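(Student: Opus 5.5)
The plan is to reuse the proof of Theorem~\ref{thm:main} verbatim and change only one ingredient. Lemma~\ref{lem:fidelity-comparison} holds for arbitrary channels $\cT_n$, so it applies to the effective assisted channel. The subset decomposition over $S\subseteq[n]$ with $|S|\le\lfloor\eta n\rfloor$ also goes through unchanged. For the restricted channel I set $\cT_{n,S}(X)=\Tr_{A_S}\cT_n(X\otimes\rho_{\Phi,t}^{\otimes S})$. This is again an entanglement-assisted channel with the same message $M_n$ and the same shared state $\zeta_n$: its encoder is $\cE_n(\,\cdot\otimes\rho_{\Phi,t}^{\otimes S}\otimes\cdot\,)$ and its decoder is $\Tr_{A_S}\circ\cD_n$. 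The only thing to replace is Lemma~\ref{lem:memory-bound}. I will prove an assisted analogue with a squared message dimension and a doubled threshold:
\begin{equation}
 g_\ell(\Phi,t,\cT_\ell)\le(\dim M)^2\,2^{-\ell\left(S(\rho_{\Phi,t})+\sum_c s_{\Phi,t,c}S(\rho_{\Phi,t,c})-2\delta\right)}+2e^{-v\ell}.
\end{equation}
This bound must hold for every assisted channel $\cT_\ell(X)=\cD[(\cE\otimes\id_{B_0})(X\otimes\zeta)]$. Note that $S(\rho_{\Phi})+\sum_c s_{\Phi,c}S(\overline\rho_{\Phi,c})=\centropy(s_\Phi)+2\sum_c s_{\Phi,c}S(\overline\rho_{\Phi,c})=2R^{\mathrm{EA}}_{\mathrm{blind},\Phi}$. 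The final assembly is then identical to the proof of Theorem~\ref{thm:main}, with $nr$ replaced by $2nr$, $S(\rho_{\Phi,t})$ replaced by $2R^{\mathrm{EA}}$ evaluated at $t$ (continuous in $t$), and $\Delta=R^{\mathrm{EA}}_{\mathrm{blind},\Phi}-r$.

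\emph{Step 1 (typical truncation on $R$).} Let $Q_R$ be the projector in $R^\ell$ onto the vectors $\overline{\ket{\boldsymbol c,\boldsymbol i}}$ for which both $\kappa_{\boldsymbol c,\boldsymbol i}\le2^{-\ell(S(\rho_{\Phi,t})-\delta)}$ and $\lambda_{\boldsymbol i|\boldsymbol c}\le2^{-\ell(\sum_c s_cS(\rho_c)-\delta)}$. Write $\vk{\sqrt{\rho_{\Phi,t}}}^{\otimes\ell}=v_{\rm g}+v_{\rm b}$ with $v_{\rm g}=(I\otimes Q_R)\vk{\sqrt{\rho_{\Phi,t}}}^{\otimes\ell}$.
\begin{itemize}
\item The first typicality condition fails with probability at most $e^{-v\ell}$, by Hoeffding exactly as in Lemma~\ref{lem:memory-bound}.
\item For the second, use $-\log_2\lambda_{i|c}=-\log_2\kappa_{c,i}+\log_2 s_c$. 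Its mean is $\sum_c s_cS(\rho_c)$, and it is bounded by Lemma~\ref{lem:uniform-positive}, so Hoeffding applies again.
\end{itemize}
Hence $\|v_{\rm b}\|^2\le2e^{-v\ell}$. Because $Q_R$ acts only on the untouched reference, the operator inequality $(x+y)(x+y)^\dagger\le2xx^\dagger+2yy^\dagger$ gives $g_\ell\le2\Tr[(\cT_\ell\otimes\id)(v_{\rm g}v_{\rm g}^\dagger)\Pi^{\otimes\ell}]+2\|v_{\rm b}\|^2$. The constant factor $2$ is harmless.

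\emph{Step 2 (the key operator inequality, which I expect to be the main obstacle).} Let $\sigma_{MB_0R}=(\cE\otimes\id_{B_0R})(v_{\rm g}v_{\rm g}^\dagger\otimes\zeta)$. I claim $\sigma_{MB_0R}\le(\dim M)\,I_M\otimes\sigma_{B_0R}$, where $\sigma_{B_0R}=\zeta_{B_0}\otimes\Tr_A(v_{\rm g}v_{\rm g}^\dagger)$ by source independence. This is the standard bound $\omega_{MB}\le|M|\,I_M\otimes\omega_B$, which follows from writing $\omega_{MB}$ as a twirl over the Weyl–Heisenberg unitary group on $M$. Next, $\Tr_A(v_{\rm g}v_{\rm g}^\dagger)$ equals the operator $V_R\coloneqq\sum\kappa_{\boldsymbol c,\boldsymbol i}\,\overline{\ket{\boldsymbol c,\boldsymbol i}}\,\overline{\bra{\boldsymbol c,\boldsymbol i}}$, where the sum runs over the typical pairs selected by $Q_R$; it is diagonal in this basis. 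Applying $\cD\otimes\id_R$ preserves the inequality. Hence
\begin{equation}
\omega_{AR}\le(\dim M)\,X_A\otimes V_R,\qquad X_A\coloneqq\cD(I_M\otimes\zeta_{B_0})\ge0,\qquad\Tr X_A=\dim M.
\end{equation}

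\emph{Step 3 (evaluation against $\Pi^{\otimes\ell}$).} Since $\Pi^{\otimes\ell}=\sum_{\boldsymbol c}\ketbra{\varphi_{\boldsymbol c}}{\varphi_{\boldsymbol c}}$ and $V_R$ is diagonal in the Schmidt basis $\{\overline{\ket{\boldsymbol c,\boldsymbol i}}\}$, the cross terms vanish. We get
\begin{equation}
\langle\varphi_{\boldsymbol c}|X_A\otimes V_R|\varphi_{\boldsymbol c}\rangle=\sum_{\boldsymbol i\ {\rm typ}}\lambda_{\boldsymbol i|\boldsymbol c}\kappa_{\boldsymbol c,\boldsymbol i}\langle\boldsymbol c,\boldsymbol i|X_A|\boldsymbol c,\boldsymbol i\rangle.
\end{equation}
Summing over $\boldsymbol c$ gives at most $\max_{\rm typ}(\lambda\kappa)\,\Tr X_A\le(\dim M)\,2^{-\ell(S(\rho_{\Phi,t})+\sum_cs_cS(\rho_c)-2\delta)}$. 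Combined with the factor $\dim M$ from Step 2, this yields the claimed $(\dim M)^2$ bound.

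The delicate point is Step 2. The argument uses the sharper pairing with $\Tr X_A=\dim M$ rather than the crude bound $X_A\le(\dim M)I$; the crude bound only recovers the weaker exponent $\sum_cs_cS(\rho_c)$. It is precisely this pairing, together with the block-orthogonality of the $\varphi_{\boldsymbol c}$, that produces the full exponent $2R^{\mathrm{EA}}_{\mathrm{blind},\Phi}$. Once the lemma is in place, $2r<2R^{\mathrm{EA}}_{\mathrm{blind},\Phi}$ gives a positive $A_1$ for small $\eta$ and $\delta$. The choices of $t$, $\alpha$ and $n_0$ are then made exactly as in the unassisted proof.
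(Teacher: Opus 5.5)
Your proposal is correct and follows the paper's proof essentially step by step. It uses the same reduced assisted channel $\cT_{n,S}$, the same message-dimension domination $\Omega\le(\dim M)^2\sigma\otimes\xi_R$ (your Weyl--Heisenberg twirl gives the paper's pinching bound), and the same exponent $K_{\Phi,t}=\centropy(s_{\Phi,t})+2\sum_c s_{\Phi,t,c}\qentropy(\rho_{\Phi,t,c})$; the remaining differences are cosmetic. You use two typicality conditions, on $\kappa_{\boldsymbol c,\boldsymbol i}$ and on $\lambda_{\boldsymbol i|\boldsymbol c}$, where the paper uses the single condition $s_{\boldsymbol c}\lambda_{\boldsymbol i|\boldsymbol c}^2\le 2^{-\ell(K_{\Phi,t}-\delta)}$, and you truncate the untouched reference $R^\ell$ before the channel via $(x+y)(x+y)^\dagger\le 2xx^\dagger+2yy^\dagger$ rather than after it via the Hilbert--Schmidt triangle inequality, which is equivalent.
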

\end{tcolorbox}

Note that this theorem proves a strong converse for all finite-dimensional source-independent shared states $\zeta_n$ 
while the achievability result in Theorem~\ref{thm:optimal_blind_compression_rate} used a maximally entangled state as a shared resource. 

By the source reductions in Sec.~\ref{subsec:reduction},
it suffices to consider a redundancy-free ensemble whose
states are positive definite.
Lemma~\ref{lem:fidelity-comparison} continues to apply because the effective channel $\cT_n$ is completely positive and trace preserving.
The additional difficulty is that the decoder now receives both the message ($M_n$) and a share ($B_0$) of an arbitrarily large resource, so a bound using the full input dimension of the decoder, as in Lemma~\ref{lem:memory-bound}, would not suffice.
We instead prove an operator domination bound that depends only on the message-system dimension (Lemma~\ref{lem:message-domination}).
We then convert this bound into an estimate on the KI overlap (Lemma~\ref{lem:assisted-overlap}) and use that estimate to prove the strong converse.

Recall the perturbed average and its normalized block states from
Sec.~\ref{subsec:perturbed-source}:
\begin{equation}
\rho_{\Phi,t}
=\bigoplus_{c\in\Xi}s_{\Phi,t,c}\rho_{\Phi,t,c},
\qquad
s_{\Phi,t,c}=\Tr(P_c\rho_{\Phi,t}).
\end{equation}
Define
\begin{equation}
\label{eq:assisted-entropy}
K_{\Phi,t}
\coloneqq H(s_{\Phi,t})
+2\sum_{c\in\Xi}s_{\Phi,t,c}S(\rho_{\Phi,t,c})
=2S(\rho_{\Phi,t})-H(s_{\Phi,t}).
\end{equation}
By definition and continuity,
\begin{equation}
K_{\Phi,0}=2R_{\mathrm{blind},\Phi}^{\mathrm{EA}},
\qquad
\lim_{t\to 0^+}K_{\Phi,t}=K_{\Phi,0}.
\end{equation}

We first prove an operator domination bound controlled by the message-system dimension.
\begin{lemma}[Message-dimension domination]
\label{lem:message-domination}
Let $\xi_{AR}$ be a density operator, and let $\zeta_{A_0B_0}$ be a
finite-dimensional shared state independent of $AR$. Consider channels
\begin{equation}
\cE:AA_0\longrightarrow M,
\qquad
\cD:MB_0\longrightarrow B.
\end{equation}
Define the output state by
\begin{equation}
\Omega_{BR}
\coloneqq(\cD\otimes\id_R)
\!\left[(\cE\otimes\id_{B_0R})
(\xi_{AR}\otimes\zeta_{A_0B_0})\right].
\end{equation}
There is a density operator $\sigma_B$ such that
\begin{equation}
\label{eq:message-domination}
\Omega_{BR}\leq(\dim M)^2\sigma_B\otimes\xi_R.
\end{equation}
\end{lemma}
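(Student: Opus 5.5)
The plan is to work at the level of the intermediate state on $M B_0 R$ just after encoding. There the key fact is an operator inequality of the form
\begin{equation}
\omega_{CD}\leq(\dim C)\, I_C\otimes\omega_D ,
\end{equation}
valid for any bipartite state. Once this holds, positivity of the decoder carries it to $\Omega_{BR}$.

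First, set
\begin{equation}
\omega_{MB_0R}\coloneqq(\cE\otimes\id_{B_0R})(\xi_{AR}\otimes\zeta_{A_0B_0}).
\end{equation}
Since $\cE$ is trace preserving and acts only on $AA_0$, its marginal on $B_0R$ is
\begin{equation}
\omega_{B_0R}=\Tr_{AA_0}(\xi_{AR}\otimes\zeta_{A_0B_0})=\zeta_{B_0}\otimes\xi_R .
\end{equation}
This product structure is exactly where source-independence of $\zeta$ enters.

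Second, I would prove $\omega_{MB_0R}\leq(\dim M)\,I_M\otimes\omega_{B_0R}$ by a unitary twirl. Let $d=\dim M$ and let $\{U_k\}_{k=1}^{d^2}$ be the Heisenberg--Weyl (generalized Pauli) unitaries on $M$. These form a unitary 1-design:
\begin{equation}
\frac1{d^2}\sum_k (U_k\otimes I)\,\omega\,(U_k\otimes I)^\dagger=\frac{I_M}{d}\otimes\omega_{B_0R}.
\end{equation}
Every term in the sum is positive semidefinite, so dropping all terms except $k$ with $U_k=I$ gives
\begin{equation}
\omega_{MB_0R}\leq\sum_k (U_k\otimes I)\,\omega\,(U_k\otimes I)^\dagger=d\, I_M\otimes\zeta_{B_0}\otimes\xi_R .
\end{equation}

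Third, the map $\cD\otimes\id_R$ is completely positive, so it preserves operator inequalities. Applying it yields
\begin{equation}
\Omega_{BR}\leq d\,\cD(I_M\otimes\zeta_{B_0})\otimes\xi_R=d^2\,\sigma_B\otimes\xi_R,
\qquad
\sigma_B\coloneqq\cD\!\left(\tfrac{I_M}{d}\otimes\zeta_{B_0}\right).
\end{equation}
Here $\sigma_B$ is a density operator because $\cD$ is a channel and $\tfrac{I_M}{d}\otimes\zeta_{B_0}$ is a state. This is exactly \eqref{eq:message-domination}.

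The only substantive step is the twirl inequality in the second step. The point to check there is that the bound must not depend on $\dim B_0$, which can be arbitrarily large. It does not, because the twirl acts only on $M$. This is why the factor $(\dim M)^2$ appears rather than the decoder's full input dimension.
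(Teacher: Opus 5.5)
Your proof is correct and follows the paper's argument step for step: the marginal after encoding is the product $\zeta_{B_0}\otimes\xi_R$, this gives the domination $\omega_{MB_0R}\leq(\dim M)\,I_M\otimes\zeta_{B_0}\otimes\xi_R$, and positivity of $\cD\otimes\id_R$ then finishes the argument with the same choice $\sigma_B=\cD(I_M/\dim M\otimes\zeta_{B_0})$. The only difference is how you prove the domination inequality: you use a Heisenberg--Weyl twirl and keep only the identity term, whereas the paper's Lemma~\ref{lem:pinching} applies Cauchy--Schwarz to a pinching in a basis of $M$ and bounds each diagonal block by the marginal, and both give the same constant $\dim M$.
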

\begin{proof}
Put $m\coloneqq\dim M$ and
$\pi_M\coloneqq I_M/m$. The state immediately after encoding is
\begin{equation}
\vartheta_{MB_0R}
\coloneqq(\cE\otimes\id_{B_0R})
(\xi_{AR}\otimes\zeta_{A_0B_0}).
\end{equation}
Since the initial resource is independent of the source and the encoder is trace preserving, the marginal on the joint system $B_0R$ is
\begin{equation}
\vartheta_{B_0R}=\zeta_{B_0}\otimes\xi_R.
\end{equation}
By Lemma~\ref{lem:pinching}, we have
\begin{equation}
\vartheta_{MB_0R}
\leq m I_M\otimes\zeta_{B_0}\otimes\xi_R
=m^2\pi_M\otimes\zeta_{B_0}\otimes\xi_R.
\end{equation}

Define $\sigma_B\coloneqq\cD(\pi_M\otimes\zeta_{B_0})$, which is a
density operator because $\cD$ is a quantum channel. Complete positivity
of $\cD$ implies
\begin{equation}
\Omega_{BR}
=(\cD\otimes\id_R)(\vartheta_{MB_0R})
\leq m^2\sigma_B\otimes\xi_R,
\end{equation}
as required.
\end{proof}
We next apply this domination bound to the canonical purification to obtain a bound on the KI overlap. 
The following lemma gives an upper bound analogous to that of Lemma~\ref{lem:memory-bound}.

\begin{lemma}[Assisted KI-overlap bound]
\label{lem:assisted-overlap}
Let $\Phi$ be redundancy-free with positive-definite source states.
There exists $\widetilde t_1\in(0,t']$, depending only on $\Phi$, with
the following property. For every $\delta>0$, there is
$\widetilde v>0$, depending only on $\Phi$ and $\delta$, such that for
every $t\in(0,\widetilde t_1]$, every positive integer $\ell$, and
every effective channel
\begin{equation}
\cT_\ell(X)
=\cD_\ell\!\left[(\cE_\ell\otimes\id_{B_0})
(X\otimes\zeta_{\ell}^{A_0B_0})\right],
\end{equation}
with 
\begin{equation}
\cE_\ell:A^\ell A_0\longrightarrow M,
\qquad
\cD_\ell:MB_0\longrightarrow A^\ell
\end{equation}
and any finite-dimensional source-independent shared state $\zeta_{\ell}^{A_0B_0}$, 
it holds that 
\begin{equation}
\label{eq:assisted-overlap}
g_\ell(\Phi,t,\cT_\ell)
\leq2(\dim M)^2\,
2^{-\ell(K_{\Phi,t}-\delta)}+2e^{-\widetilde v\ell}.
\end{equation}
\end{lemma}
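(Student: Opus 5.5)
Apply Lemma~\ref{lem:message-domination} with $\xi_{AR}=\psi_{\Phi,t}^{\otimes\ell}$ on $A^\ell R^\ell$. This gives a state $\sigma$ on $A^\ell$ with
\begin{equation}
\Omega_{\Phi,t}(\cT_\ell)\leq(\dim M)^2\,\sigma\otimes(\rho_{\Phi,t}^{\Trans})^{\otimes\ell}.
\end{equation}
The plan is to split $\Pi_{\Phi,t}^{\otimes\ell}$ with a typical projector and treat the two pieces separately. The domination bound handles the typical piece, and an atypical-weight estimate on the input side handles the rest.

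\textbf{Typical projector.} Write $\Pi_{\Phi,t}^{\otimes\ell}=\sum_{\boldsymbol c}\ketbra{\varphi_{\boldsymbol c}}{\varphi_{\boldsymbol c}}$, where
\begin{equation}
\ket{\varphi_{\boldsymbol c}}=\sum_{\boldsymbol i}\sqrt{\lambda_{\boldsymbol i|\boldsymbol c}}\,\ket{\boldsymbol c,\boldsymbol i}\otimes\overline{\ket{\boldsymbol c,\boldsymbol i}}.
\end{equation}
Let $P_{\mathrm{typ}}$ be the projector onto $\overline{\ket{\boldsymbol c,\boldsymbol i}}_R$ for the pairs with $s_{\boldsymbol c}\lambda_{\boldsymbol i|\boldsymbol c}^2\leq 2^{-\ell(K_{\Phi,t}-\delta)}$. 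This is the natural choice because
\begin{equation}
\mathbb E\bigl[-\log_2(s_{\mathsf C}\lambda_{\mathsf I|\mathsf C}^2)\bigr]=H(s)+2\sum_c s_cS(\rho_c)=K_{\Phi,t},
\end{equation}
where $(\mathsf C,\mathsf I)$ is distributed according to $\kappa$. Set $\tilde\varphi_{\boldsymbol c}=(I\otimes P_{\mathrm{typ}})\varphi_{\boldsymbol c}$. The operator inequality
\begin{equation}
\ketbra{\varphi}{\varphi}\leq2\ketbra{\tilde\varphi}{\tilde\varphi}+2\ketbra{\varphi-\tilde\varphi}{\varphi-\tilde\varphi}
\end{equation}
gives $g_\ell\leq 2T_1+2T_2$, which accounts for the factors of $2$ in \eqref{eq:assisted-overlap}.

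\textbf{Typical term.} By the domination bound,
\begin{equation}
T_1\leq m^2\sum_{\boldsymbol c}\braket{\tilde\varphi_{\boldsymbol c}|\sigma\otimes\rho^{\Trans\otimes\ell}|\tilde\varphi_{\boldsymbol c}}.
\end{equation}
Since $\rho^{\Trans\otimes\ell}$ is diagonal in the conjugate basis, with eigenvalue $\kappa_{\boldsymbol c,\boldsymbol i}$ on $\overline{\ket{\boldsymbol c,\boldsymbol i}}$, the cross terms vanish and the sum equals
\begin{equation}
\sum_{(\boldsymbol c,\boldsymbol i)\ \mathrm{typ}}\lambda_{\boldsymbol i|\boldsymbol c}\,\kappa_{\boldsymbol c,\boldsymbol i}\braket{\boldsymbol c,\boldsymbol i|\sigma|\boldsymbol c,\boldsymbol i}.
\end{equation}
Each coefficient satisfies $\lambda_{\boldsymbol i|\boldsymbol c}\kappa_{\boldsymbol c,\boldsymbol i}=s_{\boldsymbol c}\lambda^2_{\boldsymbol i|\boldsymbol c}\leq 2^{-\ell(K-\delta)}$ on the typical set, and $\sum\braket{\cdot|\sigma|\cdot}\leq\Tr\sigma=1$. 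Hence
\begin{equation}
T_1\leq m^2\,2^{-\ell(K_{\Phi,t}-\delta)}.
\end{equation}

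\textbf{Atypical term.} Bounding $\Omega\leq m^2\sigma\otimes\rho^{\Trans\otimes\ell}$ would cost a factor $m^2$ that we cannot afford here. Instead, work directly with the output. Let $\boldsymbol\chi$ range over the atypical pairs and write $Q=I-P_{\mathrm{typ}}$. By Cauchy--Schwarz,
\begin{equation}
\sum_{\boldsymbol c}\braket{\varphi_{\boldsymbol c}-\tilde\varphi_{\boldsymbol c}|\Omega|\varphi_{\boldsymbol c}-\tilde\varphi_{\boldsymbol c}}\leq\Tr[\Omega\,(I\otimes Q)].
\end{equation}
This holds because $\sum_{\boldsymbol c}\ketbra{\varphi_{\boldsymbol c}-\tilde\varphi_{\boldsymbol c}}{\varphi_{\boldsymbol c}-\tilde\varphi_{\boldsymbol c}}=(I\otimes Q)\Pi^{\otimes\ell}(I\otimes Q)\leq I\otimes Q$. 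The right-hand side involves only the $R$ marginal, which $\cT_\ell$ does not touch, so
\begin{equation}
\Tr[\Omega(I\otimes Q)]=\Tr[\rho^{\Trans\otimes\ell}Q]=\Pr_\kappa[\text{atypical}].
\end{equation}

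\textbf{Concentration.} Use Hoeffding's inequality on $\mathsf Y_j=-\log_2(s_{\mathsf C_j}\lambda^2_{\mathsf I_j|\mathsf C_j})$. These variables satisfy $0\leq\mathsf Y_j\leq 2L$ uniformly in $t\in(0,\widetilde t_1]$, with $\widetilde t_1=t_1$ and $L$ taken from Lemma~\ref{lem:uniform-positive}, since $\lambda_{i|c}\geq\kappa_{c,i}$ and $s_c\geq\kappa_{c,i}$. This gives
\begin{equation}
T_2\leq e^{-\widetilde v\ell},\qquad \widetilde v=\frac{\delta^2}{2L^2}.
\end{equation}

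\textbf{Main obstacle.} The delicate step is the typical term: one must verify that the domination bound combined with the block structure yields the exponent $H(s)+2\sum_c s_cS(\rho_c)$ and not merely $S(\rho_{\Phi,t})$. This hinges on the off-diagonal terms vanishing in the conjugate eigenbasis of $\rho^{\Trans\otimes\ell}$ on $R$.
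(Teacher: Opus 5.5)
Your proof is correct and follows the paper's argument essentially step for step. The shared steps are the message-dimension domination applied to $\psi_{\Phi,t}^{\otimes\ell}$, the reference-side typical projector with threshold $s_{\boldsymbol c}\lambda_{\boldsymbol i|\boldsymbol c}^2\le 2^{-\ell(K_{\Phi,t}-\delta)}$, the vanishing cross terms in the conjugate eigenbasis, the atypical weight read off from the untouched marginal $(\rho_{\Phi,t}^{\Trans})^{\otimes\ell}$, and Hoeffding with range $2\log_2(1/\widetilde\kappa)$. The only cosmetic difference is that you split with the per-block operator inequality $\ketbra{\varphi}{\varphi}\le 2\ketbra{\tilde\varphi}{\tilde\varphi}+2\ketbra{\varphi-\tilde\varphi}{\varphi-\tilde\varphi}$, whereas the paper applies the Hilbert--Schmidt triangle inequality to $\|\Pi_\ell\Omega^{1/2}\|_2$; both give the same factors of $2$.
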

\begin{proof}
For the proof, we use the following shorthand:
\begin{equation}
\rho\coloneqq\rho_{\Phi,t},\quad
s_c\coloneqq s_{\Phi,t,c},\quad
\rho_c\coloneqq\rho_{\Phi,t,c},\quad
\lambda_{i|c}\coloneqq\lambda_{\Phi,t,i|c},\quad
\Pi\coloneqq\Pi_{\Phi,t},\quad
m\coloneqq\dim M.
\end{equation}
We use the block eigenvectors of Sec.~\ref{subsec:perturbed-source},
with their dependence on $\Phi,t$ also suppressed, so that
\begin{equation}
\rho_c=\sum_i\lambda_{i|c}|c,i\rangle\langle c,i|,
\qquad
|\varphi_c\rangle
=\sum_i\sqrt{\lambda_{i|c}}\,
|c,i\rangle_A\otimes\overline{|c,i\rangle}_R.
\end{equation}
Here $|\varphi_c\rangle=|\varphi_{\Phi,t,c}\rangle$, and complex
conjugation is taken in the fixed vectorization basis. In particular,
the reference marginal of the canonical purification is
$\rho^{\Trans}$. Direct contraction in a single block gives
\begin{equation}
\begin{aligned}
\Tr_R\!\left[(I_A\otimes\rho^{\Trans})
|\varphi_c\rangle\langle\varphi_c|\right]
&=\sum_i s_c\lambda_{i|c}^2|c,i\rangle\langle c,i|\\
&=s_c\rho_c^2.
\end{aligned}
\end{equation}
Summing over the blocks yields
\begin{equation}
\Tr_R[(I_A\otimes\rho^{\Trans})\Pi]
=\bigoplus_c s_c\rho_c^2.
\end{equation}

For product indices, put
\begin{equation}
\begin{aligned}
|\boldsymbol c,\boldsymbol i\rangle
\coloneqq\bigotimes_{j=1}^{\ell}|c_j,i_j\rangle,
\qquad
s_{\boldsymbol c}\coloneqq\prod_{j=1}^{\ell}s_{c_j},
\qquad
\lambda_{\boldsymbol i|\boldsymbol c}
\coloneqq\prod_{j=1}^{\ell}\lambda_{i_j|c_j}.
\end{aligned}
\end{equation}
Set
\begin{equation}
a_\ell(t)\coloneqq2^{-\ell(K_{\Phi,t}-\delta)},
\qquad
\widetilde G_\ell(t)
\coloneqq\{(\boldsymbol c,\boldsymbol i):
s_{\boldsymbol c}\lambda_{\boldsymbol i|\boldsymbol c}^{2}
\leq a_\ell(t)\}.
\end{equation}
Define the projector onto the space corresponding to $\widetilde G_\ell(t)$ by
\begin{equation}
P_{\widetilde G_\ell(t)}
\coloneqq\sum_{(\boldsymbol c,\boldsymbol i)\in\widetilde G_\ell(t)}
\overline{\ket{\boldsymbol c,\boldsymbol i}}\,
\overline{\bra{\boldsymbol c,\boldsymbol i}}.
\end{equation}
Write
\begin{equation}
P_\ell\coloneqq I_{A^\ell}\otimes
P_{\widetilde G_\ell(t)},
\qquad
\Pi_\ell\coloneqq\Pi^{\otimes\ell},
\qquad
\Omega\coloneqq\Omega_{\Phi,t}(\cT_\ell).
\end{equation}

Applying Lemma~\ref{lem:message-domination} to $\Omega$ gives a state $\sigma$ on $A^\ell$ such that
\begin{equation}
\Omega\leq m^2\sigma\otimes(\rho^{\Trans})^{\otimes\ell}
\end{equation}
since $\Omega_{R^\ell}$ is equal to the reference marginal of the $\ell$-fold canonical purification, which is $(\rho^\Trans)^{\otimes \ell}$. 
Then, direct computation gives
\begin{equation}
\begin{aligned}
B&\coloneqq\Tr_{R^\ell}\!\left[
\bigl(I_{A^\ell}\otimes(\rho^{\Trans})^{\otimes\ell}\bigr)
P_\ell\Pi_\ell P_\ell\right]\\
&=\sum_{(\boldsymbol c,\boldsymbol i)\in\widetilde G_\ell(t)}
s_{\boldsymbol c}\lambda_{\boldsymbol i|\boldsymbol c}^{2}
|\boldsymbol c,\boldsymbol i\rangle
\langle\boldsymbol c,\boldsymbol i|\\
&\leq a_\ell(t) I_{A^\ell}.
\end{aligned}
\end{equation}
Consequently,
\begin{equation}
\Tr(\Omega P_\ell\Pi_\ell P_\ell)
\leq m^2\Tr(\sigma B)\leq m^2a_\ell(t).
\end{equation}
On the other hand, 
\begin{equation}
\widetilde \beta_\ell(t)
\coloneqq\Tr[\Omega(I-P_\ell)]
=\sum_{(\boldsymbol c,\boldsymbol i)\notin\widetilde G_\ell(t)}
s_{\boldsymbol c}\lambda_{\boldsymbol i|\boldsymbol c}.
\end{equation}
By the Hilbert--Schmidt triangle inequality, 
we separate the two parts:
\begin{equation}
\begin{aligned}
\sqrt{g_\ell(\Phi,t,\cT_\ell)}
&=\|\Pi_\ell\Omega^{1/2}\|_2\\
&\leq\|\Pi_\ell P_\ell\Omega^{1/2}\|_2
+\|\Pi_\ell(I-P_\ell)\Omega^{1/2}\|_2\\
&\leq m\sqrt{a_\ell(t)}+\sqrt{\widetilde \beta_\ell(t)}.
\end{aligned}
\end{equation}
Squaring and using $(u+v)^2\leq2u^2+2v^2$ gives
\begin{equation}
g_\ell(\Phi,t,\cT_\ell)
\leq2m^2a_\ell(t)+2\widetilde \beta_\ell(t).
\end{equation}

Let $(\mathsf{C},\mathsf{I})$ be a pair of random indices with joint distribution
\begin{equation}
\Pr[\mathsf{C}=c,\mathsf{I}=i]=s_{c}\lambda_{i|c}.
\end{equation}
The random variable
\begin{equation}
\mathsf{Y}\coloneqq-\log_2s_{\mathsf{C}} -2\log_2\lambda_{\mathsf{I}|\mathsf{C}}
\end{equation}
has expectation
\begin{equation}
\mathbb E[\mathsf{Y}]
=H(s_{\Phi,t})+2\sum_{c \in \Xi}s_{c}S(\rho_{c})
=K_{\Phi,t}.
\end{equation}
By Lemma~\ref{lem:uniform-positive}, there exist $\widetilde t_1\in(0,t']$ and $\widetilde \kappa \in (0,1/2]$, depending only on $\Phi$, such that
\begin{equation}
s_{c}\lambda_{i|c}\geq\widetilde \kappa
\end{equation}
for every pair $(c,i)$ and every $t\in[0,\widetilde t_1]$.
Since $0<s_{c}\leq1$,
\begin{equation}
s_{c}\lambda_{i|c}^2
=\frac{(s_{c}\lambda_{i|c})^2}{s_{c}}
\geq\widetilde\kappa^2.
\end{equation}
Thus
\begin{equation}
0\leq \mathsf{Y}\leq L\coloneqq2\log_2(1/\widetilde\kappa).
\end{equation}
Thus, for independent copies $\mathsf{Y}_{1},\ldots,\mathsf{Y}_{\ell}$, 
Hoeffding's inequality~\cite{GrimmettStirzaker2001} gives
\begin{equation}
\begin{aligned}
\widetilde \beta_\ell(t)
&=\Pr\!\left[\frac1\ell\sum_{j=1}^{\ell}\mathsf{Y}_j
<K_{\Phi,t}-\delta\right]\\
&\leq\exp\!\left(-\frac{2\delta^2}{L^2}\ell\right).
\end{aligned}
\end{equation}
Taking $\widetilde v\coloneqq2\delta^2/L^2$ proves the claim.
\end{proof}

\begin{remark}
In the proofs of Lemmas~\ref{lem:memory-bound}
and~\ref{lem:assisted-overlap}, 
positive definiteness of the individual source states is actually not required: positive definiteness of the average state, together with continuity of the perturbed average state at zero, suffices for the proofs.
The same overlap bounds also hold when redundant subsystems
are retained, provided that the block purifications include
these subsystems.
In this case, however, the entropy quantities appearing
in the bounds also include the redundant contributions.
We retain the reduced-source assumptions throughout
to apply Lemma~\ref{lem:fidelity-comparison} and to identify
the resulting strong-converse rate thresholds with the optimal compression rates.
\end{remark}

Combining Lemma~\ref{lem:fidelity-comparison}
and Lemma~\ref{lem:assisted-overlap} proves Theorem~\ref{thm:assisted}
by an argument similar to that used in the proof of Theorem~\ref{thm:main}.
\begin{proof}[Proof of Theorem~\ref{thm:assisted}]
    See Appendix~\ref{appendix:proof_assisted}.
\end{proof}

\subsection{Strong converse for qubit-ebit rate region}
\label{subsec:strong_converse_rate_region}
Now, combining the exponential strong converses for the unassisted and unlimited-assistance cases gives a strong converse for the complete qubit-ebit region.
Here, we take the shared resource to be an initial maximally entangled state of Schmidt rank $d_n$, with ebit rate $e_n=\log_2d_n/n$, as in Theorem~\ref{thm:optimal_blind_compression_rate}.
The unlimited-assistance strong converse constrains the quantum rate alone; the unassisted strong converse will additionally constrain the sum of the quantum and entanglement rates.

\begin{corollary}[Exponential converse outside the rate region]
\label{cor:rate-region}
Let $\Phi$ be a quantum ensemble on system $A$.
For every fixed nonnegative rate pair $(r,e)$ satisfying
\begin{equation}
\label{eq:outside-region}
r<R_{\mathrm{blind},\Phi}^{\mathrm{EA}}
\quad\text{or}\quad
r+e<R_{\mathrm{blind},\Phi},
\end{equation}
there exist $\alpha>0$ and a positive integer $n_0$, depending only
on $\Phi,r,e$, such that, for every $n\geq n_0$, every code with an initial maximally entangled resource of Schmidt rank $d_n$ and rates
\begin{equation}
r_n\coloneqq\frac1n\log_2\dim M_n\leq r,
\qquad
e_n\coloneqq\frac1n\log_2d_n\leq e
\end{equation}
satisfies
\begin{equation}
\label{eq:rate_region_fidelity_decay}
F_n(\Phi,\cT_n)\leq2^{-\alpha n}.
\end{equation}
\end{corollary}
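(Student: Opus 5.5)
The plan is to split into the two cases of \eqref{eq:outside-region} and reduce each to a theorem already proved.

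\emph{Case $r<R_{\mathrm{blind},\Phi}^{\mathrm{EA}}$.} Theorem~\ref{thm:assisted} holds for every finite-dimensional source-independent shared state. The maximally entangled state $\Upsilon_{d_n}$ is such a state, whatever its Schmidt rank, so the ebit constraint can be ignored. Theorem~\ref{thm:assisted} with rate $r$ directly gives $\alpha_1>0$ and $n_1$ with $F_n\leq 2^{-\alpha_1 n}$ for all $n\geq n_1$.

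\emph{Case $r+e<R_{\mathrm{blind},\Phi}$.} Here I will turn the entanglement-assisted code into an unassisted code whose message is larger by a factor $d_n$.

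\begin{enumerate}
\item Let Alice prepare $\Upsilon_{d_n}$ locally on $A_0B_0$.
\item She applies $\cE_n$ to $A^nA_0$ and sends $M_nB_0$ as the new message, of dimension $\dim M_n\cdot d_n$.
\item Bob applies $\cD_n$ to $M_nB_0$.
\end{enumerate}

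Define $\cE'_n(X)\coloneqq(\cE_n\otimes\id_{B_0})(X\otimes\Upsilon_{d_n})$, which maps $A^n\to M_nB_0$, and set $\cD'_n\coloneqq\cD_n$. Then $\cD'_n\circ\cE'_n=\cT_n$ exactly, so the fidelity is unchanged. The new message dimension satisfies
\begin{equation*}
\log_2\bigl(\dim M_n\cdot d_n\bigr)=n(r_n+e_n)\leq n(r+e).
\end{equation*}
Theorem~\ref{thm:main} at rate $r+e<R_{\mathrm{blind},\Phi}$ then gives $\alpha_2>0$ and $n_2$ with $F_n\leq2^{-\alpha_2 n}$ for $n\geq n_2$.

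To combine the cases, take $\alpha$ and $n_0$ from whichever case applies. If both apply, take $\alpha=\min\{\alpha_1,\alpha_2\}$ and $n_0=\max\{n_1,n_2\}$. These constants depend only on $\Phi$, $r$, $e$.

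I do not expect a real obstacle. The only points to check are that Theorem~\ref{thm:main} makes no assumption on the message system beyond its dimension, so $M_nB_0$ is admissible, and that $\cE'_n$ is a channel. The latter holds because appending a fixed state and applying $\cE_n\otimes\id$ are both CPTP.
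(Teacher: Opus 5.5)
Your proposal is correct and follows the paper's proof essentially step for step. The paper also handles the first condition by a direct appeal to Theorem~\ref{thm:assisted}. For the second it uses the same unassisted simulation: Alice prepares $\Upsilon_{d_n}$ locally and sends $M_nB_0$, so the effective channel is exactly $\cT_n$ at rate $r_n+e_n\leq r+e$, and Theorem~\ref{thm:main} is then applied at rate $r+e$.
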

\begin{proof}
If
\begin{equation}
    r<R_{\mathrm{blind},\Phi}^{\mathrm{EA}},
\end{equation}
the conclusion follows directly from Theorem~\ref{thm:assisted}.
We therefore only need to consider the second condition,
\begin{equation}
    r+e<R_{\mathrm{blind},\Phi}.
\end{equation}
We use the unassisted converse (Theorem~\ref{thm:main}) in this case. 
To this end, we show that every assisted code in the statement can be converted into an unassisted code with the same fidelity and a communication rate equal to the sum of its original qubit and ebit rates.

Let $(\cE_n, \cD_n)$ be such an assisted code.
In the original protocol, Alice and Bob share $\Upsilon_{d_n}$ before communication.
We consider the following unassisted simulation of this code. 
Alice instead prepares the shared state $\Upsilon_{d_n}$ entirely on her side, initially holding both $A_0$ and $B_0$.
She applies the encoder $\cE_n$ to the source $A^n$ and $A_0$, leaving $B_0$ untouched, and sends both the resulting message $M_n$ and $B_0$ to Bob. Bob then applies the decoder $\cD_n$.
This is an unassisted protocol: the auxiliary state is prepared locally, independently of the source, and the transmission of $B_0$ is included in the communication cost.

More explicitly, the new message system is
$\widetilde M_n\coloneqq M_nB_0$, and the encoding and decoding channels are
\begin{equation}
    \widetilde{\cE}_n(X)\coloneqq(\cE_n\otimes\id_{B_0})\bigl(X\otimes\Upsilon_{d_n}\bigr),
    \qquad
    \widetilde{\cD}_n\coloneqq\cD_n.
\end{equation}
The new encoder is a CPTP map from $A^n$ to $M_nB_0$, obtained by composing local state preparation with the original encoder.
For every input $X$,
\begin{equation}
    \widetilde{\cT}_n(X)\coloneqq\bigl(\widetilde{\cD}_n\circ\widetilde{\cE}_n\bigr)(X)=\cT_n(X).
\end{equation}
The simulation therefore preserves the effective source channel, and hence the compression fidelity, exactly.

The extra transmitted system $B_0$ has dimension $d_n$;
Alice's system $A_0$ is used by the encoder and need not be sent.
Consequently, the communication rate of the simulated code is
\begin{equation}
    \widetilde r_n
    =\frac1n\log_2\dim\widetilde M_n
    =\frac1n\log_2\bigl((\dim M_n)d_n\bigr)
    =r_n+e_n\leq r+e.
\end{equation}
We can now apply Theorem~\ref{thm:main} at the fixed rate bound $r+e$.
Since this bound is strictly below $R_{\mathrm{blind},\Phi}$,
there exist $\alpha>0$ and $n_0$, depending only on $\Phi$ and $r+e$, such that, for all $n\geq n_0$,
\begin{equation}
    F_n(\Phi,\cT_n) =F_n(\Phi,\widetilde{\cT}_n) \leq 2^{-\alpha n}.
\end{equation}
The original assisted code was arbitrary, so this establishes the claimed exponential converse under the second alternative as well.
\end{proof}

Together with the known achievability theorem, this identifies the region~\eqref{eq:rate_region} as the strong-converse rate region.

\section{Conclusion}
\label{sec:conclusion}
We have established exponential strong converses for blind compression of finite-dimensional mixed-state sources under a global squared-fidelity criterion.
In the unassisted setting, the strong-converse threshold equals the optimal Koashi--Imoto rate: the entropy of the average state after removal of the redundant subsystems.
At every rate below this threshold, the compression fidelity decays exponentially with the block length, uniformly over arbitrary encoding and decoding channels.
With unlimited entanglement assistance, we have proved an exponential strong converse at the optimal assisted quantum rate.
The bound is uniform over all finite-dimensional shared states independent of the source, even when their dimensions grow arbitrarily with the block length.
Combining the assisted quantum-rate bound with the unassisted converse yields exponential decay outside the complete achievable qubit-ebit region for protocols supplied with an initial maximally entangled resource.
Together with the known achievability theorem by Baghali Khanian and Winter, this identifies the entire region as a strong-converse rate region: allowing any fixed positive global error tolerance below one leaves the region unchanged.

The common ingredient is the KI overlap, which quantifies preservation of the essential information specified by the KI decomposition.
For unassisted protocols, the overlap is controlled by the dimension of the compressed system.
For assisted protocols, a message-dimension domination relation gives a bound independent of the size and form of the shared resource.
Then, we convert bounds on this overlap into bounds on the compression fidelity.

This work suggests several directions for future research.
Our results establish positive decay exponents, yet we do not determine the optimal strong-converse exponents.
Characterizing their dependence on the source and the rates remains an open problem.
It would also be interesting to obtain finite-block-length bounds on the communication rate required to achieve a target fidelity for a given entanglement budget.
Such bounds would quantify the difference between the rate
required at a finite block length and the asymptotic optimal rate.
In addition, it remains to determine whether, and how rapidly, the fidelity vanishes when the rate pairs approach the boundary of the achievable region from outside as the block length increases.
More broadly, the KI-overlap method may be useful for other information-processing tasks in which exact source structure must be related to approximate information preservation.

\begin{acknowledgments}
This work grew out of earlier discussions in 2019 and 2020 with Anurag Anshu, Felix Leditzky, and Debbie Leung about strong converses for quantum data compression.
At the time, we aimed to use the R{\'e}nyi-entropy method introduced in Ref.~\cite{Leditzky2016}, but we could not find a way to apply the technique to blind compression of mixed states.
Although we eventually set the problem aside, Zahra Baghali Khanian's recent progress on strong converses for quantum data compression~\cite{Khanian2022} inspired me to revisit the problem.

I am deeply grateful to Zahra Baghali Khanian for explaining their results and for stimulating discussions.
I thank Anurag Anshu and Felix Leditzky for discussions and ideas from our previous project.
I also thank Debbie Leung for providing constructive suggestions and advice.
The author is supported by a Mike and Ophelia Lazaridis Fellowship, the Funai Foundation, and a Perimeter Residency Doctoral Award.
\end{acknowledgments}

\textbf{Use of artificial intelligence.}
The author used OpenAI's GPT-6 Astra to explore proof ideas,
check proofs, and polish the language.
The author had previously established strong converse results
for classical ensembles (ensembles of mutually commuting states) and had also developed the basic idea of the KI overlap.
During the development of the more general results presented here, the central idea underlying the proof of Lemma~\ref{lem:gap} emerged through discussions with GPT-6 Astra.
The author has verified all mathematical results and arguments and takes full responsibility for the content of the paper.

\bibliographystyle{apsrmp4-2}
\bibliography{strong_converse_blind_compression}

\appendix

\section{General sources with a quantum reference}
\label{appendix:quantum-reference}
In this appendix, we extend the strong converse results to general mixed-state sources with a quantum reference considered in Ref.~\cite{Khanian2020b}, by reducing such sources to ensembles.

We consider a finite-dimensional source $\omega^{AZ}$, where
only $A$ is compressed and $Z$ is an inaccessible quantum reference.
The fidelity criterion requires preservation of the joint state, including its correlations with $Z$. 
On the support of $\omega^A$, write the KI decomposition~\cite{Koashi2002,Khanian2020b}, up to a reordering of tensor factors, as

\begin{equation}
(\Gamma_\omega\otimes I_Z)\omega^{AZ}
(\Gamma_\omega^\dagger\otimes I_Z)
=\bigoplus_c s_{\omega,c}\,
\omega_c^{Q_cZ}\otimes\rho_{R_c}^{(c)}.
\end{equation}
The block probabilities $s_{\omega,c}$ are positive, each
$\omega_c^{Q_cZ}$ is normalized, and $\rho_{R_c}^{(c)}$ is the redundant state, as in Sec.~\ref{subsec:structure_quantum_ensemble}. 
Define the redundancy-free state as
\begin{equation}
\widehat\omega\coloneqq\bigoplus_c s_{\omega,c}\omega_c^{Q_c}, 
\end{equation}
and set 
\begin{equation}
R_{\mathrm{blind},\omega}\coloneqq S(\widehat\omega),
\qquad
R_{\mathrm{blind},\omega}^{\mathrm{EA}}
\coloneqq S(\widehat\omega)-\tfrac12H(s_\omega).
\end{equation}
These are the rate thresholds of the general-source coding
theorem~\cite{Khanian2020b}. 
To transfer the exponential strong converse results,
we use the informationally complete measurement reduction of
Ref.~\cite[Sec.~III]{Khanian2020b}.
Informational completeness ensures that the resulting conditional ensemble has the same KI structure and rate thresholds as the joint source. 
Fidelity monotonicity then bounds the source fidelity by the
ensemble fidelity. 

\begin{corollary}[General mixed-state sources]
\label{cor:general-reference}
Theorems~\ref{thm:main} and~\ref{thm:assisted}, together with
Corollary~\ref{cor:rate-region}, hold for $\omega^{AZ}$ with the thresholds defined above and the squared global fidelity
\begin{equation}
F_n(\omega,\cT_n)\coloneqq F\!\left(
(\omega^{AZ})^{\otimes n},
(\cT_n\otimes\id_{Z^n})((\omega^{AZ})^{\otimes n})\right).
\end{equation}
\end{corollary}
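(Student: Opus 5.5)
The approach is to reduce the general source $\omega^{AZ}$ to an ensemble source by a measurement on the reference, then transfer Theorems~\ref{thm:main} and~\ref{thm:assisted} and Corollary~\ref{cor:rate-region} through fidelity monotonicity. Following Ref.~\cite[Sec.~III]{Khanian2020b}, I fix an informationally complete POVM $\{E_x\}_{x\in\Sigma}$ on $Z$ and apply the measure-and-record channel $\mathcal M(O)=\sum_x\Tr_Z[(I\otimes E_x)O]\otimes\ketbra{x}{x}_X$. This maps $\omega^{AZ}$ to $\rho_{XA}=\sum_xp_x\ketbra{x}{x}\otimes\rho_x$, where $p_x=\Tr[\omega^{Z}E_x]$ and $\rho_x=\Tr_Z[(I\otimes E_x)\omega^{AZ}]/p_x$. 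Labels with $p_x=0$ are discarded, as in Sec.~\ref{subsec:structure_quantum_ensemble}. The ensemble is $\Phi_\omega=\{p_x,\rho_x\}$, and its average is $\omega^A$.

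Step one is the fidelity comparison. The map $\mathcal M^{\otimes n}$ acts only on $Z^n$, so it commutes with $\cT_n\otimes\id$. Monotonicity of fidelity under the channel $\mathcal M^{\otimes n}$ then gives
\[
F_n(\omega,\cT_n)\le F\bigl((\id\otimes\cT_n)(\rho_{XA}^{\otimes n}),\rho_{XA}^{\otimes n}\bigr)=F_n(\Phi_\omega,\cT_n).
\]
This holds for every encoder, decoder and shared state, because the comparison only uses that the effective channel acts on $A^n$. Any upper bound from the ensemble theorems therefore applies verbatim, with the same message dimension and entanglement resources.

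Step two, which I expect to be the main obstacle, is to show that $\Phi_\omega$ has the same KI structure as $\omega^{AZ}$, so that $R_{\mathrm{blind},\Phi_\omega}=R_{\mathrm{blind},\omega}$ and $R^{\mathrm{EA}}_{\mathrm{blind},\Phi_\omega}=R^{\mathrm{EA}}_{\mathrm{blind},\omega}$. The plan is as follows.
\begin{itemize}
\item By informational completeness, the operators $\{E_x\}$ span $\Lin(\cH_Z)$. Hence the linear span of $\{p_x\rho_x\}$ equals the span of $\{\Tr_Z[(I\otimes Y)\omega^{AZ}]:Y\in\Lin(\cH_Z)\}$.
\item Both KI decompositions are characterized by the algebra of operators that commute with these families, or equivalently by the set of channels fixing them (Ref.~\cite{Koashi2002}). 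Since the two families have the same span, the decompositions coincide.
\item Consequently each block carries the same weight $s_{\omega,c}=\sum_xp_xq_{c|x}$.
\item The redundancy-free average is $\bigoplus_c s_{\omega,c}\omega_c^{Q_c}=\widehat\omega$.
\end{itemize}
I would cite Ref.~\cite{Khanian2020b} for the statement that the KI decomposition of $\omega^{AZ}$ is defined exactly through this span. I would also check carefully that the positive rescalings in condition 3 of Theorem~\ref{thm:KIdecomp} behave correctly under linear combinations.

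Step three is to conclude. For $r<R_{\mathrm{blind},\omega}$, $r<R^{\mathrm{EA}}_{\mathrm{blind},\omega}$, or a pair $(r,e)$ outside the region, I apply Theorem~\ref{thm:main}, Theorem~\ref{thm:assisted}, or Corollary~\ref{cor:rate-region} to $\Phi_\omega$. Together with step one this gives $F_n(\omega,\cT_n)\le2^{-\alpha n}$, with constants depending only on $\omega$ (through the fixed choice of POVM) and on the rates.
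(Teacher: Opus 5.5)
Your proposal is correct and follows essentially the same route as the paper: an informationally complete POVM on $Z$, monotonicity of fidelity under the measurement channel (which commutes with $\cT_n$), and identification of the KI structures because a channel on $A$ fixes $\omega^{AZ}$ if and only if it fixes every conditional operator $\tau_x$. The paper derives this last equivalence from the dual-frame reconstruction $\omega^{AZ}=\sum_x\tau_x\otimes D_x$, which is the same content as your span argument. The only thing to fix is the phrase ``algebra of operators that commute with these families'': the commutant does not determine the KI decomposition (for a single state with nondegenerate spectrum it is the diagonal algebra, whereas the KI structure is a single, entirely redundant block), so rely only on the preserving-channel characterization, which is what both you and the paper actually use.
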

\begin{proof}
Choose a finite informationally complete POVM $\{M_z\}_z$ on $Z$, and define its unnormalized conditional operators and probabilities by
\begin{equation}
\tau_z\coloneqq\Tr_Z[(I_A\otimes M_z)\omega^{AZ}],
\qquad p_z\coloneqq\Tr\tau_z.
\end{equation}
Discard outcomes with $p_z=0$, and form the finite ensemble
\begin{equation}
\Phi_\omega\coloneqq\{p_z,\tau_z/p_z\}_{z:p_z>0}.
\end{equation}
Its average is $\omega^A$. 
Since $\{M_z\}_z$ is informationally complete, there exists a set $\{D_z\}_z$ of Hermitian operators on $Z$ such that for any operator $V$ on $Z$, 
\begin{equation}
    V = \sum_{z} \tr[M_z V]D_z. 
\end{equation}
Therefore, we have 
\begin{equation}
\label{eq:reconstruction}
\omega^{AZ}=\sum_z\tau_z\otimes D_z.
\end{equation}
Note that $p_z=0$ implies $\tau_z=0$, so the discarded outcomes do not contribute to~\eqref{eq:reconstruction}.
In particular, a channel $\mathcal N$ on $A$ preserves $\omega^{AZ}$ if and only if
it preserves all the conditional operators:
\begin{equation}
(\mathcal N\otimes\id_Z)(\omega^{AZ})=\omega^{AZ}
\quad\Longleftrightarrow\quad
\mathcal N(\tau_z)=\tau_z\quad\text{for every }z.
\end{equation}

This procedure also preserves the KI decomposition. 
The redundant part of $\omega^{AZ}$ appears in every $\tau_z$. Conversely, the KI structure 
\begin{equation}
\tau_z=\bigoplus_c A_{z,c}\otimes\rho_{R_c}^{(c)}
\end{equation}
gives, by~\eqref{eq:reconstruction}, 
\begin{equation}
\omega^{AZ}=\bigoplus_c W_c^{Q_cZ}\otimes\rho_{R_c}^{(c)},
\qquad
W_c^{Q_cZ}\coloneqq\sum_z A_{z,c}\otimes D_z.
\end{equation}
Consequently, the original source and the resulting ensemble have the same maximal KI structure. 
Since their average state on $A$ is also the same, both optimal rate thresholds $R_{\mathrm{blind},\omega}$ and $R_{\mathrm{blind},\omega}^{\mathrm{EA}}$ are preserved under this reduction.

Let $X$ be a (classical) system, 
and define a measurement channel $\mathcal{M}:Z \to X$ by 
\begin{equation}
    \mathcal{M}(V) \coloneqq \sum_z \tr[M_zV]\ketbra{z}{z}_X. 
\end{equation}
Since $\mathcal M^{\otimes n}$ acts only on the reference
systems and the effective protocol channel $\cT_n$ acts only on $A^n$, these channel commute with each other. 
Hence, monotonicity of fidelity gives 
\begin{equation}
F_n(\omega,\cT_n)\leq F_n(\Phi_\omega,\cT_n).
\end{equation}
Applying the ensemble results to the right-hand side proves all three strong converse results. 
\end{proof}

\section{Proof of Lemma~\ref{lem:fidelity-comparison}}
\label{appendix:proof_lem_comparison}
Throughout this appendix, let $\Phi=\{p_x,\rho_x\}_{x\in\Sigma}$ be a redundancy-free ensemble with $p_x>0$ and $\rho_x>0$ for every $x\in\Sigma$, as discussed in Sec.~\ref{subsec:reduction}.

First, we record a consequence of the KI structure theorem~\cite[Theorem~3]{Koashi2002} that characterizes channels that preserve every state of a redundancy-free ensemble. 

\begin{lemma}[Redundancy-free structure of ensemble-preserving operations]
\label{lem:redundancy_free_KI_preserving_channel}
Let $\Phi = \{p_x,\rho_x\}_{x \in \Sigma}$ be a redundancy-free quantum ensemble on system $A$.
Suppose that a channel $\mathcal{N}:A\to A$ satisfies
\begin{equation}
    \mathcal{N}(\rho_x) = \rho_x \quad \forall x\in\Sigma.
\end{equation}
Then, for any choice of Kraus operators $\{N_a\}_a$ of $\mathcal{N}$, we have
\begin{equation}
    N_a \in \Span\{P_c\}_{c \in \Xi} \quad \forall a.
\end{equation}
\end{lemma}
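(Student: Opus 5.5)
The plan is to reduce the statement to the Koashi--Imoto structure theorem for channels that preserve an ensemble, and then use redundancy-freeness to kill every degree of freedom that theorem permits beyond multiples of the $P_c$.

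\emph{Step 1: the structure theorem.} By \cite[Theorem~3]{Koashi2002}, a channel fixing every $\rho_x$ admits a Stinespring isometry $U:\cH_A\to\cH_A\otimes\cH_E$ of the form
\[
U=\bigoplus_c I_{Q_c}\otimes U_{R_c},
\]
where each $U_{R_c}$ is an isometry from $R_c$ to $R_cE$ that preserves $\rho^{(c)}_{R_c}$ after tracing out $E$.

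\emph{Step 2: specialization to redundancy-free ensembles.} Here every $R_c$ is one-dimensional, so $U_{R_c}$ reduces to a unit vector $\ket{e_c}\in\cH_E$. Thus
\[
U=\sum_c P_c\otimes\ket{e_c}.
\]
For any orthonormal basis $\{\ket{a}\}$ of $E$, the induced Kraus operators are $N_a=\sum_c\braket{a|e_c}P_c\in\Span\{P_c\}$.

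\emph{Step 3: independence of the Kraus representation.} Any other Kraus family $\{N'_b\}$ of the same channel satisfies $N'_b=\sum_a u_{ba}N_a$ for some isometry $(u_{ba})$, padding with zero operators if needed. A span is closed under linear combinations, so the property transfers to every Kraus representation.

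\emph{Fallback for Step 1.} If the cited theorem's precise form is unclear, one can argue directly from fixed points. Because $\rho_\Phi>0$ is fixed by $\mathcal N$, the fixed-point algebra of $\mathcal N^\dagger$ is a $*$-algebra $\mathcal F$, and each $N_a$ lies in the commutant $\mathcal F'$ (Lindblad/Arias--Gheondea--Gudder). The algebra generated by the fixed states' modular structure, $\{\rho_x^{it}\rho_\Phi^{-it}\}$, lies in $\mathcal F$. By maximality in Theorem~\ref{thm:KIdecomp}, items 2 and 3, this algebra is exactly $\bigoplus_c \Lin(\cH_{Q_c})$ when there is no redundancy. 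Its commutant is then $\Span\{P_c\}$.

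\textbf{Main obstacle.} The hard part is this identification of $\mathcal F$ with the full block algebra, namely showing that irreducibility and the no-equivalence condition force the generated algebra to be all of $\bigoplus_c\Lin(\cH_{Q_c})$. That is exactly the content of the KI theorem, so I would invoke \cite{Koashi2002} rather than reprove it.
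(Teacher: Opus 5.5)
Your Steps~1--2 are exactly the paper's argument: invoke \cite[Theorem~3]{Koashi2002}, use redundancy-freeness to make each $R_c$ trivial so that the dilation becomes $\sum_c P_c\otimes\ket{e_c}$, and read off $N_a=\sum_c\braket{a|e_c}P_c$. The only difference is that the paper applies the theorem directly to the Stinespring isometry $\sum_a N_a\otimes\ket{a}_E$ built from the given Kraus family, so your Step~3 (transfer via the unitary freedom of Kraus representations) is unnecessary there, though it is valid.
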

\begin{proof}
    Fix an arbitrary Kraus representation
    \begin{equation}
        \mathcal N(X)=\sum_aN_aXN_a^\dagger
    \end{equation}
    and construct a corresponding Stinespring isometry
    \begin{equation}
        U_{\mathcal N} = \sum_aN_a\otimes \ket{a}_E.
    \end{equation}
    The KI structure theorem for operations that preserve the
    ensemble~\cite[Theorem~3]{Koashi2002} implies that, for each block $c$, there is a unit vector $\ket{e_c}_E \in \cH_E$ such that
    \begin{equation}
        U_{\mathcal N}\ket{\psi} = \ket{\psi}\otimes \ket{e_c}_E
        \quad\text{for every }\ket{\psi}\in \cH_{Q_c}.
    \end{equation}
    Consequently,
    \begin{equation}
        N_aP_c=\langle a|e_c\rangle P_c.
    \end{equation}
    By taking the sum over $c$, we have
    \begin{equation}
        N_a= N_a\sum_{c \in \Xi} P_c = \sum_{c \in \Xi}\langle a|e_c\rangle P_c \in \Span\{P_c\}_{c \in \Xi} \quad \forall a,
    \end{equation}
  which completes the proof.
\end{proof}

Now, to prove Lemma~\ref{lem:fidelity-comparison}, we introduce the following operators: 
 \begin{align}
 \label{eq:def_Hx}
 H_x&\coloneqq \ln\rho_x-\ln\rho_\Phi,\\
 \label{eq:def_L}
 L_\Phi&\coloneqq I\otimes S_\Phi^{\Trans}-\sum_{x \in \Sigma}p_xH_x\otimes\rho_x^{\Trans}.
 \end{align}

The operator $L_\Phi$ describes the first-order loss of Holevo information under a weak noise process. 
The key question is which perturbations have vanishing first-order information loss.
Lemma~\ref{lem:gap} characterizes the operators generating these perturbations as elements of the linear span of the KI projectors for a redundancy-free ensemble.
This kernel characterization yields a spectral gap away from the KI subspace and, in turn, the quantitative fidelity bound.

More precisely, for an arbitrary operator $V$, consider the channel
\begin{equation}
    \mathcal N_{\varepsilon,V}(\rho)
=
\sqrt{I-\varepsilon V^\dagger V}\,\rho\,
\sqrt{I-\varepsilon V^\dagger V}
+\varepsilon V\rho V^\dagger
\end{equation}
for sufficiently small $\varepsilon\geq 0$.
Writing $\Phi_{\varepsilon,V}
=\{p_x,\mathcal N_{\varepsilon,V}(\rho_x)\}$ and recalling
\begin{equation}
   \chi_\Phi \coloneqq \Tr S_\Phi
   \,\,\,\, \textrm{where}
   \,\,\,\,
    S_\Phi \coloneqq \sum_{x \in \Sigma}p_x\rho_x\ln\rho_x-\rho_\Phi\ln\rho_\Phi,
\end{equation}
we have
\begin{equation}
    \chi_\Phi-\chi_{\Phi_{\varepsilon,V}}=\varepsilon\langle\!\langle V|L_\Phi|V\rangle\!\rangle +O(\varepsilon^2).
\end{equation}
Note that $\chi_\Phi$ is the Holevo information of $\Phi$ in natural-logarithm units:
\begin{equation}
\chi_\Phi=(\ln2)\left(S(\rho_\Phi)-\sum_{x \in \Sigma}p_xS(\rho_x)\right).
\end{equation}
Therefore, if $L_\Phi \geq 0$, this operator $L_\Phi$ quantifies how much information is lost through the channel $\mathcal N_{\varepsilon,V}$.
In Lemma~\ref{lem:gap}, we show that $L_\Phi \geq 0$ actually holds,
and also prove that the first-order coefficient $\langle\!\langle V|L_\Phi|V\rangle\!\rangle$ vanishes if and only if $\vk{V}$ belongs to the span of the vectorized KI projectors.
\begin{lemma}[Properties of $L_\Phi$]\label{lem:gap}
For $L_\Phi$ defined in~\eqref{eq:def_L}, we have
\begin{enumerate}
    \item[(i)] Positivity:
    \begin{equation}
        \label{eq:positivity_L}
        L_\Phi\geq 0;
    \end{equation}
    \item[(ii)] Kernel:
    \begin{equation}
     \label{eq:kernel_L}
        \ker L_\Phi = \Span\{\vk{P_c}\}_{c \in \Xi}.
    \end{equation}
\end{enumerate}
\end{lemma}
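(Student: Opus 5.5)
The first step is to rewrite the quadratic form of $L_\Phi$ using~\eqref{eq:vec_braket}. Since $\sum_xp_x\rho_x\ln\rho_\Phi=\rho_\Phi\ln\rho_\Phi$, we have $S_\Phi=\sum_xp_x\rho_xH_x$, and for every operator $V$
\begin{equation*}
\vb{V}L_\Phi\vk{V}=\sum_{x}p_x\Bigl(\Tr(V^\dagger V\rho_xH_x)-\Tr(V^\dagger H_xV\rho_x)\Bigr).
\end{equation*}
I would first check that this quantity is real for every $V$, which makes $L_\Phi$ Hermitian, and then show that it equals the first-order coefficient in $\chi_\Phi-\chi_{\Phi_{\varepsilon,V}}$. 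The latter is a standard Fr\'echet-derivative computation: expand $\mathcal N_{\varepsilon,V}(\rho_x)=\rho_x+\varepsilon(V\rho_xV^\dagger-\tfrac12\{V^\dagger V,\rho_x\})+O(\varepsilon^2)$ and use $\frac{d}{d\varepsilon}\Tr f(\rho+\varepsilon X)=\Tr(f'(\rho)X)$ with $f(u)=u\ln u$. The only input needed here is that all $\rho_x$ and $\rho_\Phi$ are positive definite.

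\emph{Positivity (i).} By monotonicity of the Holevo quantity (data processing for $\sum_xp_xD(\rho_x\|\rho_\Phi)$) under the channel $\mathcal N_{\varepsilon,V}$, we have $\chi_\Phi-\chi_{\Phi_{\varepsilon,V}}\ge0$ for all small $\varepsilon\ge0$. Dividing by $\varepsilon$ and letting $\varepsilon\to0^+$ gives $\vb{V}L_\Phi\vk{V}\ge0$ for every $V$, so $L_\Phi\ge0$.

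\emph{Kernel, inclusion $\supseteq$.} Take $V=\sum_ca_cP_c$. By~\eqref{eq:redundancy_free_ensemble}, $V$ commutes with every $\rho_x$ and every $H_x$, and $V^\dagger V=\sum_c|a_c|^2P_c$. Substituting, both terms in the quadratic form equal $\sum_c|a_c|^2\Tr(P_c\rho_xH_x)$, so the form vanishes. Because $L_\Phi\ge0$, a vanishing quadratic form forces $L_\Phi\vk{V}=0$.

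\emph{Kernel, inclusion $\subseteq$.} This is the main obstacle. Given $\vb{V}L_\Phi\vk{V}=0$, I would write the form as a manifest sum of squares. Using the integral representation
\begin{equation*}
\ln a-\ln b=\int_0^\infty\bigl[(b+s)^{-1}-(a+s)^{-1}\bigr]\,ds,
\end{equation*}
or equivalently the divided-difference (double operator integral) form in the eigenbases of $\rho_x$ and $\rho_\Phi$, I expect to obtain
\begin{equation*}
\vb{V}L_\Phi\vk{V}=\sum_xp_x\sum_{j,k}\mu^{(x)}_{jk}\,\bigl|\bigl(\text{a commutator-type expression in }V\bigr)_{jk}\bigr|^2
\end{equation*}
with strictly positive weights $\mu^{(x)}_{jk}$. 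This is the structure of the Dirichlet form of relative-entropy decay under the Lindbladian with jump operator $V$. Vanishing of every term should force intertwining relations of the form
\begin{equation*}
V\,\rho_\Phi^{-1/2}\rho_x\rho_\Phi^{-1/2}=\rho_\Phi^{-1/2}\rho_x\rho_\Phi^{-1/2}\,V,
\qquad
V\rho_\Phi^{it}=\rho_\Phi^{it}V,
\end{equation*}
for all $x$ and $t$, that is, $V$ lies in the commutant of the algebra generated by these operators. (This algebra is the sufficiency or fixed-point algebra of Koashi--Imoto and Petz.)

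The final step is to identify that commutant with $\Span\{P_c\}_c$ for a redundancy-free ensemble. One route is to note that $V$ in this commutant makes the Lindblad semigroup with jump $V$ preserve every $\rho_x$ exactly. Applying Lemma~\ref{lem:redundancy_free_KI_preserving_channel} to the channels in this semigroup, whose Kraus operators can be taken to include $\sqrt\varepsilon V$ to leading order, then gives $V\in\Span\{P_c\}_c$. Alternatively, the same conclusion follows directly from conditions 2 and 3 of Theorem~\ref{thm:KIdecomp}: a nontrivial element of the commutant would yield either an invariant projection inside a block or an isometric intertwiner between two blocks, both of which are excluded. Deriving the exact sum-of-squares form and reading off the commutation relations is the delicate computation, and I expect it to be the bulk of the work.
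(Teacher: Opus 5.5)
\textbf{There is a genuine gap in the $\subseteq$ inclusion of (ii).} Part (i) and the inclusion $\supseteq$ are correct.

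Your proof of (i) differs from the paper's. You combine the first-order expansion of $\chi_\Phi-\chi_{\Phi_{\varepsilon,V}}$ with monotonicity of the Holevo quantity. The paper instead writes $\vbk{V|L_\Phi|V}=\int_0^\infty\bigl[\sum_xp_xm_{\rho_x}(u)-m_{\rho_\Phi}(u)\bigr]du$, where $m_\tau(u)=\max_Zf_{\tau,u}(Z)$ is convex in $\tau$. The paper's route is self-contained, and the same formula also controls the equality case. Data processing gives no handle on that equality case, and this is where your argument breaks down.

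The sum-of-squares identity you say you ``expect'' is never derived, and its predicted output is wrong. Put $g_V(\tau)=\Tr(V^\dagger V\tau\ln\tau)-\Tr(V^\dagger(\ln\tau)V\tau)$, so your form equals $\sum_xp_xg_V(\rho_x)-g_V(\rho_\Phi)$. Take $\rho_x=\sigma_x\otimes\theta$ with $\theta=\mathrm{diag}(\theta_1,\theta_2)$, $\theta_1\neq\theta_2$, and $V=I\otimes|1\rangle\langle2|$. Then $g_V(\rho_x)=g_V(\rho_\Phi)=\theta_2\ln(\theta_2/\theta_1)$, so the form vanishes, yet $V$ does not commute with $\rho_\Phi$. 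Hence no identity valid for all positive-definite ensembles can force $V\rho_\Phi^{it}=\rho_\Phi^{it}V$. Only your first relation, commutation with $\rho_\Phi^{-1/2}\rho_x\rho_\Phi^{-1/2}$, holds in general.

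What vanishing actually gives is weaker. Let $W_\lambda$ be the part of $V$ mapping each $\kappa$-eigenspace of $\rho_\Phi$ into the $\lambda\kappa$-eigenspace, so $V=\sum_\lambda W_\lambda$ and $\rho_\Phi W_\lambda=\lambda W_\lambda\rho_\Phi$. Then each component is a twisted intertwiner: $\rho_xW_\lambda=\lambda W_\lambda\rho_x$ for every $x$. The paper obtains this from uniqueness of the maximizer, which gives $Z_{\rho_x}(u)=Z_{\rho_\Phi}(u)=\sum_\lambda W_\lambda/(u+\lambda)$ for all $x$ and $u$, followed by a partial-fraction/polynomial argument.

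Your final step therefore targets the wrong object. Redundancy-freeness has to exclude $\lambda\neq1$, not merely identify a commutant. Your Schur-type route can be repaired, but only as follows:
\begin{itemize}
\item on off-diagonal blocks, use condition~3 of Theorem~\ref{thm:KIdecomp} with scaling $\alpha=\lambda^{\pm1}$ (this is why $\alpha$ appears there);
\item on diagonal blocks, use a trace argument to force $\lambda=1$: $U^\dagger\rho_x^{(c)}U=\lambda\rho_x^{(c)}$ gives $q_{c|x}=\lambda q_{c|x}$.
\end{itemize}

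Your Lindblad route fails as stated, for two reasons. First, even for $V$ commuting with every $\rho_x$, the generator with the single jump $V$ gives $\mathcal L(\rho_x)=(VV^\dagger-V^\dagger V)\rho_x$. This vanishes only when $V$ is normal. Second, Lemma~\ref{lem:redundancy_free_KI_preserving_channel} needs an exact Kraus operator proportional to $V$, not one ``to leading order''.

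The paper avoids both problems with the channel whose Kraus operators are $\sqrt{I-\varepsilon(W^\dagger W+\lambda^{-1}WW^\dagger)}$, $\sqrt{\varepsilon}\,W$ and $\sqrt{\varepsilon/\lambda}\,W^\dagger$, with $W=W_\lambda$. The twisted relation gives $[\rho_x,W^\dagger W]=[\rho_x,WW^\dagger]=0$, so this channel fixes every $\rho_x$ exactly. The lemma then places $W_\lambda$ in $\Span\{P_c\}$.
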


\begin{proof}
\textbf{Proof of (i)}:
We show that $\langle\!\langle V|L_\Phi|V\rangle\!\rangle \geq 0$ for any $V$.

Let $V$ be an arbitrary operator.
For a positive-definite operator $\tau>0$ and a real number $u>0$, define the real-valued function
\begin{equation}
f_{\tau,u}(Z)\coloneqq 2\operatorname{Re}\Tr(Z^\dagger V\tau)-\Tr(Z^\dagger \tau Z)-u\Tr(Z^\dagger Z \tau),
\label{eq:variation}
\end{equation}
and its maximum
\begin{equation}
    m_\tau(u)\coloneqq \max_Z f_{\tau,u}(Z).
\end{equation}
Following the arguments in \cite[Section~9.1]{BoydVandenberghe2004}, we see that the function $f_{\tau,u}$ is strongly concave in $Z$,
and that its unique maximizer, denoted $Z_\tau(u)$, satisfies
\begin{equation}
    \tau Z_\tau(u)+uZ_\tau(u)\tau=V\tau.
\label{eq:stationarity}
\end{equation}
See Appendix~\ref{appendix:quadratic-variational-condition} for details. 

Consider an orthonormal basis in which $\tau$ is diagonal: $\tau=\operatorname{diag}(\tau_i)$.
From~\eqref{eq:stationarity}, it follows that
\begin{equation}
\bigl(Z_\tau(u)\bigr)_{ij}=\frac{\tau_jV_{ij}}{\tau_i+u\tau_j}.
\label{eq:entries}
\end{equation}
Substituting this entrywise expression into~\eqref{eq:variation} gives
\begin{equation}
    m_\tau(u)=\sum_{i,j}\frac{\tau_j^2}{\tau_i+u\tau_j}|V_{ij}|^2.
    \label{eq:entries_m}
\end{equation}

Now, define
\begin{equation}
    g_V(\tau) \coloneqq \Tr(V^\dagger V \tau\ln \tau)
       -\Tr\bigl(V^\dagger(\ln \tau)V \tau\bigr).
\end{equation}
We next establish the integral representation
\begin{equation}
g_V(\tau)=\int_0^\infty
\left[m_\tau(u)-\frac{\Tr(V^\dagger V\tau)}{1+u}\right]du.
\label{eq:integral}
\end{equation}
In the same eigenbasis of $\tau$, we have
\begin{align}
    \Tr(V^\dagger V \tau\ln \tau)
    &=\sum_{i,j} \tau_j|V_{ij}|^2\ln \tau_j,\\
    \Tr\bigl(V^\dagger(\ln \tau)V \tau\bigr)
    &=\sum_{i,j}\tau_j|V_{ij}|^2\ln \tau_i.
\end{align}
Therefore,
\begin{equation}
    g_V(\tau) =\sum_{i,j}\tau_j|V_{ij}|^2 \ln\frac{\tau_j}{\tau_i}.
\end{equation}
On the other hand,~\eqref{eq:entries_m} gives
\begin{equation}
    m_\tau(u)
    =\sum_{i,j}\frac{\tau_j^2}{\tau_i+u\tau_j}|V_{ij}|^2
    =\sum_{i,j}\frac{\tau_j}{u+\tau_i/\tau_j}|V_{ij}|^2.
\end{equation}
Since
\begin{equation}
    \Tr(V^\dagger V\tau)=\sum_{i,j}\tau_j|V_{ij}|^2,
\end{equation}
we obtain
\begin{equation}
    m_\tau(u) -\frac{\Tr(V^\dagger V\tau)}{1+u}=\sum_{i,j}\tau_j|V_{ij}|^2\left(\frac{1}{u+\tau_i/\tau_j}-\frac{1}{u+1}\right).
\end{equation}
Now, for any $r,R>0$,
\begin{equation}
    \int_0^R\left(\frac{1}{u+r}-\frac{1}{u+1}\right)du
    =\ln\frac{R+r}{R+1}-\ln r \xrightarrow[R \to \infty]{} -\ln r.
\end{equation}
Setting $r=\tau_i/\tau_j$ gives
\begin{equation}
    \int_0^\infty\left(\frac{1}{u+\tau_i/\tau_j}-\frac{1}{u+1}\right)du =\ln\frac{\tau_j}{\tau_i}.
\end{equation}
Note that the integral is absolutely convergent: for each $r>0$,
\begin{equation}
    \frac{1}{u+r}-\frac{1}{u+1} =\frac{1-r}{(u+r)(u+1)},
\end{equation}
which is bounded near $u=0$ and is $O(u^{-2})$ as
$u\to\infty$.
Therefore,
\begin{equation}
    \begin{aligned}
        \int_0^\infty\left[m_\tau(u)-\frac{\Tr(V^\dagger V\tau)}{1+u}\right]du
        &=
        \sum_{i,j}\tau_j|V_{ij}|^2
        \int_0^\infty
        \left(
        \frac{1}{u+\tau_i/\tau_j}-\frac{1}{u+1}
        \right)du\\
        &=
        \sum_{i,j}\tau_j|V_{ij}|^2\ln\frac{\tau_j}{\tau_i}\\
        &=g_V(\tau),
    \end{aligned}
\end{equation}
which is~\eqref{eq:integral}.

We now rewrite $\langle\!\langle V|L_\Phi|V\rangle\!\rangle$ using $g_V$.
Applying~\eqref{eq:vec_braket} to the definition~\eqref{eq:def_L},
we obtain
\begin{equation}
    \label{eq:q}
    \begin{aligned}
         &\langle\!\langle V|L_\Phi|V\rangle\!\rangle\\
         &=\Tr(V^\dagger VS_\Phi) -\sum_{x\in\Sigma} p_x\Tr(V^\dagger H_xV\rho_x) \\
         &=\sum_{x\in\Sigma} p_x\Tr(V^\dagger V\rho_x\ln\rho_x)-\Tr(V^\dagger V\rho_{\Phi}\ln\rho_\Phi)-\sum_{x\in\Sigma} p_x\Tr\bigl(V^\dagger(\ln\rho_x)V\rho_x\bigr)+\sum_{x\in\Sigma} p_x\Tr\bigl(V^\dagger(\ln\rho_\Phi)V\rho_x\bigr)\\
         &=\sum_{x\in\Sigma} p_x\underbrace{\left[\Tr(V^\dagger V\rho_x\ln\rho_x)-\Tr\bigl(V^\dagger(\ln\rho_x)V\rho_x\bigr)\right]}_{g_V(\rho_x)}-\underbrace{\left[\Tr(V^\dagger V\rho_\Phi\ln\rho_\Phi)-\Tr\bigl(V^\dagger(\ln\rho_\Phi)V\rho_\Phi\bigr)\right]}_{g_V(\rho_\Phi)} \\
        &= \sum_{x\in\Sigma} p_x g_V(\rho_x)-g_V(\rho_\Phi).
    \end{aligned}
\end{equation}

Combining \eqref{eq:integral} and \eqref{eq:q} therefore gives
\begin{equation}
\vbk{V|L_\Phi|V} = \int_0^\infty
\left[\sum_{x \in \Sigma}p_xm_{\rho_x}(u)-m_{\rho_\Phi}(u)\right]du.
\label{eq:positive}
\end{equation}
Since, for fixed $u$ and $Z$, the function $f_{\tau,u}(Z)$ is linear in $\tau$,
\begin{equation}
m_{\rho_\Phi}(u)
=\max_Z\sum_{x \in \Sigma}p_xf_{\rho_x,u}(Z)
\leq\sum_{x \in \Sigma}p_xm_{\rho_x}(u),
\label{eq:convexity}
\end{equation}
which shows that
\begin{equation}
    \vbk{V|L_\Phi|V} \geq 0.
\end{equation}
Since $V$ is arbitrary, this proves
\begin{equation}
    L_{\Phi}\geq0.
\end{equation}

\textbf{Proof of (ii)}:
We first show $\ker L_\Phi \subseteq \Span\{\vk{P_c}\}_{c \in \Xi}$.
Take an arbitrary linear operator $V$ such that $\vk{V}\in\ker L_\Phi$, i.e., $\vbk{V|L_\Phi|V} = 0$.
The integrand in~\eqref{eq:positive} is continuous and nonnegative, so
\begin{equation}
    \sum_{x \in \Sigma}p_xm_{\rho_x}(u)-m_{\rho_\Phi}(u) = 0
\end{equation}
for every $u > 0$.
By definition,
\begin{equation}
    \sum_{x \in \Sigma}p_x\bigl[m_{\rho_x}(u) -f_{\rho_x,u}(Z_{\rho_{\Phi}}(u))\bigr] = 0.
\end{equation}
Since each term is nonnegative and $p_x>0$,
\begin{equation}
    m_{\rho_x}(u) -f_{\rho_x,u}(Z_{\rho_{\Phi}}(u)) = 0
\end{equation}
for all $x \in \Sigma$.
Since the maximizer is unique, $Z_{\rho_\Phi}(u)=Z_{\rho_x}(u)$ for every $x \in \Sigma$ and every $u>0$.
In particular,
\begin{equation}
\rho_xZ_{\rho_\Phi}(u)+uZ_{\rho_\Phi}(u)\rho_x=V\rho_x
\label{eq:common}
\end{equation}
for every $x \in \Sigma$ and $u > 0$.

Write the spectral decomposition of $\rho_{\Phi}$ as
\begin{equation}
    \rho_{\Phi}=\sum_{a}\kappa_a \ketbra{i_a}{i_a}.
\end{equation}
Let $\Lambda_{\Phi}$ be the finite set of ratios $\kappa_a/\kappa_b$ of eigenvalues of $\rho_{\Phi}$,
and for $\lambda\in\Lambda_{\Phi}$, define
\begin{equation}
W_\lambda \coloneqq \sum_{\kappa_a/\kappa_b=\lambda}\braket{i_a|V|i_b}\ketbra{i_a}{i_b}.
\label{eq:W}
\end{equation}
By definition, we have
\begin{equation}
    \label{eq:V_Wlambda}
    V=\sum_{\lambda\in\Lambda_\Phi}W_\lambda,
\end{equation}
and the explicit solution in~\eqref{eq:entries} yields
\begin{equation}
Z_{\rho_{\Phi}}(u)=\sum_{\lambda\in\Lambda_\Phi}\frac{W_\lambda}{u+\lambda}.
\label{eq:partialfractions}
\end{equation}
Substituting~\eqref{eq:partialfractions} into~\eqref{eq:common}, and using
$u/(u+\lambda)=1-\lambda/(u+\lambda)$, gives
\begin{equation}
\sum_{\lambda\in\Lambda_\Phi}
\frac{\rho_xW_\lambda-\lambda W_\lambda\rho_x}{u+\lambda}=0
\qquad(u>0).
\label{eq:rational}
\end{equation}
Multiplying by $\prod_{\lambda\in\Lambda_\Phi}(u+\lambda)$ gives
\begin{equation}
    \label{eq:matrix_valued_polynomial}
    \sum_{\lambda\in\Lambda_\Phi}\left(\prod_{\lambda^\prime \in\Lambda_\Phi: \lambda^\prime \neq \lambda}(u+\lambda^\prime)\right)\left(\rho_xW_\lambda-\lambda W_\lambda\rho_x\right) = 0.
\end{equation}
Each matrix entry of the left-hand side of~\eqref{eq:matrix_valued_polynomial} is a polynomial of finite degree in $u$, and~\eqref{eq:matrix_valued_polynomial} shows that it vanishes for every $u>0$.
Since a nonzero polynomial has at most as many roots as its degree~\cite[Section~9.5, Proposition~17]{DummitFoote2004}, every entry is the zero polynomial.
Hence, the left-hand side is identically zero and therefore vanishes for all $u \in \mathbb R$.
For any fixed $\lambda \in \Lambda_\Phi$,
evaluating \eqref{eq:matrix_valued_polynomial} at $u=-\lambda$ therefore gives
\begin{equation}
\rho_xW_\lambda=\lambda W_\lambda\rho_x
\label{eq:intertwining}
\end{equation}
for every $x \in \Sigma$.

Fix $\lambda\in\Lambda_\Phi$ such that $W_\lambda\neq0$, and write $W=W_\lambda$.
We construct a channel that preserves every source state.
Set
\begin{equation}
\widetilde W=W^\dagger W+\lambda^{-1}WW^\dagger.
\end{equation}
Choose $0<\varepsilon<\|\widetilde W\|_\infty^{-1}$ and define
\begin{equation}
K_0=\sqrt{I-\varepsilon \widetilde W},\qquad
K_1=\sqrt{\varepsilon}\,W,\qquad
K_2=\sqrt{\varepsilon/\lambda}\,W^\dagger.
\label{eq:kraus}
\end{equation}
Since $\sum_{a=0}^2K_a^\dagger K_a=I$, these matrices define a channel
$\mathcal{N}_W(X)=\sum_{a=0}^2K_aXK_a^\dagger$.
For any source state $\rho=\rho_x$, taking the adjoint of \eqref{eq:intertwining} gives
\begin{equation}
    \rho W^\dagger=\lambda^{-1}W^\dagger \rho.
    \label{eq:intertwining_adjoint}
\end{equation}
By combining~\eqref{eq:intertwining} and~\eqref{eq:intertwining_adjoint}, we obtain
\begin{equation}
    [\rho,W^\dagger W]=[\rho,WW^\dagger]=0.
\end{equation}
Thus $\rho$ commutes with $\widetilde W$ and $K_0$, and
\begin{equation}
    W\rho W^\dagger=\lambda^{-1} \rho WW^\dagger,\qquad
    W^\dagger \rho W=\lambda \rho W^\dagger W.
\end{equation}
It follows that
\begin{equation}
    \begin{aligned}
        \mathcal{N}_W(\rho)
        &=\rho(I-\varepsilon \widetilde W)+\varepsilon W\rho W^\dagger +\frac{\varepsilon}{\lambda}W^\dagger \rho W\\
        &=\rho(I-\varepsilon \widetilde W)+\frac{\varepsilon}{\lambda}\rho WW^\dagger+\varepsilon \rho W^\dagger W \\
        &=\rho.
\end{aligned}
\end{equation}
Since this holds for every $x\in\Sigma$, the channel $\mathcal N_W$ preserves every source state.
Lemma~\ref{lem:redundancy_free_KI_preserving_channel} therefore implies $K_1\in\Span\{P_c\}_{c\in\Xi}$, and hence $W_\lambda\in\Span\{P_c\}_{c\in\Xi}$.
Using~\eqref{eq:V_Wlambda}, we obtain
\begin{equation}
    V\in\Span\{P_c\}_{c \in \Xi},
\end{equation}
and thus
\begin{equation}
\ker L_\Phi\subseteq\Span\{\vk{P_c}\}_{c \in \Xi}.
\end{equation}

For the reverse inclusion, observe that every $P_c$ commutes with $\rho_x$, $\rho_{\Phi}$, and their logarithms.
Moreover,
\begin{equation}
    \label{eq:equivalence_phrho_S}
    \sum_{x \in \Sigma}p_xH_x\rho_x = \sum_{x \in \Sigma}p_x\rho_x\ln\rho_x-(\ln\rho_\Phi)\rho_{\Phi}=S_\Phi.
\end{equation}
Consequently,~\eqref{eq:vec_property},~\eqref{eq:def_L} and~\eqref{eq:equivalence_phrho_S} give
\begin{equation}
L_{\Phi}\vk{P_c}
=\biggl|P_cS_\Phi-\sum_{x \in \Sigma}p_xH_xP_c\rho_x\biggr\rangle\!\biggr\rangle
=0
\end{equation}
for any $c \in \Xi$.
That is, $\vk{P_c} \in \ker L_\Phi$ for all $c \in \Xi$, which proves the reverse inclusion
\begin{equation}
    \ker L_\Phi\supseteq\Span\{\vk{P_c}\}_{c \in \Xi}
\end{equation}
and completes the proof.
\end{proof}

Finally, we prove Lemma~\ref{lem:fidelity-comparison} by using Lemma~\ref{lem:gap}.
\begin{proof}[Proof of Lemma~\ref{lem:fidelity-comparison}]
Apply~\eqref{eq:fidelity_trace} with
\begin{align}
    P &= \rho_{X^nA^n} = \sum_{x^n \in \Sigma^n} p_{x^n}\ketbra{x^n}{x^n} \otimes \rho_{x^n}, \\
    Q &= (\id_{X}^{\otimes n} \otimes \cT_n)(\rho_{X^nA^n}) = \sum_{x^n \in \Sigma^n} p_{x^n}\ketbra{x^n}{x^n} \otimes \cT_n(\rho_{x^n}), \\
    Z_{\mathrm{test}} &= \sum_{x^n \in \Sigma^n} \ketbra{x^n}{x^n} \otimes \left(\bigotimes_{i=1}^n e^{tH_{x_i}}\right)
\end{align}
to obtain
\begin{equation}
    F_n \leq\Tr(P Z_{\mathrm{test}}^{-1})\Tr(Q Z_{\mathrm{test}}).
\end{equation}
The product structure of $P$ and $Z_{\mathrm{test}}$ gives
\begin{equation}
\begin{aligned}
\Tr(PZ_{\mathrm{test}}^{-1})
=\sum_{x^n \in \Sigma^n}\prod_{i=1}^n
  \left[p_{x_i}\Tr
  (\rho_{x_i}e^{-tH_{x_i}})\right]
=\left[\sum_{x \in \Sigma}p_x\Tr
  (\rho_xe^{-tH_x})\right]^n \eqqcolon a(t)^n,
\end{aligned}
\end{equation}
where we set
\begin{equation}
    a(t)=\sum_{x \in \Sigma}p_x\Tr(\rho_xe^{-tH_x}).
\end{equation}
For the second trace, we apply~\eqref{eq:choi_relation} with $\mathcal N=\cT_n$:
\begin{equation}
    \Tr[J_{\cT_n}(Y\otimes X^{\Trans})]
    =\Tr[Y\cT_n(X)].
\end{equation}
This implies that
\begin{equation}
    \begin{aligned}
\Tr(QZ_{\mathrm{test}})
&=\sum_{x^n}p_{x^n}\Tr\!\left[
  \cT_n(\rho_{x^n})
  \bigotimes_{i=1}^n e^{tH_{x_i}}\right]\\
&=\Tr\!\left[
  J_{\cT_n}
  \sum_{x^n}p_{x^n}
  \left(\bigotimes_{i=1}^n e^{tH_{x_i}}\right)
  \otimes\rho_{x^n}^{\Trans}\right]\\
&=\Tr[J_{\cT_n}B(t)^{\otimes n}],
\end{aligned}
\end{equation}
where we set
\begin{equation}
    B(t)=\sum_{x \in \Sigma}p_xe^{tH_x}\otimes\rho_x^{\Trans}.
\end{equation}
Combining the two trace expressions yields
\begin{equation}
    \label{eq:Fn_bound}
    F_n\leq a(t)^n
\Tr[J_{\cT_n}B(t)^{\otimes n}].
\end{equation}
We now bound $a(t)$ and $B(t)$ to obtain the desired bound~\eqref{eq:fidelity-comparison}.

To derive a bound on $B(t)$,
observe that the definitions of $L_\Phi$ and $Q_{\Phi,t}$ give
\begin{equation}\label{eq:comparison-taylor}
  B(t)=I\otimes Q_{\Phi,t}^{\Trans}-tL_\Phi+R(t),
\end{equation}
where
\begin{equation}\label{eq:comparison-remainder}
  R(t):=\sum_{x \in \Sigma} p_x\left(e^{tH_x}-I-tH_x\right)\otimes\rho_x^{\Trans}.
\end{equation}
Choose $t_*>0$ and $m_0>0$ such that
\begin{equation}\label{eq:comparison-faithfulness}
  Q_{\Phi,t}\geq m_0I\qquad(0\leq t\leq t_*).
\end{equation}
Such a choice exists because $Q_{\Phi,t=0}=\rho_{\Phi}>0$ and $Q_{\Phi,t}$ is continuous.
Write $h_x:=\opnorm{H_x}$.
Taylor expansion gives
\begin{equation}
\opnorm{e^{tH_x}-I-tH_x}
  \le \frac{t^2}{2}h_x^2e^{t_*h_x}
  \qquad(0\le t\le t_*).
\end{equation}
Consequently,
\begin{equation}\label{eq:comparison-K}
  \opnorm{R(t)}\le Kt^2,\qquad
  K:=\frac12\sum_{x \in \Sigma}p_x\opnorm{\rho_x}h_x^2e^{t_*h_x}.
\end{equation}
Define
\begin{equation}\label{eq:comparison-tilde}
\begin{aligned}
  \widetilde B_{\Phi,t}
    &:=(I\otimes Q_{\Phi,t}^{-\Trans/2})B(t)(I\otimes Q_{\Phi,t}^{-\Trans/2}),\\
  \widetilde L_{\Phi,t}
    &:=(I\otimes Q_{\Phi,t}^{-\Trans/2})L_\Phi(I\otimes Q_{\Phi,t}^{-\Trans/2}),\\
  \widetilde R_{\Phi,t}
    &:=(I\otimes Q_{\Phi,t}^{-\Trans/2})R(t)(I\otimes Q_{\Phi,t}^{-\Trans/2}).
\end{aligned}
\end{equation}
Here $Q_{\Phi,t}^{\Trans/2}$ and $Q_{\Phi,t}^{-\Trans/2}$ denote $(Q_{\Phi,t}^{\Trans})^{1/2}$ and $(Q_{\Phi,t}^{\Trans})^{-1/2}$, respectively.
Then~\eqref{eq:comparison-taylor} becomes
\begin{equation}\label{eq:comparison-tilde-expansion}
  \widetilde B_{\Phi,t}=I-t\widetilde L_{\Phi,t}+\widetilde R_{\Phi,t}.
\end{equation}
By \eqref{eq:comparison-faithfulness} and \eqref{eq:comparison-K},
\begin{equation}
    \opnorm{\widetilde R_{\Phi,t}}
  \leq \opnorm{Q_{\Phi,t}^{-1}}\opnorm{R(t)}
  \leq \frac{K}{m_0}t^2.
\end{equation}
Choose a constant $C_1>0$ such that
\begin{equation}\label{eq:comparison-c}
    C_1\geq\frac{K}{m_0}.
\end{equation}
Note that this constant is independent of $t$, $n$, and $\cT_n$.
Since $\widetilde R_{\Phi,t}$ is Hermitian,
$\widetilde R_{\Phi,t}\leq C_1t^2I$.
Also since $B(t)\geq 0$, we also have $\widetilde B_{\Phi,t}\geq 0$.
It follows that
\begin{equation}\label{eq:comparison-pre-gap}
  0\leq \widetilde B_{\Phi,t}\leq (1+C_1t^2)I-t\widetilde L_{\Phi,t}.
\end{equation}
Now, we bound $\widetilde L_{\Phi,t}$. 
Recall that Lemma~\ref{lem:gap} gives
$\ker L_\Phi=\Span\{\vk{P_c}:c\in\Xi\}$.
Thus,
\begin{equation}\label{eq:comparison-kernel}
\begin{aligned}
\ker\widetilde L_{\Phi,t}
&=(I\otimes Q_{\Phi,t}^{\Trans/2})(\ker L_\Phi)\\
&=\Span\{\vk{P_c\sqrt{Q_{\Phi,t}}}:c\in\Xi\}
=\ran\Pi_{\Phi,t}.
\end{aligned}
\end{equation}
The last equality uses $Q_{\Phi,t}=q_{\Phi,t}\rho_{\Phi,t}$.
The operator $\widetilde L_{\Phi,t}$ is continuous on the compact interval $[0,t_*]$ and has constant rank. If this rank is positive, its smallest positive eigenvalue has a positive minimum on this interval; if the rank is zero, then $\Pi_{\Phi,t}=I$ and the following inequality holds for any $\mu>0$.
Thus, there exists a constant $\mu>0$, independent of $t$, such that
\begin{equation}\label{eq:comparison-tilted-gap}
\widetilde L_{\Phi,t}\geq\mu(I-\Pi_{\Phi,t})
\qquad (0\leq t\leq t_*).
\end{equation}
Write $\Pi^\perp_{\Phi,t}:=I-\Pi_{\Phi,t}$ and $\gamma_t:=1+C_1t^2$. Combining~\eqref{eq:comparison-pre-gap} and \eqref{eq:comparison-tilted-gap} gives
\begin{equation}
    \widetilde B_{\Phi,t}
  \le\gamma_tI-\mu t\Pi^\perp_{\Phi,t}
  =\gamma_t\Pi_{\Phi,t}+(\gamma_t-\mu t)\Pi^\perp_{\Phi,t}.
\end{equation}
Now choose and fix $t_0\in(0,t_*]$ sufficiently small that
\begin{equation}\label{eq:comparison-final-interval}
  C_1t_0^2\leq 1,\qquad \mu t_0\leq 1.
\end{equation}
For $0\leq t\leq t_0$, we have $1\leq \gamma_t\leq 2$ and
\begin{equation}
    \frac{\gamma_t-\mu t}{\gamma_t}
  =1-\frac{\mu t}{\gamma_t}
  \leq 1-\frac{\mu t}{2}
  \leq e^{-\mu t/2}.
\end{equation}
Thus, with
\begin{equation}\label{eq:comparison-b}
  b:=\frac{\mu}{2},
\end{equation}
we obtain
\begin{equation}\label{eq:comparison-operator}
  0\leq \widetilde B_{\Phi,t}
  \leq (1+C_1t^2)I-t\widetilde L_{\Phi,t}
  \leq (1+C_1t^2)\left(\Pi_{\Phi,t}+e^{-bt}(I-\Pi_{\Phi,t})\right).
\end{equation}
Define
\begin{equation}
    \Theta_t:=\Pi_{\Phi,t}+e^{-bt}(I-\Pi_{\Phi,t}).
\end{equation}
Then~\eqref{eq:comparison-operator} implies
\begin{equation}\label{eq:comparison-tensorized}
  \widetilde B_{\Phi, t}^{\otimes n}\leq (1+C_1t^2)^n\Theta_t^{\otimes n}.
\end{equation}
Observe that
\begin{equation}\label{eq:comparison-trace-congruence}
  \Tr\left[J_{\cT_n}B(t)^{\otimes n}\right]
  =\Tr\left[
    (I\otimes Q_{\Phi,t}^{\Trans/2})^{\otimes n}J_{\cT_n}(I\otimes Q_{\Phi,t}^{\Trans/2})^{\otimes n}
    \widetilde B_{\Phi,t}^{\otimes n}
  \right].
\end{equation}
To identify the transformed Choi operator, note that
\begin{equation}
(I\otimes Q_{\Phi,t}^{\Trans/2})\vk{I}
  =\vk{\sqrt{Q_{\Phi,t}}}
  =\sqrt{q_{\Phi,t}}\,\vk{\sqrt{\rho_{\Phi,t}}}.
\end{equation}
The operator $I\otimes Q_{\Phi,t}^{\Trans/2}$ acts only on the reference half, so it commutes with the action of the channel on the other half.
Hence,
\begin{equation}\label{eq:comparison-choi-state}
\begin{aligned}
  (I\otimes Q_{\Phi,t}^{\Trans/2})^{\otimes n}J_{\cT_n}(I\otimes Q_{\Phi,t}^{\Trans/2})^{\otimes n}
  =(\cT_n\otimes\id)\left[
    \left(\vkb{\sqrt{Q_{\Phi,t}}}{\sqrt{Q_{\Phi,t}}}\right)^{\otimes n}
  \right]
  =q_{\Phi,t}^n\Omega_{\Phi,t}(\cT_n).
\end{aligned}
\end{equation}
Substituting~\eqref{eq:comparison-tensorized} and
\eqref{eq:comparison-choi-state} into
~\eqref{eq:comparison-trace-congruence} yields
\begin{equation}\label{eq:comparison-before-scalar}
   \Tr\left[J_{\cT_n}B(t)^{\otimes n}\right]
   \leq q_{\Phi,t}^{\,n}(1+C_1t^2)^n \Tr\left[\Omega_{\Phi,t}(\cT_n)\Theta_t^{\otimes n}\right].
\end{equation}

Next, we bound $a(t)$.
The coefficient of $t$ in the Taylor expansion of $a(t)$ at zero is
\begin{equation}
    \frac{d}{dt}a(t)\Bigg\lvert_{t=0}=-\sum_{x \in \Sigma}p_x\Tr(\rho_xH_x)=-\chi_{\Phi}.
\end{equation}
Repeating the Taylor expansion argument used above gives
\begin{equation}\label{eq:comparison-scalar-remainder}
  a(t)=1-\chi_{\Phi} t+r_a(t),\qquad
  |r_a(t)|\le k_at^2,\qquad
  k_a:=\frac12\sum_{x \in \Sigma}p_xh_x^2e^{t_*h_x}.
\end{equation}
Indeed,
\begin{equation}
      \left|\Tr\left[\rho_x(e^{-tH_x}-I+tH_x)\right]\right|
  \le\opnorm{e^{-tH_x}-I+tH_x}
  \le\frac{t^2}{2}h_x^2e^{t_*h_x}.
\end{equation}
Since $q_{\Phi,t}=1+\chi_{\Phi} t>0$, we have
\begin{equation}
    \begin{aligned}
      a(t)q_{\Phi,t}
      &=(1-\chi_{\Phi} t+r_a(t))(1+\chi_{\Phi} t)\\
      &=1-\chi_{\Phi}^2t^2+(1+\chi_{\Phi} t)r_a(t)\\
      &\le1+C_0t^2,
    \end{aligned}
\end{equation}
where we may take
\begin{equation}\label{eq:comparison-C0}
  C_0:=k_a(1+|\chi_{\Phi}|t_*) \geq 0.
\end{equation}
Consequently,
\begin{equation}
    \label{eq:bound_at}
    \left[a(t)q_{\Phi,t}(1+C_1t^2)\right]^n
  \le\left[(1+C_0t^2)(1+C_1t^2)\right]^n
  \le e^{(C_0+C_1)t^2n}.
\end{equation}
Choose
\begin{equation}\label{eq:comparison-C}
  C:=C_0+C_1>0.
\end{equation}
Combining~\eqref{eq:Fn_bound}, \eqref{eq:bound_at}, and~\eqref{eq:comparison-before-scalar} proves~\eqref{eq:fidelity-comparison} for every $0<t\leq t_0$.
\end{proof}

\section{Proof of the entanglement-assisted strong converse}
\label{appendix:proof_assisted}
We prove Theorem~\ref{thm:assisted} by combining the fidelity bound (Lemma~\ref{lem:fidelity-comparison}) with the assisted KI-overlap bound (Lemma~\ref{lem:assisted-overlap}). 

\begin{proof}[Proof of Theorem~\ref{thm:assisted}]
Fix
\begin{equation}
0\leq r<R_{\mathrm{blind},\Phi}^{\mathrm{EA}},
\qquad
\log_2\dim M_n\leq nr,
\end{equation}
and let $\cT_n$ be the effective source channel of an assisted code with an arbitrary finite-dimensional source-independent resource.

Let $t_0,C,b>0$ be the constants of Lemma~\ref{lem:fidelity-comparison}. 
For parameters $t>0$ and $\eta\in(0,1/2)$ to be fixed below, set
\begin{equation}
\Pi\coloneqq\Pi_{\Phi,t},\qquad
\Pi^\perp\coloneqq I-\Pi,\qquad
\Theta_t\coloneqq\Pi+e^{-bt}\Pi^\perp,\qquad
k\coloneqq\lfloor\eta n\rfloor.
\end{equation}
For each subset $S\subseteq[n]$, define
\begin{equation}
P_S\coloneqq\bigotimes_{j=1}^n
\begin{cases}
\Pi^\perp,&j\in S,\\
\Pi,&j\notin S,
\end{cases}
\qquad
P_{\leq k}\coloneqq\sum_{\substack{S\subseteq[n]\\|S|\leq k}}P_S.
\end{equation}
The projectors $P_S$ are mutually orthogonal, so
\begin{equation}
\begin{aligned}
\Theta_t^{\otimes n}
&=\sum_{S\subseteq[n]}e^{-bt|S|}P_S\\
&\leq P_{\leq k}+e^{-bt(k+1)}I\\
&\leq P_{\leq k}+e^{-bt\eta n}I.
\end{aligned}
\end{equation}
Moreover,
\begin{equation}
P_{\leq k}
\leq\sum_{\substack{S\subseteq[n]\\|S|\leq k}}
I_{A_SR_S}\otimes\Pi^{\otimes S^c}.
\end{equation}
Here tensor factors are reordered and placed in the indicated
systems. 
Define
\begin{equation}
g_S\coloneqq\Tr\!\left[
\Omega_{\Phi,t}(\cT_n)
\bigl(I_{A_SR_S}\otimes\Pi^{\otimes S^c}\bigr)\right].
\end{equation}
For $0<t\leq t_0$, Lemma~\ref{lem:fidelity-comparison} then gives
\begin{equation}
F_n(\Phi,\cT_n)
\leq e^{Ct^2n}\left(
\sum_{\substack{S\subseteq[n]\\|S|\leq k}}g_S
+e^{-bt\eta n}\right).
\end{equation}

Fix a subset $S\subseteq[n]$ with $|S|\leq k$ and put $\ell=n-|S|$.
Then $\ell\geq n(1-\eta)>0$, so the assisted overlap lemma
applies at block length $\ell$. Define the reduced source channel
\begin{equation}
\cT_{n,S}(X_{A_{S^c}})
\coloneqq\Tr_{A_S}\cT_n\!\left(
X_{A_{S^c}}\otimes\rho_{\Phi,t}^{\otimes S}\right).
\end{equation}
This is again an assisted protocol with the same message and shared state: Alice prepares the omitted inputs locally, and Bob traces out their outputs. 
Hence, 
\begin{equation}
\Tr_{A_SR_S}\Omega_{\Phi,t}(\cT_n)
=(\cT_{n,S}\otimes\id_{R_{S^c}})
(\psi_{\Phi,t}^{\otimes\ell}), 
\end{equation}
and therefore
\begin{equation}
    g_S = g_\ell(\Phi,t,\cT_{n,S}).
\end{equation}
Let $\widetilde t_1>0$ be a constant supplied by Lemma~\ref{lem:assisted-overlap}.
Then, for every $\delta > 0$, there exists a constant $\widetilde v > 0$ such that 
\begin{equation}
g_S\leq2\,2^{2nr-\ell(K_{\Phi,t}-\delta)}
+2e^{-\widetilde v\ell}
\end{equation}
for every $t\in(0,\widetilde t_1]$.
If $K_{\Phi,t}-\delta>0$, this is at most
\begin{equation}
2\,2^{2nr-n(1-\eta)(K_{\Phi,t}-\delta)}
+2e^{-\widetilde v n(1-\eta)}.
\end{equation}
Using
\begin{equation}
\sum_{s=0}^{\lfloor\eta n\rfloor}\binom ns
\leq2^{nh_2(\eta)},
\end{equation}
we obtain 
\begin{equation}
\label{eq:assisted-central}
\begin{aligned}
F_n(\Phi,\cT_n)
\leq e^{Ct^2n}\Bigl\{
2^{nh_2(\eta)}\bigl[
2\,2^{2nr-n(1-\eta)(K_{\Phi,t}-\delta)}
+2e^{-\widetilde v n(1-\eta)}\bigr]
+e^{-bt\eta n}\Bigr\}
\end{aligned}
\end{equation}
when $K_{\Phi,t}-\delta>0$. 

We choose the constants similarly to the unassisted proof. 
First set
\begin{equation}
\Delta\coloneqq K_{\Phi,0}-2r>0,
\qquad
\delta\coloneqq\Delta/8.
\end{equation}
Fix the constant $\widetilde v$ for this choice of $\delta$ from Lemma~\ref{lem:assisted-overlap}, and pick $\widetilde t_2 > 0$ so that 
\begin{equation}
K_{\Phi,t}\geq K_{\Phi,0}-\delta
\end{equation}
for every $0 < t \leq \widetilde t_2$. 
In particular, this choice satisfies $K_{\Phi,t}-\delta\geq K_{\Phi,0}-2\delta>0$.
Next choose $\eta\in(0,1/2)$ sufficiently small that
\begin{equation}
\begin{aligned}
A_1&\coloneqq(\ln2)\bigl[
(1-\eta)(K_{\Phi,0}-2\delta)-2r-h_2(\eta)\bigr]>0,\\
A_2&\coloneqq\widetilde v(1-\eta)-(\ln2)h_2(\eta)>0.
\end{aligned}
\end{equation}
Such a choice is possible because, as $\eta\to0^+$, the two expressions tend to $3(\ln2)\Delta/4 > 0$ and $\widetilde v > 0$, respectively. 
Finally choose 
\begin{equation}
0<t\leq\min\{t_0,\widetilde t_1,\widetilde t_2\}
\quad\text{such that}\quad
Ct^2<\frac12\min\{A_1,A_2,bt\eta\}.
\end{equation}
This is possible because $A_1,A_2$ are fixed positive constants,
and $Ct^2/(bt\eta)\to0$ as $t\to 0^+$.

Define
\begin{equation}
\widetilde\alpha\coloneqq
\min\{A_1-Ct^2,A_2-Ct^2,bt\eta-Ct^2\}>0.
\end{equation}
Equation~\eqref{eq:assisted-central} implies
\begin{equation}
\begin{aligned}
F_n(\Phi,\cT_n)
\leq2e^{-n(A_1-Ct^2)}
+2e^{-n(A_2-Ct^2)}
+e^{-n(bt\eta-Ct^2)}
\leq5e^{-\widetilde\alpha n}.
\end{aligned}
\end{equation}
For example, take
\begin{equation}
\alpha\coloneqq\frac{\widetilde\alpha}{2\ln2},
\qquad
n_0\coloneqq\max\left\{1,
\left\lceil\frac{2\ln5}{\widetilde\alpha}\right\rceil\right\}.
\end{equation}
Then every $n\geq n_0$ satisfies
\begin{equation}
F_n(\Phi,\cT_n)
\leq5e^{-\widetilde\alpha n}
\leq e^{-\widetilde\alpha n/2}
=2^{-\alpha n}.
\end{equation}
\end{proof}

\section{Strong concavity and optimality condition}
\label{appendix:quadratic-variational-condition} 
In this appendix, we discuss the properties of the function $f_{\tau,u}$ in~\eqref{eq:variation}. 

Let $\cH$ be a finite-dimensional complex Hilbert space, let $\tau$ be a positive-definite operator on $\cH$, and let $u > 0$ be a positive real number.
For a linear operator $V \in \Lin(\cH)$, consider the
real-valued function
\begin{equation}
    f_{\tau,u}(Z)
    =2\operatorname{Re}\operatorname{Tr}(Z^\dagger V\tau)-\operatorname{Tr}(Z^\dagger\tau Z)-u\operatorname{Tr}(Z^\dagger Z\tau),
    \qquad Z\in \Lin(\cH).
\end{equation}
Here, we show that this function is strongly concave and attains a unique maximum, and that its maximizer satisfies~\eqref{eq:stationarity}. 

We regard $\Lin(\cH)$ as a real vector space equipped with the real Hilbert--Schmidt inner product and its induced norm,
\begin{equation}
    \langle X,Y\rangle_{\mathbb R}=\operatorname{Re}\operatorname{Tr}(X^\dagger Y),
    \qquad
    \|X\|_2^2=\operatorname{Tr}(X^\dagger X).
\end{equation}
For every matrix direction $H$, differentiation along a real scalar parameter gives
\begin{equation}
    \begin{aligned}
    \left.\frac{d^2}{ds^2}f_{\tau,u}(Z+sH)\right|_{s=0}
    &=
    -2\operatorname{Tr}(H^\dagger\tau H)
    -2u\operatorname{Tr}(H^\dagger H\tau)\\
    &\leq -2(1+u)\lambda_{\min}(\tau)\|H\|_2^2, 
\end{aligned}
\end{equation}
where $\lambda_{\min}(\tau) > 0$ denotes the minimum eigenvalue of $\tau$. 
Hence, $f_{\tau,u}$ is strongly concave~\cite[Sec.~9.1.2, pp.~459--460]{BoydVandenberghe2004}. 
Additionally, since this function is continuous and strongly concave on the entire finite-dimensional real vector space of complex matrices, it attains a unique global maximum.

Moreover, the first directional derivative is
\begin{equation}
    \left.\frac{d}{ds}f_{\tau,u}(Z+sH)\right|_{s=0}=2\operatorname{Re}\operatorname{Tr}\bigl[H^\dagger(V\tau-\tau Z-uZ\tau)\bigr].
\end{equation}
The first-order optimality condition requires this derivative to vanish in every matrix direction at the maximizer; it is also sufficient by concavity~\cite[Sec.~9.1, Eq.~(9.2)]{BoydVandenberghe2004}.
Thus, we have 
\begin{equation}
    \tau Z_\tau(u)+uZ_\tau(u)\tau=V\tau, 
\end{equation}
which is~\eqref{eq:stationarity}. 

\section{Miscellaneous lemmas}
This appendix collects auxiliary lemmas used in the proofs.

\begin{lemma}[Uniform positive definiteness]
\label{lem:uniform-positive}
Let $\tau > 0$, and fix $t_{\max} > 0$.
Let $\{\tau_t\}_{0\leq t\leq t_{\max}}$ be a family of states such that $\tau_0 = \tau$ and
\begin{equation}
\lim_{t\to 0^+}\|\tau_t-\tau\|_\infty=0.
\end{equation}
Then there exist $\widetilde t\in(0,t_{\max}]$ and
$\lambda>0$ such that
\begin{equation}
\tau_t\geq\lambda I
\quad\text{for every }t\in[0,\widetilde t].
\end{equation}
In particular, all eigenvalues of these states are uniformly bounded below by $\lambda$.
\end{lemma}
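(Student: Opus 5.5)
The plan is to reduce the claim to continuity of the smallest eigenvalue, using Weyl's perturbation inequality. Set $\lambda_0 \coloneqq \lambda_{\min}(\tau)$. This is strictly positive because $\tau>0$ and the space is finite-dimensional, so $\tau$ has finitely many eigenvalues, all positive. The aim is a single constant, namely $\lambda=\lambda_0/2$, that works uniformly on a small interval.

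First, for Hermitian operators $X,Y$ we have the operator inequality $Y \geq X - \|X-Y\|_\infty I$. This follows because $X-Y \leq \|X-Y\|_\infty I$ for Hermitian $X-Y$. Applying it with $X=\tau$ and $Y=\tau_t$ gives
\begin{equation}
\tau_t \geq \tau - \|\tau_t-\tau\|_\infty I \geq \bigl(\lambda_0 - \|\tau_t-\tau\|_\infty\bigr) I.
\end{equation}
The states $\tau_t$ are Hermitian, so this is legitimate.

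Next, the hypothesis $\lim_{t\to0^+}\|\tau_t-\tau\|_\infty=0$ supplies $\widetilde t\in(0,t_{\max}]$ with $\|\tau_t-\tau\|_\infty\leq\lambda_0/2$ for all $t\in(0,\widetilde t]$. At $t=0$ the bound is trivial, since $\tau_0=\tau$. Hence $\tau_t\geq(\lambda_0/2)I$ on $[0,\widetilde t]$, and we take $\lambda=\lambda_0/2$. The final sentence of the statement follows because every eigenvalue of $\tau_t$ is at least $\lambda_{\min}(\tau_t)\geq\lambda$.

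There is no real obstacle here; the only point needing care is that $\widetilde t$ and $\lambda$ are uniform in $t$, which the argument above gives directly. This is exactly what the applications in Lemmas~\ref{lem:memory-bound} and~\ref{lem:assisted-overlap} need. Those applications also need $\widetilde\kappa\leq 1/2$, and capping $\lambda$ by $\min\{\lambda_0/2,1/2\}$ makes that harmless.

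For those applications, take $\tau_t=\rho_{\Phi,t}$, whose continuity at $t=0$ follows from $\rho_{\Phi,t}=(\rho_\Phi+tS_\Phi)/(1+t\chi_\Phi)$ with $\rho_\Phi>0$. The eigenvalues $\kappa_{c,i}=s_c\lambda_{i|c}$ of $\rho_{\Phi,t}$ are then all at least $\lambda$.
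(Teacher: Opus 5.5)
Your proof is correct and is essentially identical to the paper's. Both set $\lambda=\lambda_{\min}(\tau)/2$, choose $\widetilde t$ so that $\|\tau_t-\tau\|_\infty\leq\lambda_{\min}(\tau)/2$ on $[0,\widetilde t]$, and conclude from $\tau_t\geq\tau-\|\tau_t-\tau\|_\infty I$. (Capping $\lambda$ at $1/2$ is not needed: $\tau$ has unit trace, so $\lambda_{\min}(\tau)\leq 1$ and hence $\lambda_{\min}(\tau)/2\leq 1/2$ automatically.)
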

\begin{proof}
Let $\lambda_{\min}(\tau)>0$ be the minimum eigenvalue of $\tau$.
By convergence at zero, choose
$\widetilde t>0$ small enough that
\begin{equation}
\|\tau_t-\tau\|_\infty\leq \lambda_{\min}(\tau)/2
\quad\text{for every }t\in[0,\widetilde t].
\end{equation}
For every such $t$,
\begin{equation}
\tau_t\geq\tau-\|\tau_t-\tau\|_\infty I
\geq(\lambda_{\min}(\tau)/2)I.
\end{equation}
The claim follows with $\lambda=\lambda_{\min}(\tau)/2$.
\end{proof}

\begin{lemma}
    \label{lem:pinching}
    Let $A$ and $B$ be quantum systems.
    Then any positive semidefinite operator $P_{AB}$ on the joint system $AB$ satisfies
    \begin{equation}
        P_{AB}\leq (\dim A) I_A\otimes P_B, 
    \end{equation}
    where $P_B\coloneqq\Tr_A (P_{AB})$.
\end{lemma}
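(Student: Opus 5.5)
The plan is the standard argument that a positive semidefinite bipartite operator is dominated by the dimension of the traced-out factor times the identity on that factor, tensored with its marginal. Put $d\coloneqq\dim A$ and fix an orthonormal basis $\{\ket{j}\}_{j=1}^{d}$ of $\cH_A$. Let
\begin{equation}
U_k\coloneqq\sum_{j=1}^{d}\omega^{jk}\ketbra{j}{j},\qquad
V_m\coloneqq\sum_{j=1}^{d}\ketbra{j+m}{j},\qquad k,m\in\{0,\ldots,d-1\},
\end{equation}
where $\omega=e^{2\pi i/d}$ and addition is mod $d$. These are the $d^2$ Heisenberg--Weyl (discrete Weyl) unitaries $W_{k,m}\coloneqq V_mU_k$ on $A$.

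\textbf{Step 1 (twirl identity).} We use the completely depolarizing identity
\begin{equation}
\frac{1}{d^2}\sum_{k,m}(W_{k,m}\otimes I_B)P_{AB}(W_{k,m}\otimes I_B)^\dagger=\frac{I_A}{d}\otimes P_B .
\end{equation}
To verify it, note that averaging over the $U_k$ kills the off-diagonal $\ketbra{j}{j'}$ components (pinching in the basis). Averaging over the cyclic shifts $V_m$ then uniformizes the diagonal blocks $\bra{j}P_{AB}\ket{j}$, which sum to $P_B$.

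\textbf{Step 2 (drop terms).} Every term of the sum is positive semidefinite, and the $(k,m)=(0,0)$ term equals $P_{AB}$. Hence
\begin{equation}
P_{AB}\leq\sum_{k,m}(W_{k,m}\otimes I)P_{AB}(W_{k,m}\otimes I)^\dagger=d^2\cdot\frac{I_A}{d}\otimes P_B=d\,I_A\otimes P_B .
\end{equation}
This is exactly the claim.

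The only point needing care is the twirl identity in Step 1, and it is a routine computation on matrix units. An alternative route is pinching alone: $P\leq d\sum_j(\ketbra{j}{j}\otimes I)P(\ketbra{j}{j}\otimes I)$ by the pinching inequality. Each block satisfies $\bra{j}P\ket{j}\leq P_B$, since the blocks are positive and sum to $P_B$. This gives $P\leq d\sum_j\ketbra{j}{j}\otimes P_B=d\,I_A\otimes P_B$ directly, and either route suffices.
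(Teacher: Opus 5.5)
Your proof is correct, but your main route differs from the paper's. The paper's proof is essentially your ``alternative route''. It proves the pinching inequality $P_{AB}\leq(\dim A)\sum_j Q_jP_{AB}Q_j$, with $Q_j=\ketbra{j}{j}_A\otimes I_B$, from scratch rather than citing it: it writes $\bra{\psi}P_{AB}\ket{\psi}=\|\sum_j P_{AB}^{1/2}Q_j\ket{\psi}\|^2$ and applies Cauchy--Schwarz. It then bounds each diagonal block by $P_B$, since the blocks are positive and sum to $P_B$.

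Your Weyl-twirl route gets the whole inequality from a single mechanism. A sum of positive semidefinite conjugates dominates each of its terms, and the $d^2$ Weyl unitaries, which include $I_A$, average to the completely depolarizing channel on $A$.

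The two routes are closer than they look. The pinching inequality is the same ``drop terms'' trick applied to the $d$ phase unitaries alone, since $\frac1d\sum_k (U_k\otimes I)P(U_k\otimes I)^\dagger=\sum_jQ_jPQ_j$. Your extra average over the shifts $V_m$ then replaces the block-domination step $P_{jj}\leq P_B$.

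What the twirl buys is a Cauchy--Schwarz-free argument that generalizes immediately: any $N$-element unitary 1-design containing the identity yields the factor $N/d$. What the paper's route buys is full self-containment, using only a basis and Cauchy--Schwarz, with no appeal to the 1-design identity.
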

\begin{proof}
    Choose an orthonormal basis $\{\ket{j}_A\}_j$ of $A$ and set
    \begin{equation}
    Q_j\coloneqq|j\rangle\langle j|_A\otimes I_B,
    \qquad 1\leq j\leq \dim A.
    \end{equation}
    For every vector $|\psi\rangle \in \cH_A\otimes \cH_B$, the Cauchy--Schwarz inequality gives
    \begin{equation}
    \begin{aligned}
    \langle \psi|P_{AB}|\psi\rangle
    &=\left\|\sum_{j=1}^{\dim A}P_{AB}^{1/2}Q_j|\psi\rangle\right\|^2\\
    &\leq (\dim A)\sum_{j=1}^{\dim A}\|P_{AB}^{1/2}Q_j|\psi\rangle\|^2\\
    &=(\dim A)\left\langle \psi\middle|
    \sum_{j=1}^{\dim A}Q_jP_{AB}Q_j\middle|\psi\right\rangle.
    \end{aligned}
    \end{equation}
    Thus $P_{AB}\leq (\dim A)\sum_jQ_jP_{AB}Q_j$.
    The diagonal blocks
    \begin{equation}
    P_{jj}\coloneqq(\langle j|_A\otimes I_B)P_{AB}
    (|j\rangle_A\otimes I_B)
    \end{equation}
    are positive and sum to $P_B$, so $P_{jj}\leq P_B$. Consequently,
    \begin{equation}
    \sum_{j=1}^{\dim A}Q_jP_{AB}Q_j
    =\sum_{j=1}^{\dim A}|j\rangle\langle j|_A\otimes P_{jj}
    \leq I_A\otimes P_B,
    \end{equation}
    which proves the claim.
\end{proof}

\end{document}